\newcommand{\Expt}{\mathop{\mbox{\sf E}}}
\newtheorem{theorem}{Theorem} 
\newtheorem{lemma}{Lemma}
\newtheorem{remark}{Remark}
\newtheorem*{assumption*}{\assumptionnumber}
\providecommand{\assumptionnumber}{}
\def\widebar{\accentset{{\cc@style\underline{\mskip12mu}}}}
\def\@pdfborder{0 0 0}
\newenvironment{assumption}[1]
{
	\renewcommand{\assumptionnumber}{Assumption #1}%
	\begin{assumption*}
		\protected@edef\@currentlabel{#1}%
	}
	{
	\end{assumption*}
}
\newcommand{\blind}{0}
\begin{document}

\def\spacingset#1{\renewcommand{\baselinestretch}%
{#1}\small\normalsize} \spacingset{1}


\if0\blind
{
  \title{\bf A Time-Varying Network for Cryptocurrencies}
  \author{Li Guo\thanks{600 Guoquan Rd, Shanghai 200433. Email: guo\_li@fudan.edu.cn.}\\
    {\small Fudan University} \\
    {\small Shanghai Institute of International Finance and Economics} \\
    \\
	Wolfgang Karl H{\"a}rdle\thanks{Unter den Linden 6 10099 Berlin, Germany. Email: haerdle@hu-berlin.de.} \\
	{\small Humboldt-Universit{\"a}t zu Berlin, Singapore Management University} \\
	{\small Xiamen University, Charles University} \\
    \\
    Yubo Tao\thanks{Corresponding author: 90 Stamford Rd, Singapore 178903. Email: ybtao@smu.edu.sg.} \\
    {\small Singapore Management University}}
  \maketitle
}\fi

\if1\blind
{
  \vspace*{2cm}
  \begin{center}
    {\LARGE\bf A Time-varying Network for Cryptocurrencies}
  \end{center}
  \medskip
} \fi

\begin{abstract}
	Cryptocurrencies return cross-predictability and technological similarity yield information on risk propagation and market segmentation. To investigate these effects, we build a time-varying network for cryptocurrencies, based on the evolution of return cross-predictability and technological similarities. We develop a dynamic covariate-assisted spectral clustering method to consistently estimate the latent community structure of cryptocurrencies network that accounts for both sets of information. We demonstrate that investors can achieve better risk diversification by investing in cryptocurrencies from different communities. A cross-sectional portfolio that implements an inter-crypto momentum trading strategy earns a 1.08\% daily return. By dissecting the portfolio returns on behavioral factors, we confirm that our results are not driven by behavioral mechanisms.
\end{abstract}

\noindent%
{\it Keywords:} Community detection, Dynamic stochastic blockmodel, Covariates, Co-clustering, Network risk, Momentum.
\vfill

\newpage
\spacingset{1.4} 
\section{Introduction}
\label{sec:intro}

The invention of Bitcoin has spurred the creation of many cryptocurrencies (cryptos hereafter), commonly known as \textit{Altcoins}. According to the statistics available at \textit{coinlore.com}, as of July 2021, a total of 5,982 cryptos (coins and tokens) are actively traded worldwide, with a total market capitalization of over 1.2 trillion dollars. The growing number of cryptos provides investors with more alternatives to form crypto portfolios that deliver potentially higher Sharpe ratios \citep{Markowitz1952, Roy1952, Sharpe1966}. Therefore, it is essential to understand the inter-relationships between the cryptos for achieving optimal investment decisions.

For traditional assets, a natural starting point is to compare their fundamental similarities. For example, we can rely on industrial classifications based on business operations, e.g., Standard Industrial Classification (SIC) and the Global Industry Classification Standard (GICS), or product differentiation \citep{Hoberg2016} to infer the similarity between stocks. Unlike stocks and many other traditional assets, the fundamentals of cryptos are difficult to determine as they neither deliver dividends nor release accounting information. In addition, many investors invest in cryptocurrencies because they believe in the blockchain technology embodied in these digital coins \citep{LiuShengWang2021}. However, because of the open-source nature of blockchain technology, technological similarities could be uninformative since many \textit{Clonecoins} are using the same technology but have utterly different price trajectories. 

To resolve the dilemma, we introduce a novel method to classify cryptos by adding the return cross-predictability to technological similarities. The benefits are twofold. First, return cross-predictability provides timely information to understand the dynamics of market structure and the information propagation across financial assets \citep{Menzly2010}. Second, the returns embody investors' beliefs, which is crucial in classifying cryptos or other hard-to-value financial assets \citep{Detzel2021}. We naturally employ a directed time-varying network to model the evolution of return cross-predictability and contextualize the cryptos with the technological characteristics. Using network analysis, we develop a dynamic covariate-assisted spectral clustering (CASC) method, which accommodates important network features, such as degree heterogeneity, directionality, and time-varying membership, to systematically study the inter-relationships between cryptos. We also provide a theory and conduct extensive simulations to illustrate the classification consistency of our method.

We carry out three empirical tests to illustrate the economic contribution of the crypto communities. First, we demonstrate that the investors can achieve better risk diversification by investing in cross-community cryptos. Classical portfolio theory \citep{Markowitz1952, Roy1952} suggests that investors can achieve the expected portfolio return with a lower risk (variance) by investing in less positively correlated assets. Therefore, we calculate and compare the within- and cross-community average pairwise return correlations using 7-day and 30-day future crypto returns. It shows that the return correlations between cryptos within the same community are, on average, significantly higher than those across communities. The pattern is robust to different evaluation time horizons and only valid for dynamic CASC-estimated communities.   

Second, to formally test the information propagation effects within crypto communities, we form a cross-sectional portfolio that implements an inter-crypto momentum trading strategy. Prior literature has documented that, due to information propagation, assets that have fundamental similarities will have momentum spillovers, wherein past return of one asset predicts the returns of assets linked to it \citep[see, e.g.,][]{Moskowitz1999, Cohen2008, Menzly2010, Lee2019, Ali2020, Parsons2020}. Based on this argument, we construct a trading signal for each crypto using the average returns of the rest of cryptos in the same community. We sort the cryptos into quartiles according to the trading signals and implement a momentum trading strategy through simultaneously buying the cryptos in the top quartile and selling the cryptos in the bottom quartile. It shows that the trading strategy generates a one-day-ahead average daily return of 1.08\%, and the return does not revert in a week. 

Finally, we explore the behavioral mechanisms, such as limit-to-arbitrage, investor attention, and macro uncertainty, as alternative explanations to momentum spillover effects. Specifically, we split the whole sample into high and low episodes of the behavioral factors by cutting off at their medians. It shows that the momentum portfolio returns exist regardless of these factors, indicating that behavioral mechanisms could not be the drivers of momentum spillover effects. 

This paper makes several fundamental contributions to the empirical understanding of the cryptos market. To the best of our knowledge, our paper is the first to establish a crypto network and study the cryptos market segmentation problem using community detection. In addition, we are among the first to empirically justify the importance of crypto technologies in determining crypto prices. By creatively combining the return cross-predictability and technological similarities, we identify the community pattern between cryptos and document an inter-crypto momentum spillover effect, which is direct evidence for network effects studied in \cite{SockinXiong2020} and \cite{Cong2021b}. Our paper also contributes to community detection methods in network analysis by extending the spectral clustering method to identify communities in a time-varying network in the presence of both time-varying memberships and node covariates.

This paper is closely related to a rapidly growing literature on the empirical asset pricing of cryptocurrencies. \cite{Yermack2017} is the first paper to study the economic implications of blockchain in the context of corporate governance. \cite{Makarov2020} investigated cross-exchange crypto arbitrage and document a sizable price difference between exchanges. \cite{Griffin2020} analyzed the wallet-level blockchain data and examine the price manipulation in Bitcoin. \cite{LiuTsyvinski2021} provided a comprehensive analysis of the risk-return tradeoff of cryptos. \cite{LiuTsyvinskiWu2021} examined the cross-section of crypto and construct a three-factor model. \cite{LiuMarsh2021} explored the factor structure in both crypto returns and volatility and establish nine stylized facts in the cryptos market. Our paper provides new insights into the fundamentals of the cryptos market structure by dividing cryptos into different communities and document an inter-crypto momentum spillover effect. 

Our paper contributes to the literature on community detection. Namely, community detection is one of the central problems in network analysis, and various algorithms have been proposed to solve this problem. For global approaches, there are spectral clustering \citep{Rohe2011, Lei2015, Jin2015, JosephYu2016, Rohe2016, GaoMaZhangZhou2018, Pensky2019, ZhouAmini2019} and convex relaxations via semidefinite programs (SDPs) \citep{CaiLi2015, ChenXu2016, Hajek2016a, Hajek2016b, AminiLevina2018, Li2021}. For local approaches, there are acyclic belief propagation (ABP) \citep{Decelle2011}, Bayesian MCMC and variational Bayes \citep{Nowicki2001, Celisse2012, Bickel2013, Matias2017, WangBickel2017}, profile likelihood \citep{BickelChen2009, ZhaoLevinaZhu2012}, and pseudo-likelihood maximization \citep{Amini2013, ZhouAmini2020}. Our method modifies the spectral clustering algorithm to incorporate historical network structures, time-varying membership, and node covariates. The simulation confirms its superior classification accuracy over state-of-the-art methods.

In addition, our paper echos the literature that utilizes auxiliary information for network analysis (contextual network analysis). One strand of literature has tried to understand the network dependencies with the assistance of covariates \citep{Zou2017, Lan2018, Zou2021, Zhao2021, ZhuCaiMa2021}. Another strand of literature has studied the community detection under the contextual setting \citep{Zhang2016, WengFeng2016, Binkiewicz2017, MaMa2017, ZhangRohe2018, Deshpande2018, Mu2020, Abbe2020, YanSarkar2020, Esmaeili2021}. Our paper is closely related to \cite{Binkiewicz2017} and \cite{ZhangRohe2018}, where we extend the covariate-assisted spectral clustering method to estimate a time-varying community structure. 

The remainder of the paper is organized as follows. In Section \ref{SecConNet}, we construct a time-varying network contextualized by crypto fundamentals to model the information propagation in the cryptos market. In Section \ref{SecMethod}, we develop the dynamic covariate-assisted spectral clustering algorithm to estimate the crypto communities and demonstrate its effectiveness via both theory and simulations. In Section \ref{SecEcoMean}, we explore the economic implications of the crypto communities. We present our conclusions in Section \ref{SecConcl}. All proofs and technical details are provided in the supplementary appendix. The R codes to implement the algorithms are available at QuantNet (search keywords ``CASC").

\section{Construction of the Cryptocurrencies Network} \label{SecConNet}

\subsection{Network of information propagation}

Traditional asset pricing theory states that the cross-predictability of asset returns reflects the correlations in the fundamental determinants caused by common information of the future cash flows or discount rates \citep{Menzly2010}. Extensive literature has applied network analysis to the equity market to gain insights about information propagation and spillover of behavioral bias across the firms \citep[see, e.g.,][]{Cohen2008, Aobdia2014, Acemoglu2015, Herskovic2018, ChenHardleOkhrin2019}. Most recently, \cite{SockinXiong2020}, \cite{Cong2021a}, and \cite{Cong2021b} develop models for platform tokens and study their network effects caused by decentralization. Based on the above literature, we find it essential to adopt a network view on the crypto market to study how price information or shock transfers from one crypto to another. 

Following \cite{Menzly2010}, we model the information propagation in the cryptos network using the return cross-predictability. We collect the historical daily prices, trading volumes, and contract attributes (e.g., algorithm, proof type, age, total coins, etc.) of the top 400 cryptos in market capitalization as of December 31, 2018 from an interactive platform (\textit{Cryptocompare.com}) with free API access. The sampling period is from January 1, 2016 to December 31, 2018. After excluding cryptos under 1 year old or with incomplete price and contract information, we obtain a sample of 182 cryptos. 

To form the linkages between cryptos, we regress a crypto's current return on the other cryptos' lagged returns in a 360-day (1-year) rolling window. We employ the adaptive Lasso \citep{Zou2006} to select the significant regressors and to estimate their coefficients simultaneously:
\begin{equation} \label{eqAL}
	\hat b^{*}_i = \arg\min \left\lbrace \left\lVert r^{s}_{i,t+1} - \alpha_i - \sum_{j \neq i} b_{i,j} {r_{j,t}^{s}} \right\rVert^2 + \lambda_i \sum_{j \neq i} \hat {w}_{i,j} |{b_{i,j}}| \right\rbrace,
\end{equation}
where $r^{s}_{j,t}$ is the standardized return for crypto $j$, $\hat{b}^*_i = (\hat{b}^*_{i,1}, \cdots, \hat{b}^*_{i, N})^{\top}$ is the adaptive Lasso estimate, $\lambda_i$ are non-negative regularization parameters, and $\hat{w}_{i,j}$ are the weights corresponding to $|b_{i,j}|$ for $j = 1, \cdots, N$ in the penalty term. Conventionally, one defines $\hat{w}_{i,j} = 1/|\hat{b}^{\textit{LS}}_{i,j}|^{\gamma}$ with some $\gamma > 0$, where $\hat{b}^{\textit{LS}}_{i,j}$ denotes the coefficient estimated by ordinary least squares. The Lasso technique yields an active set that has ``parental'' influence on the focal crypto. Thus, we obtain an adjacency matrix for each period, $A_t$, $t = 1, \cdots, T$. Figure \ref{Fig-RetAdj} presents a sub-network of 20 cryptos with linkages estimated by regression (\ref{eqAL}) on selected dates in January 2018. 

\begin{figure}[htp!]
	\centering
	\begin{subfigure}[tph!]{0.48\textwidth}
		\includegraphics[width = \linewidth]{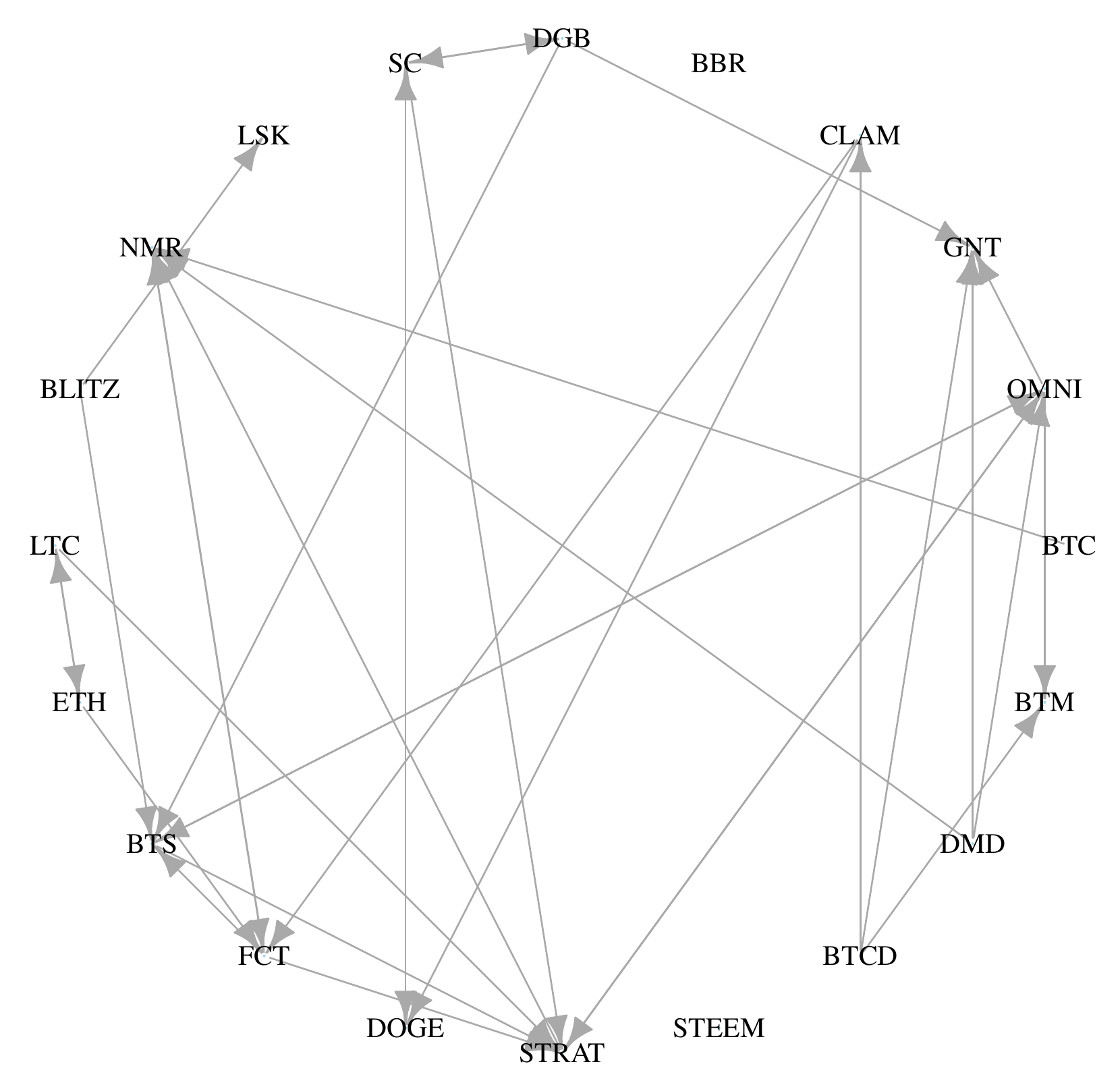}
		\caption{2018-01-01}
	\end{subfigure}
	\begin{subfigure}[tph!]{0.48\textwidth}
		\includegraphics[width = \linewidth]{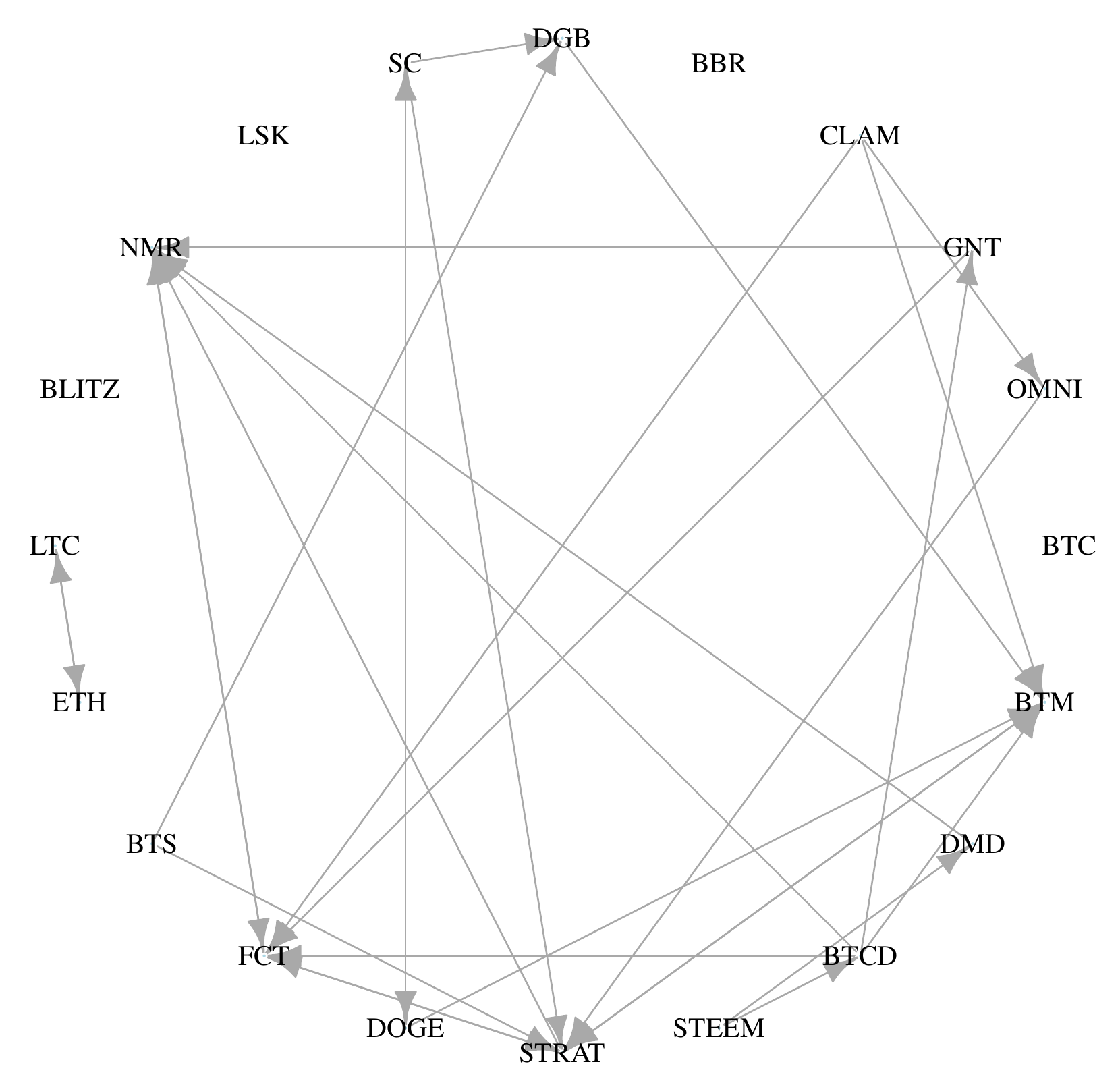}
		\caption{2018-01-05}
	\end{subfigure}\\
	
	\begin{subfigure}[tph!]{0.48\textwidth}
		\includegraphics[width = \linewidth]{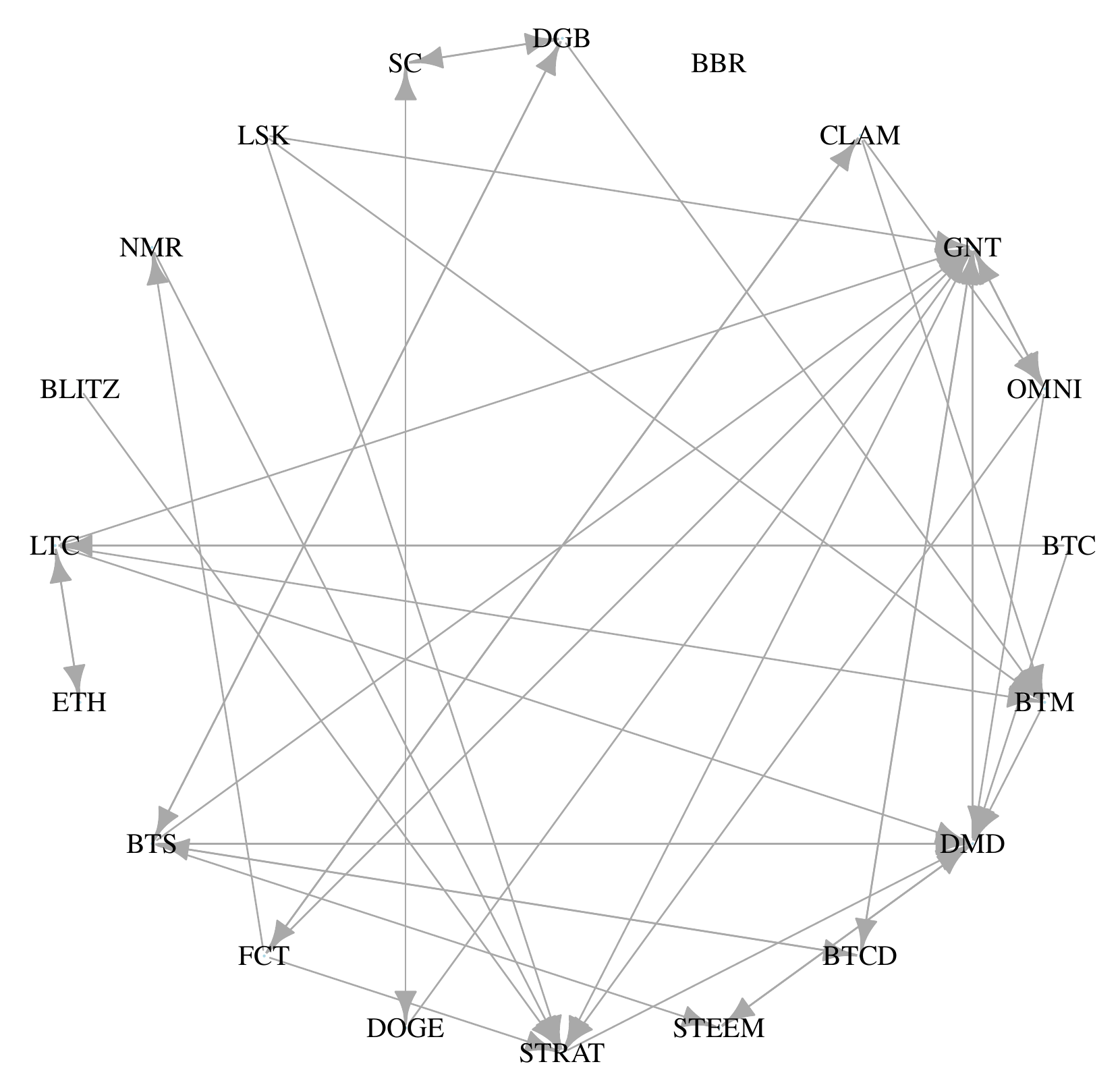}
		\caption{2018-01-15}
	\end{subfigure}
	\begin{subfigure}[tph!]{0.48\textwidth}
		\includegraphics[width = \linewidth]{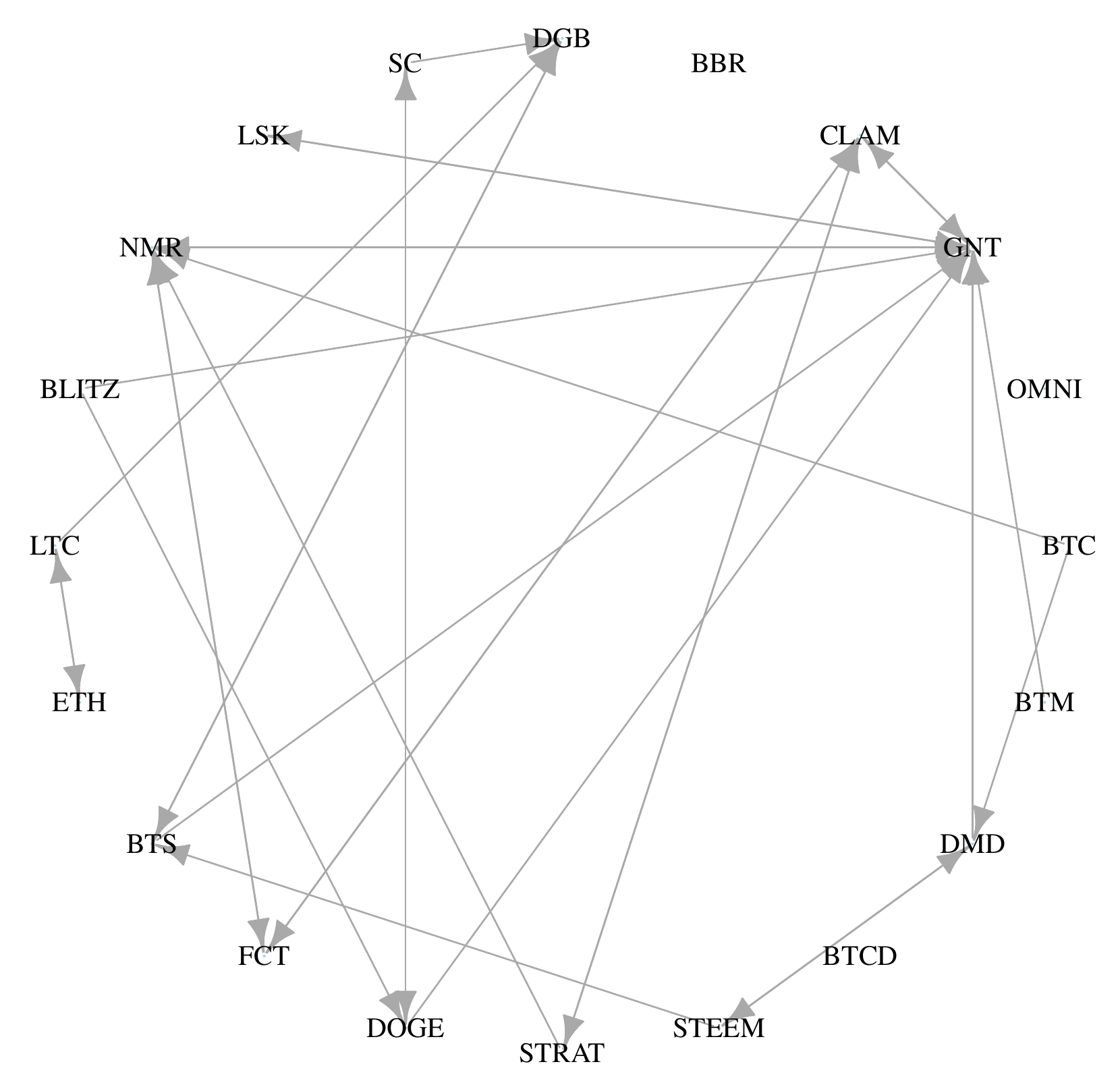}
		\caption{2018-01-31}
	\end{subfigure}
	\caption{Networks Formed by Return Cross-predictability. This figure presents the return-based cryptos network on selected dates in January 2018. We selected 20 cryptos, including BTC, ETH, LTC, and other top cryptos by market capitalization as of December 31, 2017. We obtained the connections from the predictive regression $r^{s}_{i, t+1} = \alpha_i + \sum_{j = 1, j \neq i}^{N - 1} b_{i,j} r^{s}_{j,t} + \epsilon_{i, t},$ where $r^{s}_{i,t}$ is the standardized daily return on crypto $i$, and $N$ is the total number of cryptos. Only the cryptos selected by adaptive Lasso will be linked to crypto $i$. An arrow from crypto A to crypto B means crypto A's return can predict the crypto B's return. }
	\label{Fig-RetAdj}
\end{figure}

Three critical features can be observed from the return-inferred network structure. First, the return cross-predictability is changing over time. For instance, the linkage between BTC and NMR on January 1 vanishes on January 5, and the predictive relationship between BTS and DGB on January 1 reverts on January 5. These observations indicate that information propagation is not always mono-directional or continues to exist between cryptos.

Second, the return cross-predictability induces a sparse network structure, that is, many isolated cryptos with no linkages in the network. The examples include BBR and STEEM on January 1; BBR, BLITS, BTC, and LSK on January 5; BBR on January 15; BBR, OMNI, and BTCD on January 31. The sparsity is an important feature that may create many challenges for conducting an in-depth analysis of the network structure, such as community detection, network formation modeling, network differential tests, and so forth. To resolve this issue, we need to explore new information sources to identify additional linkages between cryptos or apply advanced statistical methods designed for analyzing sparse networks. In fact, we use both strategies in the later analyses.

Finally, a larger market capitalization does not imply a leading position in return cross-predictability. For example, the largest three cryptos (BTC, LTC, and ETH) predict fewer crypto returns than CLAM on January 5 and LSK on January 15. These observations suggest that return cross-predictability is a noisy estimator for uncovering the authentic connections between cryptos, which motivates us to seek additional information sources to form crypto linkages so that the noises could cancel out with each other when we conduct a combinatorial analysis for the cryptos network.

\subsection{Contextualization with technology}

As pointed out by \cite{LiuShengWang2021}, although cryptos do not distribute dividends or release accounting information, many investors invest in them simply because they believe blockchain technology is an important innovation, and cryptos are assets that represent a stake in the future of this technology. Specifically, there are two building blocks for blockchain technology: hashing algorithms and consensus mechanism (or proof types).

\textit{Algorithm}, which is short for the \textit{hashing algorithm}, plays a central role in determining the crypto's security. For each crypto, there is a hash function in mining; for example, Bitcoin (BTC) uses double SHA-256 and Litecoin (LTC) uses Scrypt. As security is one of the most important features of cryptos, the hashing algorithm naturally--in terms of trust--determines the intrinsic value of a crypto. In the example above, the Scrypt system was used with cryptos to improve upon the SHA256 protocol. The SHA256 preceded the Scrypt system and was the basis for BTC. Specifically, Scrypt was employed as a solution to prevent specialized hardware from brute-force efforts to outmine others. Thus, Scrypt-based Altcoins require on average more computing effort per unit than the equivalent coin using SHA256. The relative difficulty of the algorithm confers a relative value.

\textit{Proof Types}, or proof system/protocol, is an economic measure to deter denial of service attacks and other service abuses such as spam on a network by requiring some work from the service requester, usually the equivalent to processing time by a computer. For each crypto, at least one of the protocols will be chosen as a transaction verification method; for example, BTC and Ethereum (ETH) currently use the Proof-of-Work (PoW), whereas Diamond (DMD) and Blackcoin use the Proof-of-Stake (PoS). PoW-based cryptos such as BTC use mining---the solving of computationally intensive puzzles---to validate transactions and create new blocks. In PoS-based cryptos, the creator of the next block is chosen through various combinations of random selection and wealth (in terms of crypto) or age (i.e., the stake). In summary, the proof protocol determines the reliability, security, and effectiveness of the transactions.

Recent literature has tried to model the mining process and different proof types under a theoretical framework. Many papers \citep[see, e.g.,][]{Biais2019, Easley2019, CongHeLi2021} have studied mining activities and transaction costs and their implications for crypto prices. Among them, \cite{Pagnotta2020} is the first to build an equilibrium model embedding a security channel that leads to multiple equilibria. \cite{Abadi2018}, \cite{Budish2018}, and \cite{Hinzen2019} studied the limitations and the pricing implications of the PoW mechanism, while \cite{Fanti2019} and \cite{Saleh2021} explored the PoS mechanism. Based on these models, we conjecture that the technological similarities could be vital for understanding the inter-relationships between cryptos. 

We collect the algorithm and proof type information for each crypto in our sample and contextualize the network induced by return cross-predictability with the technological similarities between cryptos. For illustration, we plot the sub-network in Figure \ref{Fig-TechAdj} using the same cryptos as in Figure \ref{Fig-RetAdj}, where the linkages are formed if the cryptos share at least one fundamental characteristic. For example, LTC and DOGE are linked because they both adopt the Scrypt algorithm. Evidently, due to the limited choices of algorithms and proof types, the cryptos are more likely to connect with each other when using technological similarities to form linkages. Intuitively, the network will be denser if we na{\"i}vely link cryptos either by return cross-predictability or technological similarity, such as adding linkages in subfigure (c) of Figure \ref{Fig-TechAdj} to each subfigure of Figure \ref{Fig-RetAdj}. In the following section, we will discuss how to combine the two sets of information in a more sophisticated manner. 

\begin{figure}[htp!]
	\centering
	\begin{subfigure}[tph!]{0.48\textwidth}
		\includegraphics[width = \linewidth]{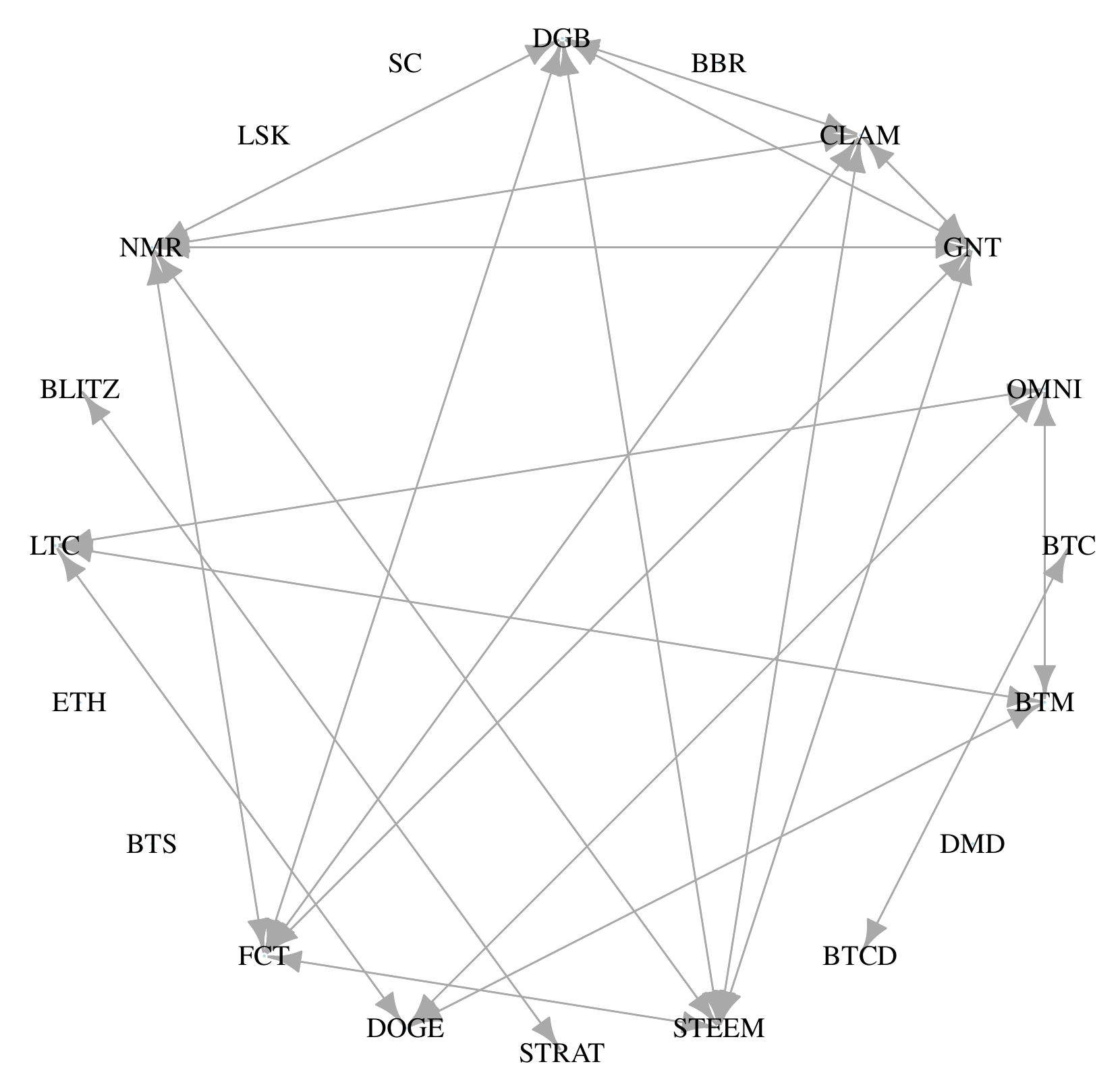}
		\caption{Algorithm}
	\end{subfigure}
	\begin{subfigure}[tph!]{0.48\textwidth}
		\includegraphics[width = \linewidth]{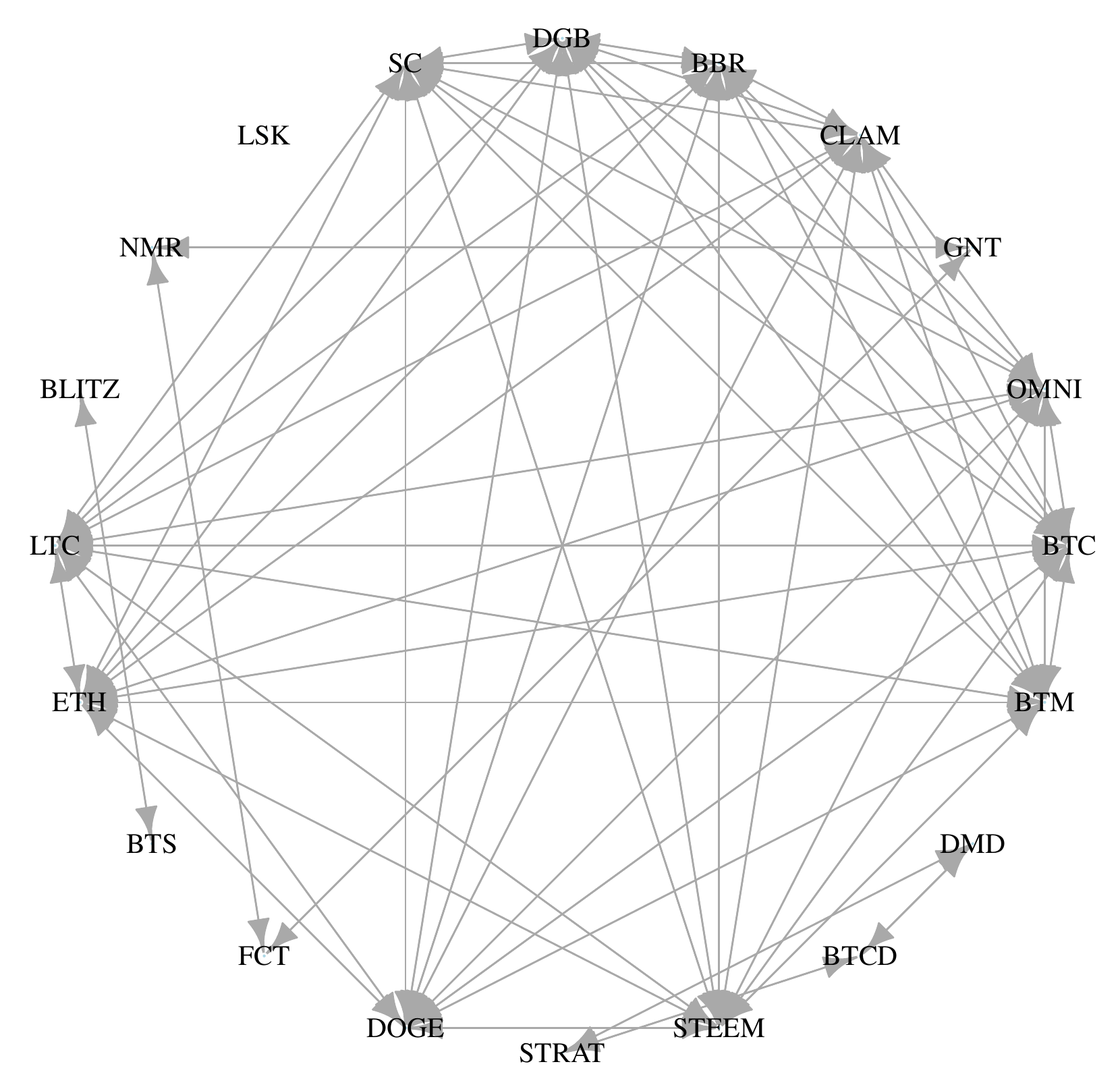}
		\caption{Proof Types}
	\end{subfigure}\\
	
	\begin{subfigure}[tph!]{0.48\textwidth}
		\includegraphics[width = \linewidth]{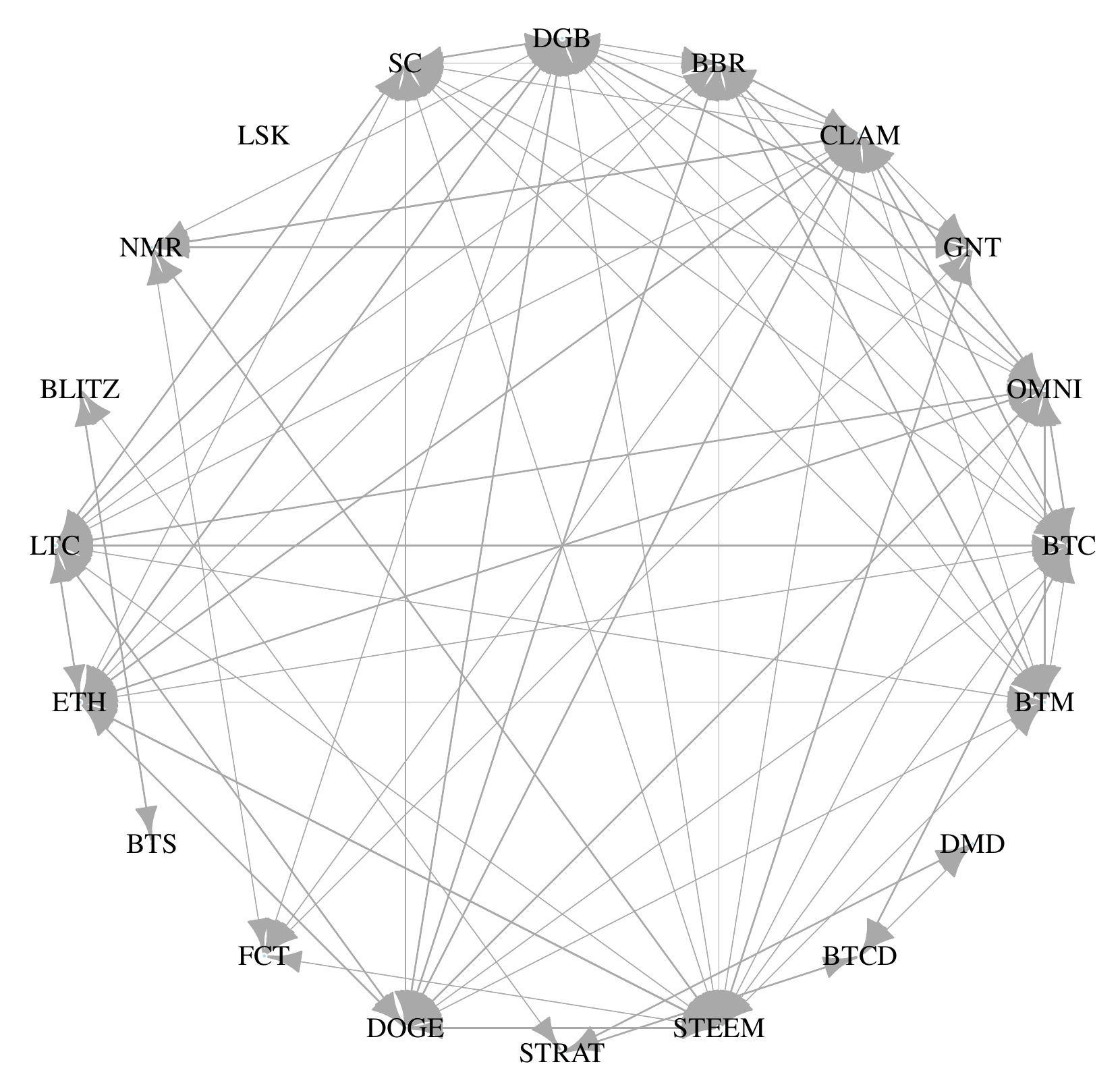}
		\caption{Combined Fundamentals}
	\end{subfigure}
	\caption{Networks Formed by Technological Similarities. This figure depicts the contract-based cryptos network. We connect two cryptos with double-sided arrows if they have the same algorithm or proof type.}
	\label{Fig-TechAdj}
\end{figure}

In summary, we have built a time-varying cryptos network that characterizes both return cross-predictability and technological similarity between cryptos. The price shocks (e.g., policy changes, macro events, etc.) will travel through return-based linkages, while technological shocks (e.g., innovation, security breaches, etc.) will travel through technology-based connections. In the following sections, we will investigate deeper on the inter-relationships between cryptos based on this network.

\section{Communities in the Cryptocurrencies Network} \label{SecMethod}

Classical empirical studies on asset pricing have pointed out the importance of market segmentation in generating heterogeneous prices of risk and profitable trading strategies. However, most of them use \textit{ex-ante} information, such as industry, listing exchange, and geography, to conjecture market segments \citep[see, e.g.,][]{FamaFrench1993, FamaFrenchBoothRex1993, Foerster1999, Griffin2002}. Recently, \cite{PattonWeller2019} proposed a data-driven method (EM algorithm) to cluster stocks with similar risk prices and provide strong evidence of market segmentation across all portfolios, factor pricing models, and time periods. Based on these findings, we deduct market segmentation also plays an important role in understanding the risks and returns of cryptos and would like to cluster cryptos under the network framework.    

A standard framework for studying clustering effect, that is, community detection, in a network setting is the stochastic blockmodel (SBM) proposed by \cite{Holland1983}. Using this framework, we can represent the adjacency matrices stochastically via a block structure and adopt spectral clustering to identify latent communities. \cite{Binkiewicz2017} and \cite{ZhangRohe2018} showed that, by introducing covariate assistance, they could improve the classification accuracy of the spectral clustering method under static network settings. This section presents an extension of the covariate-assisted spectral clustering (CASC) algorithm to deal with the time-varying directed networks. Theoretical justifications and simulations demonstrate the consistency of this method.

\subsection{Dynamic contextual stochastic blockmodel} \label{SubsecModel}

As return cross-predictability between any two cryptos could be mono-directional, we consider a time-varying network defined as a sequence of random directed graphs with $N$ nodes, $G_{N,t}$, $t=1,\cdots,T$, on the vertex set $V_N = \{v_1, v_2, \cdots, v_N\}$, which does not change over time. For each period, we model the directed network structure with an adjacency matrix $A_t$ which is mostly asymmetric, that is,
\begin{equation}
	{A}_t(i,j) = \begin{cases}
		\text{Bernoulli}\{P_{t}(i,j)\}, &\text{ if $i \neq j$} \\
		0, &\text{ if $i = j$}
	\end{cases},
\end{equation}
where ${P}_{t}(i,j) = \Pr\{A_t(i,j) = 1\}$. The adjacency matrix $A_t$ records the pattern of edges that are induced by crypto return cross-predictability in network at time $t$; if there is an edge starting from node $i \in \{1, \cdots, N\}$ and ending at node $j \in \{1, \cdots, N\}$ (i.e., $i \rightarrow j$), then $A_t(i,j) = 1$; otherwise, $A_t(i,j) = 0$. Thus, the $i$th row of $A_t$ records how crypto $i$ predicts other cryptos (sending edges), and the $j$th column of $A_t$ records how crypto $j$ is predicted by other cryptos (receiving edges). 

To model the community structure, the probabilities of a connection ${P}_t(i,j)$ at period $t$ are partitioned into blocks. Let the block probability matrix in each period be $B_t \in [0,1]^{K_R \times K_C}$ with rank $K = \min\{K_R, K_C\}$. In particular, let us denote $z_{R,it}$ as the row community label of node $i$ at time $t$ and $z_{C,it}$ as the column community label of node $j$ at time $t$, respectively; then, if $z_{R,it} = k_R$ and $z_{C,jt} = k_C$, then ${P}_{t}(i,j) = {B}_t(z_{R,it}, z_{C,jt}) = {B}_{t}(k_R,k_C)$. Hence, for any $t = 1, \cdots, T$, we can obtain the population adjacency matrix 
\begin{equation}
	\mathcal{A}_t \stackrel{\operatorname{def}}{=} \Expt(A_t) = {Z}_{R,t}{B}_t{Z}_{C,t}^\top,
\end{equation}
where $Z_{R,t} \in \{0,1\}^{N \times K_R}$ and $Z_{C,t} \in \{0,1\}^{N \times K_C}$ are the \textit{row-clustering matrices} and \textit{column-clustering matrices}, respectively. Notably,  there is only one 1 in each row and at least one 1 in each column of each clustering matrix. The co-clustering of the adjacency matrix yields two partitions of the same set of nodes. The row communities contain nodes with similar sending patterns, and the column communities contain nodes with similar receiving patterns. 

\cite{QinRohe2013} and \cite{JosephYu2016} showed that the regularization of adjacency matrix can substantially improve the spectral clustering performance, especially for sparse networks. Therefore, we define the regularized graph Laplacian $L_{\tau,t} \in \mathbbm{R}^{N \times N}$ for the directed network as
\begin{equation}
	L_{\tau,t} = D_{R,t}^{-1/2}A_tD_{C,t}^{-1/2},
\end{equation}
where $D_{R,t}$ and $D_{C,t}$ are diagonal matrices with $D_{R,t}(i,i) = \sum_{j=1}^{N} A_{t}(i,j) + \tau_{R,t}$ and $D_{C,t}(i,i) = \sum_{j=1}^{N}A_{t}(j,i) + \tau_{C,t}$, where $\tau_{R,t}$ and $\tau_{C,t}$ are set to be the average row and column degrees at each period, respectively.

To contextualize the network with the technological similarities between cryptos, we introduce the covariate matrix $X \in \{0,1\}^{N \times R}$, where each row of $X$ denotes a crypto, and each column denotes a type of technology. Thus, $X(i,j)$ equals 1 if crypto $i$ adopts technology $j$ and is 0 otherwise. Following \cite{Binkiewicz2017}, we assume the covariate $X$ also reflects the row- and column-cluster's information, that is, let $M_{R,t} \in [0,1]^{K_R \times R}$ and $M_{C,t} \in [0,1]^{K_C \times R}$ denote the covariate expectation matrix for row- and column-clusters in each period $t$, respectively, then the population covariate matrix is
\begin{equation} \label{eqCov}
    \mathcal{X} = \Expt(X) = Z_{R,t}M_{R,t} = Z_{C,t}M_{C,t}.
\end{equation}

Then, we combine return cross-predictability and technological similarities between cryptos by constructing a similarity matrix from regularized graph Laplacian $L_{\tau,t}$ and covariate matrix $X$, that is, for each $t = 1, \cdots, T$,
\begin{equation} \label{eqSim}
	S_t = L_{\tau,t} + \alpha_t C^w_t = D_{R,t}^{-1/2}A_tD_{C,t}^{-1/2} + \alpha_t XW_tX^{\top},
\end{equation}
where $\alpha_t \in [0, \infty)$ is the tuning parameter that controls for the informational balance between return cross-predictability and technological similarities, and $W_t$ is a time-varying weight matrix that controls for the degree of technological similarities.

The setup in Equation (\ref{eqSim}) addresses several extensions of existing methods. First, $W_t$ creates a time-varying interaction between different covariates. For instance, we may think of different refined algorithms that have the same origin. Such inheritance relationships will potentially lead to an interaction between the cryptos. In addition, as shown in Figure \ref{Fig-TechEv}, some algorithms may become more popular over time, while the others may be near extinction. Thus, this interaction would also change over time. Second, we can easily select covariates by setting certain elements of $W_t$ to 0. This is necessary as it helps us to model the evolution of technologies. At some point in time, some technology may disappear due to upgrades or cracking. Therefore, $W_t$ offers us the flexibility to exclude covariates. Finally, due to the open-source nature of the blockchain, crypto developers can easily copy and paste the source code and launch a new coin without any costs. Consequently, we observe a high degree of homogeneity in the cryptos market. However, this homogeneity does not necessarily result in a cross-predictability of prices: some cryptos are negatively correlated. In this case, we may set $W_t(i,i)$ to be negative, and $C^w_t$ will eventually bring the cryptos with different technologies closer in the similarity matrix.

\begin{figure}[htp!]
    \centering
    \includegraphics[width = \textwidth]{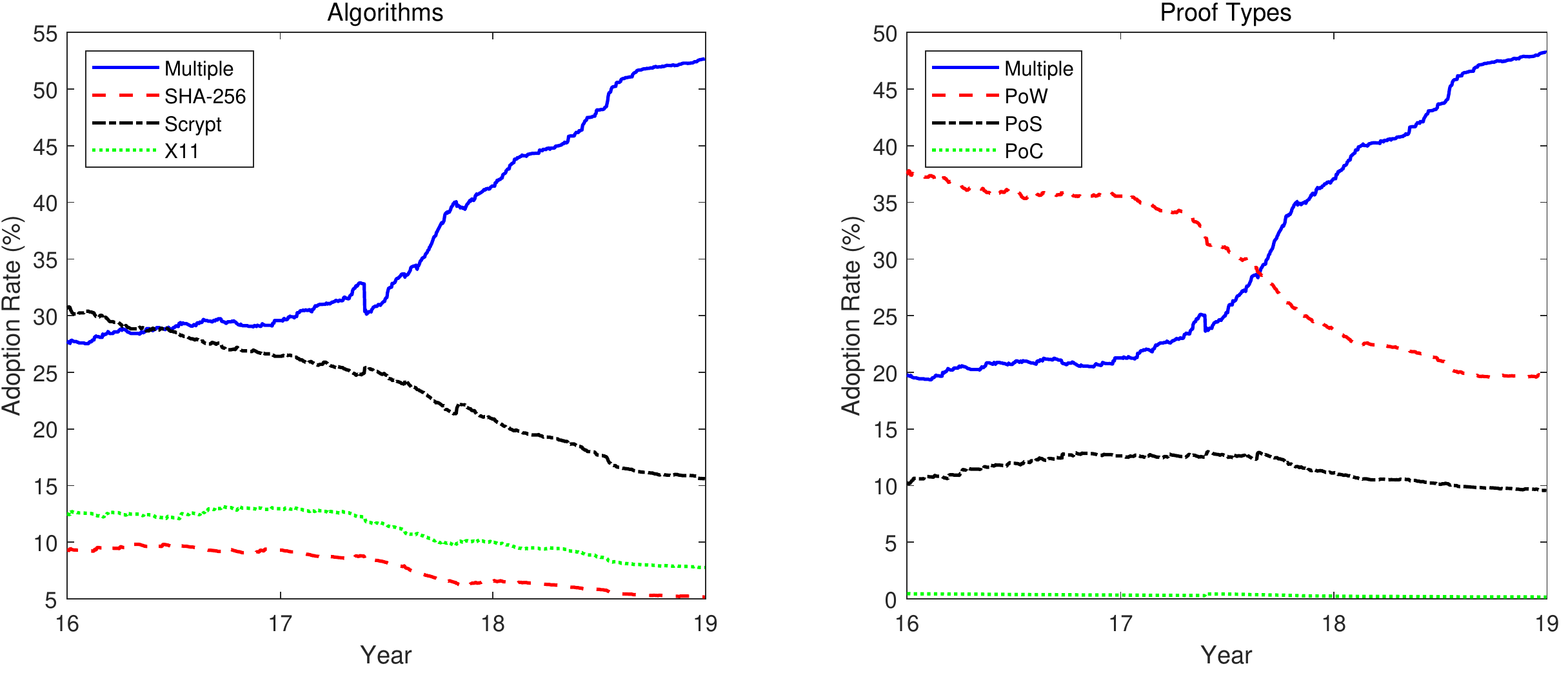}
    \caption{\small The Evolution of Technology Adoption Rates. This figure presents the adoption rates (or the popularity) of the major technologies in cryptos market from January 1, 2016 to December 31, 2018. }
    \label{Fig-TechEv}
\end{figure}

Following \cite{Karrer2011}, we employ degree correction to allow for heterogeneous degrees. Specifically, we introduce two sets of degree parameters $\psi^R = (\psi^R_1,\psi^R_2,\cdots,\psi^R_N)$ and $\psi^{C} = (\psi^C_1,\psi^C_2,\cdots,\psi^C_N)$ so that the edge probability are given by
\begin{equation*}
	P_t(i,j) = \psi^{R}_i\psi^{C}_jB_t(i,j)
\end{equation*}
under the restriction that
\begin{equation*}
	\sum_{i \in \mathcal{G}^R_{t,k}} \psi^R_{i} = 1, \forall k \in \{1, 2, \cdots, K_R\}, \quad	\sum_{i \in \mathcal{G}^C_{t,k}} \psi^C_{i} = 1, \forall k \in \{1, 2, \cdots, K_C\}.
\end{equation*}
Then, the population adjacency matrix for dynamic degree-corrected contextual stochastic blockmodel (DCCBM) is
\begin{equation}
	\mathcal{A}_t = \Expt(A_t) = \varPsi^{R}Z_{R,t}B_tZ_{C,t}^{\top}\varPsi^{C},
\end{equation}
where $\varPsi^{s} = \text{Diag}(\psi^s), s \in \{R, C\}$, and the population regularized graph Laplacian is
\begin{equation}
	\mathcal{L}_{\tau, t} = \mathcal{D}_{R,t}^{-1/2}\mathcal{A}_t\mathcal{D}_{C,t}^{-1/2}.
\end{equation}
Therefore, the population similarity matrix is 
\begin{equation} 
	\mathcal{S}_t = \mathcal{L}_{\tau, t} + \alpha_t \mathcal{C}^w_t,
\end{equation}
where $\mathcal{C}^w_t = \mathcal{X}\mathcal{W}_t\mathcal{X}^{\top}$.

By construction, we know that $D_{R,t}(i,i) = \sum_{j=1}^{N} \mathbf{1}^{(t)}_{\{i\rightarrow j\}} + \tau_{R,t}$, which controls for the number of the parents of node $j$, and that $D_{C,t}(j,j) = \sum_{i=1}^{N} \mathbf{1}^{(t)}_{\{j \rightarrow i\}} + \tau_{C,t}$, which controls the number of the offspring of node $j$. To analyze the asymmetric adjacency matrix $A_t$ caused by directional information, \cite{Rohe2016} proposed using the \textit{singular value decomposition} instead of eigen-decomposition for the regularized graph Laplacian. The intuition behind this methodology is to use both the eigenvectors of $L_{\tau,t}^{\top}L_{\tau,t}$ and $L_{\tau,t}L_{\tau, t}^{\top}$, which contains information about ``the number of common parents'' and ``the number of common offsprings''; that is, for each $t = 1, \cdots, T$:
\begin{align*}
	(L_{\tau,t}^{\top}L_{\tau,t})_{ab} &= \sum_{i=1}^{N}L_{\tau,t}(i,a)L_{\tau,t}(i,b) = \dfrac{1}{\sqrt{D_{C,t}(a,a)D_{C,t}(b,b)}}\sum_{i=1}^{N}\dfrac{\mathbf{1}^{(t)}_{\{i \rightarrow a \and i \rightarrow b\}}}{D_{R,t}(i,i)},\\
	(L_{\tau,t}L_{\tau,t}^{\top})_{ab} &= \sum_{i=1}^{N}L_{\tau,t}(a,i)L_{\tau,t}(b,i) = \dfrac{1}{\sqrt{D_{R,t}(a,a)D_{R,t}(b,b)}} \sum_{i=1}^{N}\dfrac{\mathbf{1}^{(t)}_{\{a \rightarrow i \and b \rightarrow i\}}}{D_{C,t}(i,i)}.
\end{align*}
Therefore, the column communities consist of cryptos that receive similar price and technology shocks (common parents), and the row communities contain cryptos transmitting shocks to similar cryptos (common offspring). 

\subsection{Dynamic covariate-assisted spectral clustering} \label{SubsecMethod}

When setting up a dynamic covariate-assisted spectral clustering algorithm, we face two major difficulties: (i) determining the interactive weights $W_t$ between covariates and (ii) estimating the similarity matrix with time-varying network information. For the first issue, we use the adoption rates to present the popularity of the crypto technology. Specifically, for any two technologies (algorithms or proof types) $a$ and $b$ at time $t$, we have
\begin{equation}
    W_t(a,b) = \dfrac{N_{a,t}}{N_t} \times \dfrac{N_{b,t}}{N_t},
\end{equation}
where $N$ is the total number of cryptos in the market, $N_a$ and $N_b$ are the numbers of cryptos that adopt technology $a$ and $b$, respectively. Intuitively, the weight is greater when the technologies are more popular among the cryptos, indicating a higher strength of technological similarity. 

For the second issue, we follow \cite{Pensky2019} by constructing the estimator of $\mathcal{S}_t$ with a discrete kernel to bring in historical linkage information. Specifically, we first pick an integer $r \geq 0$, obtain a set of integers $\mathcal{F}_{r} = \{-r,\cdots, 0\}$ and assume that $\big\Vert W_{r,\ell}(i) \big\Vert \leq W_{\max}$, where $W_{\max}$ is independent of $r$ and $i$, and satisfies
\begin{equation}
	\dfrac{1}{|\mathcal{F}_{r}|}\sum_{i \in \mathcal{F}_{r}}i^k W_{r,\ell}(i) = \begin{cases}
		1, & \text{if $k = 0$}, \\
		0, & \text{if $k = 1, 2, \cdots, \ell-1$}. \\
	\end{cases}
\end{equation}

\begin{figure}[htp!]
    \centering
    \includegraphics[width = \textwidth]{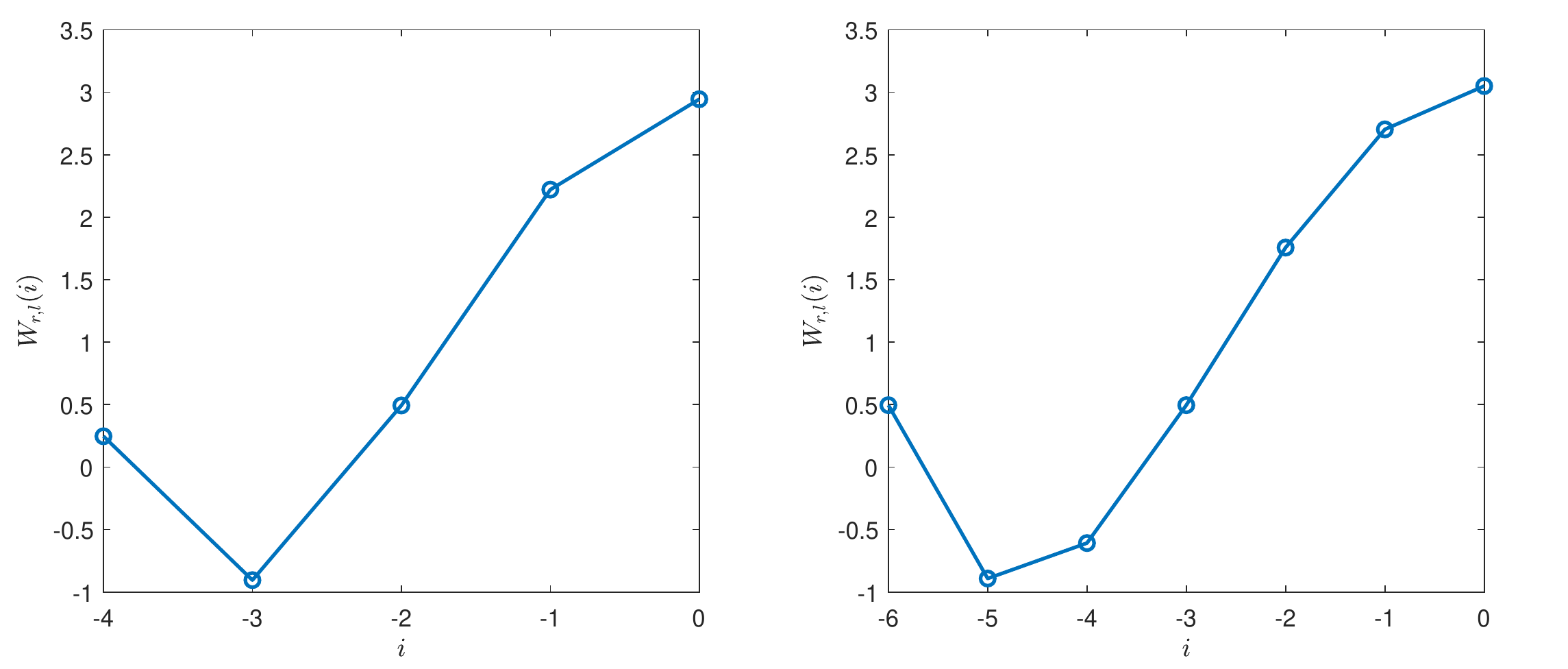}
    \caption{\small Examples of Discrete Kernel. The first figure plots the value of discrete kernel $W_{r,\ell}(i)$ with bandwidth $r = 4$, smoothing parameter $\ell = 4$, and $i \in \mathcal{F}_{r}$. The second figure plots $W_{r,\ell}(i)$ with bandwidth $r = 6$, smoothing parameter $\ell = 4$, and $i \in \mathcal{F}_{r}$.}
    \label{Fig-Kernel}
\end{figure}

By construction, the $W_{r,\ell}$ is a discretized version of the order $\ell$ optimal continuous boundary kernel that weighs only the historical observations \citep{Gasser1979, Altman1990}. As shown in Figure \ref{Fig-Kernel}, the kernel assigns more recent similarity matrices higher scores. To choose an optimal bandwidth $r$, \cite{Pensky2019} proposed an adaptive estimation procedure using \cite{LepskiMammenSpokoiny1997}. We employ the same method (see supplement appendix \ref{sec:S1}) and construct the estimator for the smoothed similarity matrices:
\begin{equation} \label{eq01}
	\widehat{\mathcal{S}}_{t,r} = \dfrac{1}{|\mathcal{F}_{r}|}\sum_{i \in \mathcal{F}_{r}} W_{r,\ell}(i) S_{t+i}.
\end{equation}

\begin{remark}
    Different from \cite{Pensky2019}, we only apply a boundary kernel on the historical similarity matrices for community detection. The changes are to avoid the ``look-ahead bias" \citep{Kan1995} in asset pricing inference in subsequent sections. 
\end{remark}

Once we obtain $\widehat{\mathcal{S}}_{t,r}$, we create a singular value decomposition of $\widehat{S}_{t,r} = \widehat{U}_t\widehat{\Sigma}_t\widehat{V}_t^\top$ for each $t = 1, 2, \cdots, T$. As \cite{Lei2015} discussed, the matrix $\widehat{U}_t$ may now have more than $K$ distinct rows due to the degree correction, whereas the rows of $\widehat{U}_t$ still only point to at most $K$ directions. Therefore, we apply the spherical clustering algorithm to find a community structure among the rows of the normalized matrix $\widehat{U}^+_t$ with $\widehat{U}_{t}^{+}(i,*) = \widehat{U}_{t}(i,*)/\Vert \widehat{U}_t(i,*)\Vert$. More specifically, we consider the following spherical $k$-medians spectral clustering:
\begin{equation}
	\big\Vert \widehat{Z}_{R,t}^{+}\widehat{Y}_t  - \widehat{U}_{t}^{+}\big\Vert_{F}^2 \leq (1+\varepsilon) \min_{\substack{{Z}^+_{R,t} \in \mathcal{M}_{N^{R}_+,K} \\ {Y}_t \in \mathbb{R}^{K \times K}}} \big\Vert {Z}_{R,t}^{+}{Y}_t  - \widehat{{U}}_{t}^{+}\big\Vert_{F}^2
\end{equation}
where $Y_t$ is some rotation matrix. In the last step, we extend $\widehat{{Z}}_{R,t}^{+}$ to obtain $\widehat{{Z}}_{R,t}$ by adding $N - N^{R}_+$ canonical unit row vectors at the end. $\widehat{{Z}}_{R,t}$ is the estimate of $Z_{R,t}$ from this method. Same techniques are applied to $\widehat{V}_t$ to obtain column communities. We summarize the detailed steps in Algorithm \ref{algmdrt}.

\begin{algorithm}[htp!]
	\SetKwInOut{Input}{Input}
	\SetKwInOut{Output}{Output}
	
	\caption{CASC in the Dynamic DCCBM}\label{algmdrt}
	\Input{Adjacency matrices ${A}_t$ for $t = 1, \cdots, T$; \\ Covariates matrix $X$; \\ Number of row communities $K_R$ and number of column communities $K_C$; \\ Approximation parameter $\varepsilon$.}
	\Output{Membership matrices of rows and columns $Z_{R,t}$ and $Z_{C,t}$ for $t = 1, \cdots, T$.}
	\smallskip
	Calculate regularized graph Laplacian $L_{\tau, t}$.
	
	Estimate $\mathcal{S}_t$ by $\widehat{\mathcal{S}}_{t,r} $ as in (\ref{eq01}).
	
	Compute the singular value decomposition of $\widehat{\mathcal{S}}_{t,r} = \widehat{U}_t\widehat{\Sigma}_t\widehat{V}_t^{\top}$ for $t = 1, \cdots, T$. 
	
	Extract the first $K$ columns of $U_t$ and $V_t$ that correspond to the $K$ largest singular values in $\Sigma_t$, where $K = \min\{K_R, K_C\}$. Denote the resulting matrices $\widehat{U}_t^{K} \in \mathbbm{R}^{N \times K}$ and $\widehat{V}_t^{K} \in \mathbbm{R}^{N \times K}$.
	
	Let $N_+^R$ be the number of nonzero rows of $\widehat{U}^K_t$; then, obtain $\widehat{U}^K_{t+} \in \mathbbm{R}^{N_+^R \times K}$ consisting of normalized nonzero rows of $\widehat{U}^K_{t+}$; that is, $\widehat{U}_{t+}^K(i,*) = \widehat{U}^K_{t}(i,*) / \big\Vert \widehat{U}^K_{t}(i,*) \big\Vert$ for $i$ such that $\big\Vert \widehat{U}^K_{t}(i,*) \big\Vert > 0$.
	
	Similarly, let $N_+^C$ be the number of nonzero rows of $\widehat{V}^K_t$; then, obtain $\widehat{V}^K_{t+} \in \mathbbm{R}^{N_+^C \times K}$ consisting of normalized nonzero rows of $\widehat{V}^K_{t+}$; that is, $\widehat{V}_{t+}^K(i,*) = \widehat{V}^K_{t}(i,*) / \big\Vert \widehat{V}^K_{t}(i,*) \big\Vert$ for $i$ such that $\big\Vert \widehat{V}^K_{t}(i,*) \big\Vert > 0$.
	
	Apply the ($1+\varepsilon$)-approximate $k$-medians algorithm to cluster the rows (columns) of $\widehat{\mathcal{S}}_{t,r}$ into $K_R$ ($K_C$) communities by treating each row of $\widehat{U}^{K}_{t+}$ ($\widehat{V}^{K}_{t+}$) as a point in $\mathbbm{R}^K$ to obtain $\widehat{Z}^{+}_{R,t}$ ($\widehat{Z}^{+}_{C,t}$). 
	
	Extend $\widehat{Z}_{R,t}^+$ and $\widehat{Z}_{C,t}^+$ to obtain $\widehat{Z}_{R,t}$ and $\widehat{Z}_{C,t}$, respectively, by arbitrarily adding $N-N^R_+$ and $N-N^C_+$ canonical unit row vectors at the end, such as $\widehat{Z}_{R,t}(i) = (1, 0, \cdots, 0)$ and $\widehat{Z}_{C,t}(i) = (1, 0, \cdots, 0)$ for $i$ such that $\big\Vert \widehat{U}_t(i,*)\big\Vert = 0$ and $\big\Vert \widehat{V}_t(i,*)\big\Vert = 0$.
	
	Output $\widehat{Z}_{R,t}$ and $\widehat{Z}_{C,t}$.
\end{algorithm}

\subsection{Classification consistency}

In the subsequent analysis, we illustrate that the dynamic CASC achieves classification consistency uniformly over time. We first make some assumptions on the graph that generates the dynamic network. The major assumption required here is \textit{assortativity}, which ensures that the nodes within the same community are more likely to share an edge than nodes in two different communities. 
\begin{assumption}{1} \label{Ass01}
	The time-varying network is composed of a series of assortative graphs that are generated under the stochastic blockmodel with covariates, where the block probability $B(k,k) \geq B(k,k')$, $\forall k' \neq k$.
\end{assumption}

Intuitively, the more frequent the community membership changes, the less stable the network will be. Consequently, it becomes harder to utilize the information from the historical and future network structures to detect the communities in the present network structure. In Assumption \ref{Ass02}, we restrict the maximum number of nodes that switch memberships ($s$) to some finite number. Based on this assumption, the proportion of nodes that switch their memberships shrinks to 0 as the size of the network grows to infinity. Additionally, we can easily bind the dynamic behavior of clustering matrices ($Z_{R,t+r} - Z_{R,t}$ or $Z_{C,t+r} - Z_{C,t}$) by noting that there are at most $rs$ nonzero rows in the differenced matrix.  
\begin{assumption}{2} \label{Ass02}
	At most, $s < \infty$ number of nodes can switch their memberships between any consecutive time instances.
\end{assumption}

\begin{assumption}{3} \label{Ass03}
	For $1 \leq k \leq k' \leq K$, let $\varsigma_t = t/T$ denote the time instance, there exists a function $f(\cdot; k, k')$ such that $B_t(k,k') = f(\varsigma_t; k, k')$ and $f(\cdot; k, k') \in \Sigma(\beta, L)$, where $\Sigma(\beta, L)$ is a H{\"o}lder class of functions $f(\cdot)$ on $[0,1]$ such that $f(\cdot)$ are $\ell$ times differentiable and 
	\begin{equation}
		|f^{(\ell)}(x) - f^{(\ell)}(x')| \leq L|x - x'|^{\beta - \ell}, \text{ for any $x, x' \in [0,1]$},
	\end{equation}
	with $\ell$ being the largest integer smaller than $\beta$.
\end{assumption}
Assumption \ref{Ass03} states that neither the connection probabilities nor the community memberships change drastically over the horizons. Finally, to guarantee the performance of our clustering method, we impose some conditions to regularize the behavior of the covariate matrix and the eigenvalues of the similarity matrices. 

\begin{assumption}{4} \label{Ass04}
	Define $\underline{\delta} = \inf_t \lbrace \min\{\min_i \mathcal{D}_{R,t}(i,i), \min_i\mathcal{D}_{C,t}(i,i)\} \rbrace$ as the minimum average degree. Then, for any $\epsilon \in (0,1)$, assume:
	\begin{enumerate}
	    \item[(i)] $\underline{\delta} > 3\ln(16NT/\epsilon)$;
	    \item[(ii)] $\alpha_{\max} = \sup_t \alpha_t \leq 
	    (NR)^{-1}a$, where $a = \sqrt{\frac{3\ln(16NT/\epsilon)}{\underline{\delta}}}$.
	\end{enumerate}
\end{assumption}

\begin{remark}
    Assumption \ref{Ass04} embodies several important conditions for classification consistency. Condition (i) indicates the graph should be at least semi-dense. Condition (ii) ensures the population eigen-gap to be greater than the maximum of the absolute difference between the sum of covariate variances within a block and the mean of the sums across all blocks, which means covariate similarities will not dominate the clustering results.
\end{remark}

Following \cite{Rohe2016}, we define the ``$R$-misclustered" and ``$C$-misclustered" vertices as 
\begin{equation} \label{eqdef}
	\mathbb{M}^{p}_{t} = \left\lbrace \text{$i$: $\big\Vert C_{i,t}^p - \mathcal{C}_{i,t}^p\mathcal{O}_{t}^{p} \big\Vert > \big\Vert C_{i,t}^p - \mathcal{C}_{j,t}^p\mathcal{O}_{t}^{p}\big\Vert$, for any $j \neq i$}\right\rbrace, \quad p \in \{R, C\},
\end{equation}
where $C_{i,t}^p$ and $\mathcal{C}_{i,t}^p$ for $p \in \{R, C\}$ are the community centroids of the $i$th node at time $t$ generated using the $k$-medians clustering on the left/right singular vectors and the population left/right singular vectors, respectively. 

\begin{theorem} \label{thm02}
	Let $\lambda_{1,t} \geq \lambda_{2,t} \geq \cdots \geq \lambda_{K,t} > 0$ be the $K = \min\{K_R, K_C\}$ largest singular values of $\mathcal{S}_t$. Let ${Z}_{R,t} \in \mathcal{M}_{N,K_R}$, ${Z}_{C,t} \in \mathcal{M}_{N,K_C}$, and $P_{\max} = \max\{\max_{i,t}(Z_{R,t}^{\top}Z_{R,t})_{ii}, \max_{i,t}(Z_{C,t}^{\top}Z_{C,t})_{ii}\}$ denote the size of the largest block over the horizons. Then, under Assumptions \ref{Ass01}-\ref{Ass04} and $K_R \leq K_C$, the misclustering rate satisfies
	\begin{align*} 
		\sup_t\dfrac{\left|\mathbb{M}^R_t\right|}{N} &\leq \dfrac{c_2(\varepsilon)KW_{\max}^2}{m_{r}^{2}N\lambda_{K,\max}^2} \left\lbrace (6+c_1)\dfrac{b}{\underline{\delta}^{1/2}} + \dfrac{2K_C}{\underline{\delta}}(\sqrt{2P_{\max}rs} + 2P_{\max}) + \dfrac{P_{\max}L}{\underline{\delta}\cdot \ell!}\left(\dfrac{r}{T}\right)^{\beta}\right\rbrace^2, \\
		\sup_t\dfrac{\left|\mathbb{M}^C_t\right|}{N} &\leq \dfrac{c_3(\varepsilon)KW_{\max}^2}{m_c^2N\gamma_c^2\lambda_{K,\max}^2}\left\lbrace (6+c_1)\dfrac{b}{\underline{\delta}^{1/2}} + \dfrac{2K_C}{\underline{\delta}}(\sqrt{2P_{\max}rs} + 2P_{\max}) + \dfrac{P_{\max}L}{\underline{\delta}\cdot \ell!}\left(\dfrac{r}{T}\right)^{\beta}\right\rbrace^2,
	\end{align*}
	with a probability of at least $1 - \epsilon$, where $\lambda_{K,\max} = \max_t\{\lambda_{K,t}\}$, $c_2(\varepsilon) = 2^6(2+\varepsilon)^2$, $c_3(\varepsilon) = 2^7(2+\varepsilon)^2$, $b = \sqrt{3\ln(16NT/\epsilon)}$. $m_{r}$, $m_{c}$, and $\gamma_{c}$ are defined by equation (\ref{B_Mr}), (\ref{B_Mc}), and (\ref{eq_gamma}) in the supplementary appendix, respectively. 
\end{theorem}

\begin{remark}
	The upper bounds on misclustering rates obtained in the Theorem \ref{thm02} are non-asymptotic. To justify the asymptotic behavior of the misclustering rates, following \cite{Rohe2016}, we first have $m_r = O(\sqrt{K/N})$ and $m_c = O(\sqrt{K/N})$. Then, following \cite{FanLiaoMincheva2013}, we have $\lambda_{K,\max}$ diverging at rate $O(N)$. Based on these two results, we can easily find the convergence rate of the component outside of the bracket in error bound being $O(1/N^2)$. Taking each component inside the bracket diverges slower than $O(N)$, we conclude that the error bound will converge to 0 as $N, T \rightarrow \infty$.
\end{remark}

\begin{remark}
    The uniform upper bound is analogous to the static upper bounds in \cite{Binkiewicz2017} and \cite{ZhangRohe2018}. \cite{Abbe2020} pointed out that these bounds are not tight enough to inform the information gain in combining the network and node attributes. \cite{Deshpande2018}, \cite{Abbe2020}, and \cite{LuSen2020} tried to derive the information threshold for exact recovery under static network settings. It is definitely worth another theoretical paper to derive the uniform information bound for dynamic networks, but it is beyond the scope of the current paper. For empirical analysis, we only need consistency to guarantee the performance of our method. In subsequent sections, we will illustrate the improvement of classification accuracy by combining two sets of information through simulations and empirical tests.
\end{remark}

\subsection{Monte Carlo simulations}

To intuitively illustrate the theoretical results intuitively, we carry out several simulation studies using our algorithm and existing clustering methods under different model setups. Our benchmark algorithms for the directed networks are the degree-corrected DI-SIM (DI-SIM-DC) by \cite{Rohe2016} and the covariate-assisted DI-SIM for static networks (CA-DI-SIM-Stc) by \cite{ZhangRohe2018}. Since both methods are intended for static networks, we apply them to the adjacency matrix period by period and report the average misclustering rate over time.

We set the block probability matrix $B_t$ as
\begin{equation} \label{eqsim}
	B_t = 
	\begin{bmatrix}
		0.60 & 0.30 & 0.20 & 0.10 \\
		0.30 & 0.50 & 0.20 & 0.10 \\
		0.20 & 0.20 & 0.40 & 0.10 \\
		0.10 & 0.10 & 0.10 & 0.30
	\end{bmatrix}\times \dfrac{T+2t}{2T}, \text{ with } 1 \leq t \leq T,
\end{equation}
and set the order of the polynomials for kernel construction at $L = 4$ for all simulations. Then, we generate the initial clustering matrices $Z_{R,1}$ and $Z_{C,1}$ by randomly choosing one entry in each row and assign it to 1. For $t = 2, \cdots, T$, we fix the last $N-s$ rows of $Z_{R,t-1}$ and $Z_{C,t-1}$ and re-assign 1s in the first $s$ rows of $Z_{R,t-1}$ and $Z_{C,t-1}$, respectively, to mimic the 
community membership changes. Lastly, we assume that the number of communities $K_R = K_C = 4$ for directed network is known throughout the simulations. The time-invariant node covariates are $R = \lfloor \ln(NT) \rfloor$ dimensional with values $X \sim U(0,10)$. We replicate all experiments 1000 times, and the misclustering rate we report is the temporal average of the misclustering rates; that is, $T^{-1}\sum_{t=1}^{T}|\mathbb{M}^R_t|/N$ and $T^{-1}\sum_{t=1}^{T}|\mathbb{M}^C_t|/N$. 

We first examine the clustering performance with a growing network size. The number of vertices in the network varies from 20 to 200 with step size 20. The time span is $T = 10$ and the number of membership changes is $s = 10$. We summarize the results in Figure \ref{Fig-SimMisRates} (a) and (b). Evidently, as the size of the network increases, the misclustering rates of the CA-DI-SIM-Dyn decrease sharply and dominate both CA-DI-SIM-Stc and DI-SIM-DC throughout the simulation. The CA-DI-SIM-Stc dominates DI-SIM-DC in classification accuracy, which is consistent with the argument that some similarity is sufficient for the covariates to assist in the discovery of the graph structure \citep{Binkiewicz2017}. The dominating performance of CA-DI-SIM-Dyn over CA-DI-SIM-Stc proves that historical linkage information is crucial for improving classification accuracy.

\begin{figure}[htp!]
	\centering
	\begin{subfigure}[tph!]{0.48\textwidth}
		\includegraphics[width = \linewidth]{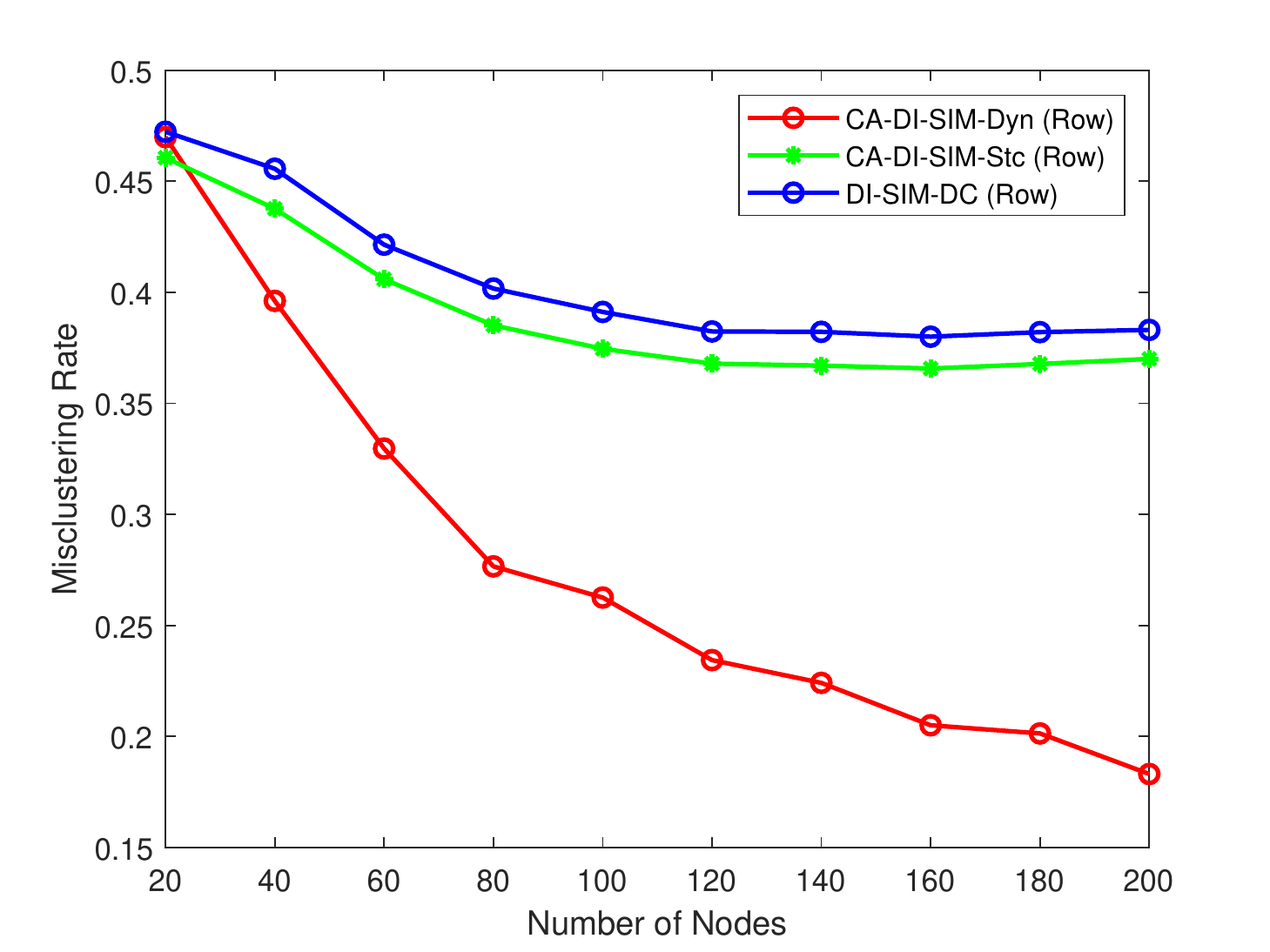}
		\caption{\footnotesize Directed Network (Row Community)}
	\end{subfigure}
	\begin{subfigure}[tph!]{0.48\textwidth}
		\includegraphics[width = \linewidth]{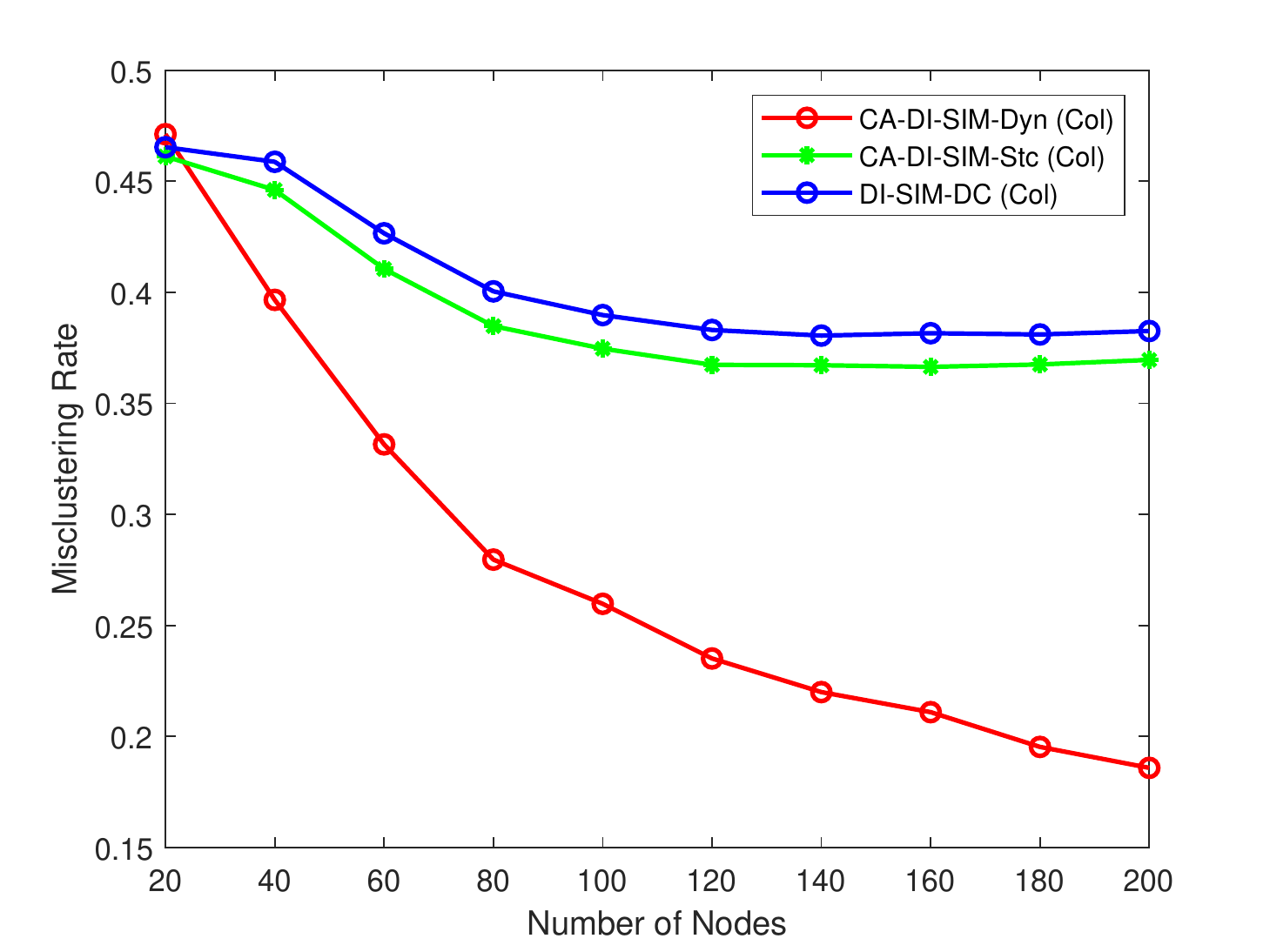}
		\caption{\footnotesize Directed Network (Column Community)}
	\end{subfigure} \\
	
	\begin{subfigure}[tph!]{0.48\textwidth}
		\includegraphics[width = \linewidth]{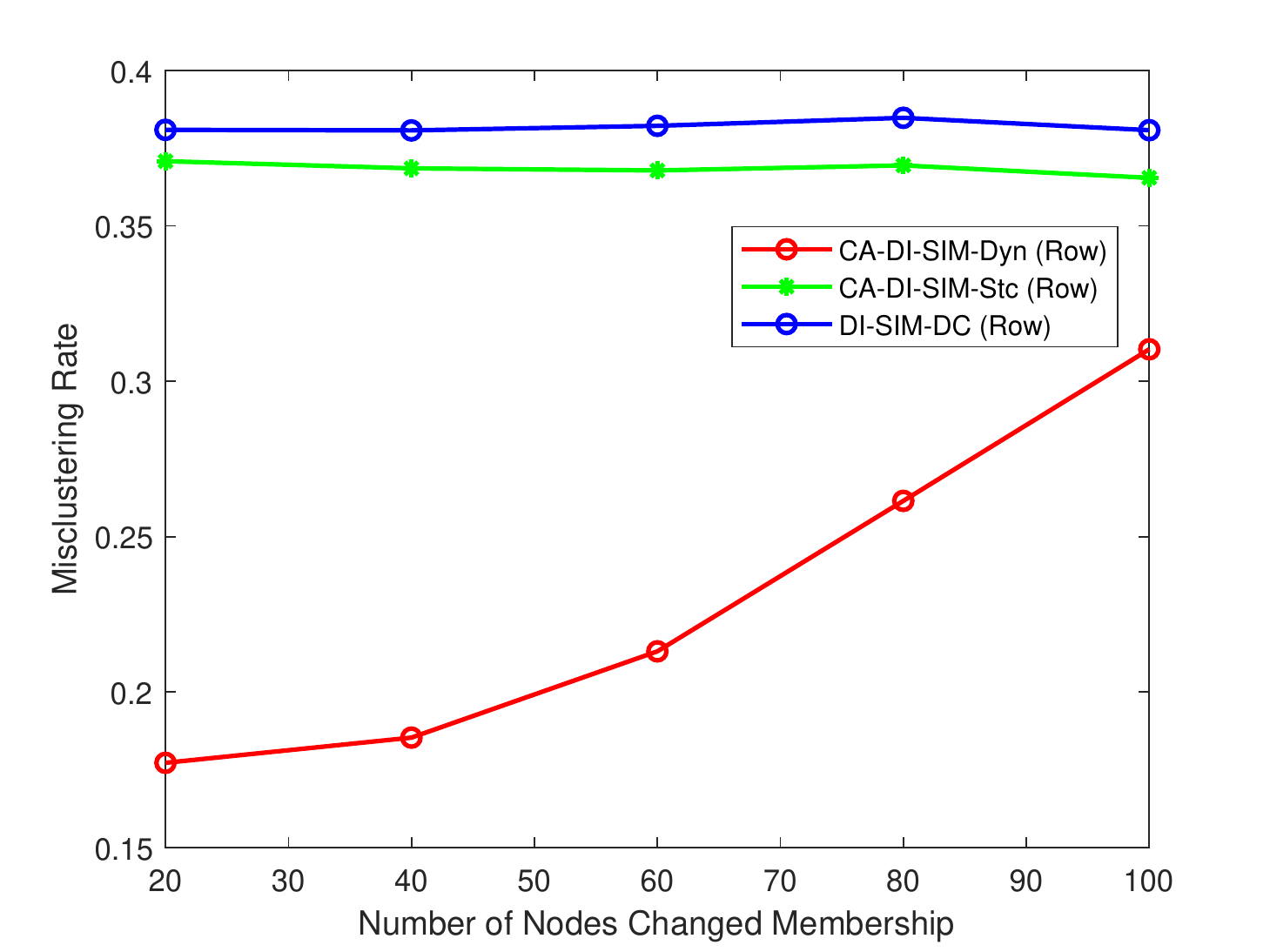}
		\caption{\small Directed Network (Row Community)}
	\end{subfigure}
	\begin{subfigure}[tph!]{0.48\textwidth}
		\includegraphics[width = \linewidth]{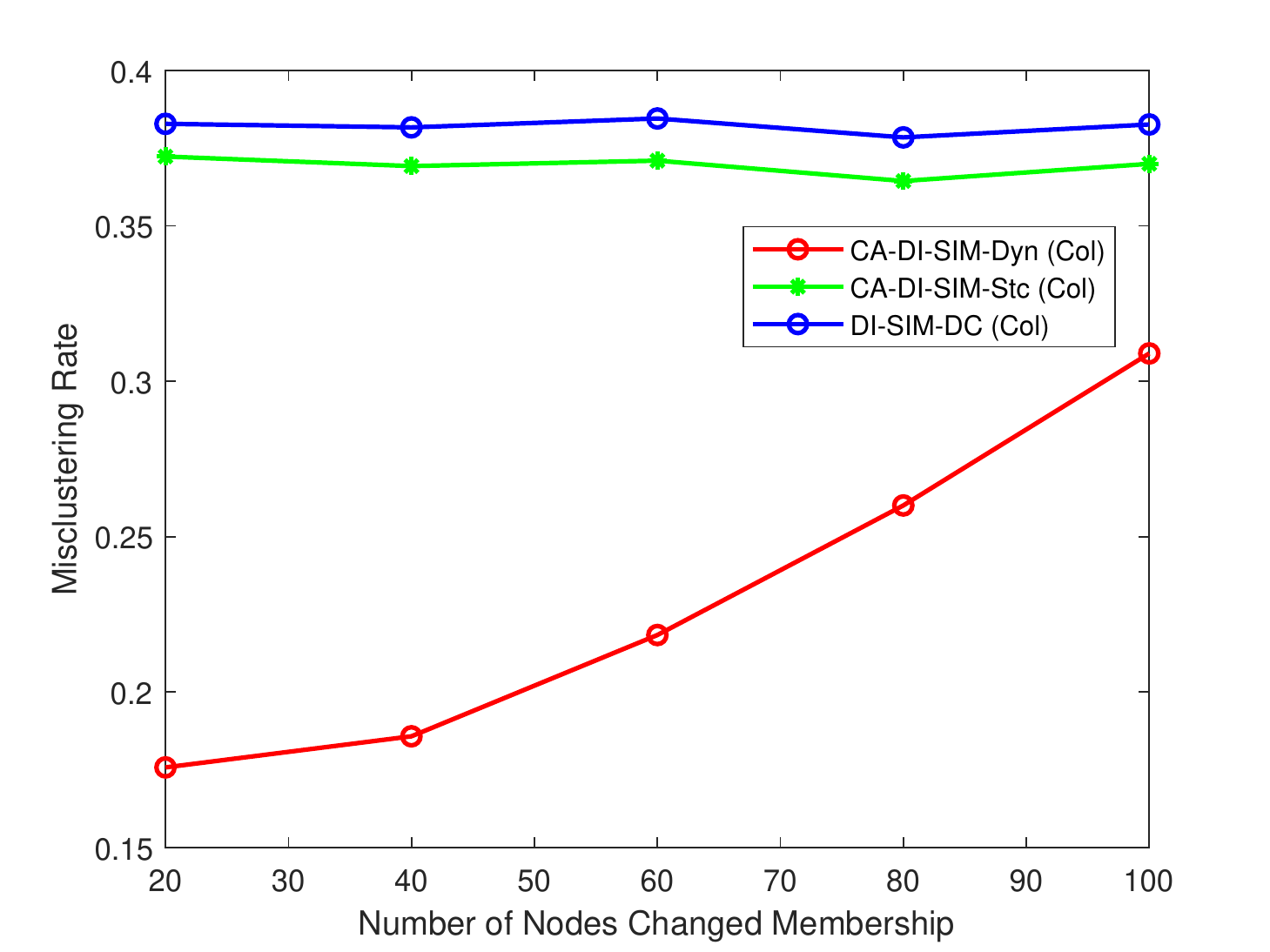}
		\caption{\small Directed Network (Column Community)}
	\end{subfigure}
	
	\caption{\small Misclustering Rates. This figure reports misclustering rates of different spectral clustering algorithms for networks with a growing number of vertices or with a growing number of membership changes. The CA-DI-SIM-Dyn represents Algorithm \ref{algmdrt}. The DI-SIM-DC is the degree-corrected DI-SIM in \cite{Rohe2016} and CA-DI-SIM-Stc is the static covariate-assisted DI-SIM method in \cite{ZhangRohe2018}. In subfigures (a) and (b), the number of nodes ($N$) grows from 20 to 200 with a step size of 20, and the number of membership changes is fixed at $s = 10$. In subfigures (c) and (d), the number of nodes is fixed at 200, and the number of membership changes ($s$) grows from 20 to 100 with stepsize of 20. The number of periods is $T = 10$, and the number of blocks is $K=4$, with the block probability matrix given in equation (\ref{eqsim}). The covariates are simulated with $R = \lfloor \ln(NT)\rfloor$ and $X \sim U(0,10)$. The smoothness parameter for kernel construction is $L = 4$. All simulations are conducted 1000 times.}
	\label{Fig-SimMisRates}
\end{figure}

Next, we check the relative performance for a growing maximum number of community membership changes. Here, we fix the total number of vertices at 200 and vary the community membership changes for each period, $s$, from 20 to 100 with step size 20. The total number of horizons is $T = 10$. We summarize the results in Figure \ref{Fig-SimMisRates} (c) and (d). Evidently, the misclustering rate of our method increases with the number of community membership changes. The benchmark methods are insensitive to the change of community memberships because they estimate communities using a one-period adjacency matrix. Despite the result, our method still achieves the lowest misclustering rate among all methods when the community memberships are relatively stable ($s<N/2$).

To perceptually illustrate the supreme accuracy of the dynamic CASC over the benchmark methods, we visualize the block structure of the dynamic network in Figure \ref{Fig-SimBlocks} by drawing the average adjacency matrices over a certain period and sort the rows and columns according to their community labels. Specifically, we simulate a dynamic network of 200 nodes with four blocks across 10 periods according to the block probability matrix $B_t$ in Equation (\ref{eqsim}) by setting parameter values $s = 10$, $R = \lfloor \ln(NT) \rfloor$, and $X \sim U(0,10)$. Then, we calculate the average adjacency matrix over six periods, $1/6\sum_{t=1}^{6}A_t$, and sort the rows and columns jointly according to their community labels. We choose six periods because the optimal bandwidth $r$ is chosen as 5 in the simulation. Thus, using six periods will lose no information in the kernel. Taking the average adjacency matrix as a $200\times200$ grid, we paint the square with a darker color when it has a larger value. 

\begin{figure}[htp!]
	\centering
	\includegraphics[width = \textwidth]{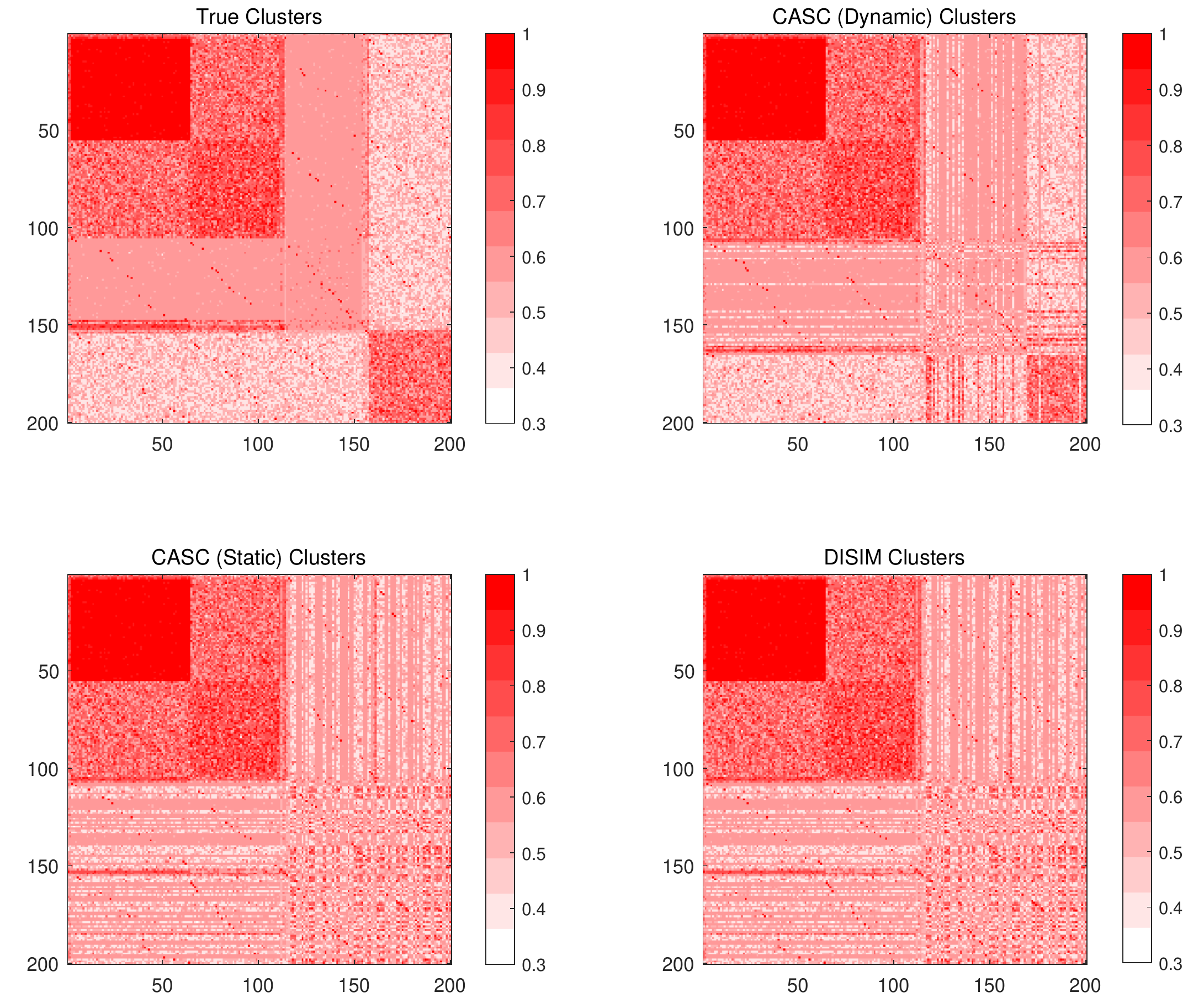}
	\caption{\small The True and the Estimated Blocks of the Network. This figure presents the block structure of the averaged adjacency matrices according to the true communities (simulation setting) and the estimated communities using different clustering methods. The averaged adjacency matrix reported is $1/6\sum_{t=1}^{6}A_{t}$. The adjacency matrices have $N = 200$ nodes with $K = 4$ communities. The number of periods is $T = 10$, and the block probability matrix is given in Equation (\ref{eqsim}). The covariates are simulated with $R = \lfloor \ln(NT)\rfloor$ and $X \sim U(0,10)$. The number of community member changes is fixed at $s = 20$, and the smoothness parameter for kernel construction is $L = 4$. All simulations are conducted 1000 times.}
	\label{Fig-SimBlocks}
\end{figure}

Evidently, the graph with true communities exhibits a clear block structure where each block is associated with reasonable connection frequencies abide by the block probability matrix. All three methods successfully recover the dense blocks (blocks in the northwest corner), whereas only the dynamic CASC can recover the less dense blocks (blocks in the southeast corner). The simulation results serve as remarkable evidence to confirm the advantages of the dynamic CASC in dealing with dense networks.

\subsection{Empirical communities} 

Applying the dynamic CASC algorithm to the cryptos network, we classify the crypto network into four column communities. We present three representative cryptos in each estimated community in Table \ref{Tab_Community}. It shows that as of December 31, 2017, the largest cryptos (BTC, ETH, and LTC) in terms of market capitalization are not necessarily categorized into the same community. For example, although the return patterns of BTC and ETH are closely related, their fundamental attributes are somewhat different: BTC employs SHA256, while ETH uses KECCAK-256. For comparison, we also report the representative cryptos of each community estimated by DI-SIM \citep{Rohe2016}. It shows that the representative cryptos of each community are pretty different when choosing different classification algorithms, indicating that the DI-SIM-estimated communities could have distinctive features compared to dynamic CASC-estimated communities.

\begin{table}[htp!]
	\caption{Representative Cryptos of Communities Estimated by Dynamic CASC and by DI-SIM.} \label{Tab_Community}
	
	\renewcommand*{\arraystretch}{1}
	{\small
		\begin{tabularx}{\textwidth}{@{\extracolsep{\fill}}lcccccc}
			\toprule
			& \multicolumn{3}{c}{Dynamic CASC} & \multicolumn{3}{c}{DI-SIM} \\ \cmidrule{2-4}\cmidrule{5-7}
			& 1 & 2 & 3 & 1 & 2 & 3 \\ 
			\midrule
			Community 1 & Bitcoin & Litecoin & Monero & Bitcoin & Bytecoin & Dogecoin \\
			& (BTC) & (LTC) & (XMR) & (BTC) & (BCN) & (DOGE) \\
			Community 2 & NEM   & Dash  & Syscoin & Dash  & BitcoinDark & BitBay \\
			& (XEM) & (DASH) & (SYS) & (DASH) & (BTCD) & (BAY) \\
			Community 3 & ReddCoin & Einsteinium & Crown & ReddCoin & Aeon  & OKCash \\
			& (RDD) & (EMC2) & (CRW) & (RDD) & (AEON) & (OK) \\
			Community 4 & Ripple & Ethereum & Augur & Ripple & Ethereum & Litecoin \\
			& (XRP) & (ETH) & (REP) & (XRP) & (ETH) & (LTC) \\
			\bottomrule
	\end{tabularx}
	}
	
	{\small NOTE: This table lists the top three cryptos in market capitalization of each community estimated by dynamic CASC and by DI-SIM on December 31, 2017. The crypto abbreviations are provided in the brackets below their names.}
\end{table}

To evaluate the classification efficiency under empirical settings, we define two important measures, the average within- and cross-community degrees:
\begin{align*}
	\textit{Within-Community Degrees} &= \dfrac{1}{T}\sum_{t = 1}^{T}\left[\dfrac{1}{N_C^2}\sum_{i \in \mathcal{C}}\sum_{j \in \mathcal{C}} A_{t}(i,j)\right], \\
	\textit{Cross-Community Degrees} &= \dfrac{1}{T}\sum_{t = 1}^{T}\left\lbrace \dfrac{1}{2N_C(N - N_C)}\sum_{i \in \mathcal{C}}\sum_{j \notin \mathcal{C}} \left[A_{t}(i,j) + A_{t}(j,i)\right]  \right\rbrace,
\end{align*}
where $N_C$ is the number of cryptos in community $\mathcal{C}$, and $N$ is the total number of cryptos. Intuitively, if the clustering method correctly classifies all cryptos, then the within-community degrees should be larger than the cross-community degrees, that is, the difference between the two measures should be positive. Table \ref{Tab_Degrees} summarizes the average within- and cross-community degrees of network linkages induced by the return cross-predictability, the technological similarities, and the joint similarities based on the dynamic CASC and the DI-SIM, respectively.

\begin{table}[htp!]
	\caption{Average Within- and Cross-Community Degrees} \label{Tab_Degrees}
	
	\renewcommand*{\arraystretch}{1}
	\setlength{\tabcolsep}{2pt}
	{\small
		\begin{tabularx}{\linewidth}{@{\extracolsep{\fill}}lS[table-format=1.2]S[table-format=1.2]S[table-format=1.2]S[table-format=1.2]S[table-format=2.1]S[table-format=2.1]S[table-format=1.2]S[table-format=2.2]S[table-format=2.1]S[table-format=2.1]S[table-format=1.2]S[table-format=1.2]}
			\toprule
			& \multicolumn{4}{c}{Return} & \multicolumn{4}{c}{Covariates} & \multicolumn{4}{c}{Combined} \\ \cmidrule{2-5}\cmidrule{6-9}\cmidrule{10-13}
			& {Within} & {Cross} & {Diff.} & {$t$-stat.} & {Within} & {Cross} & {Diff.} & {$t$-stat.} & {Within} & {Cross} & {Diff.} & {$t$-stat.} \\ \midrule
			\multicolumn{13}{l}{Panel A: Dynamic CASC} \\
			\quad Community 1 & 3.24  & 3.31  & -0.07 & -0.97 & 47.6  & 43.7  & 3.97  & 10.38 & 76.5  & 76.1  & 0.39  & 1.61 \\
			\quad Community 2 & 3.40  & 3.38  & 0.03  & 0.30  & 47.4  & 43.3  & 4.03  & 11.60 & 76.7  & 76.3  & 0.39  & 1.51 \\
			\quad Community 3 & 3.42  & 3.33  & 0.10  & 1.24  & 48.0  & 43.5  & 4.53  & 13.70 & 76.9  & 76.3  & 0.58  & 2.31 \\
			\quad Community 4 & 3.61  & 3.42  & 0.19  & 2.55  & 48.0  & 43.4  & 4.53  & 9.83  & 75.6  & 75.9  & -0.32 & -1.24 \\
			\midrule
			\multicolumn{13}{l}{Panel B: DI-SIM} \\
			\quad Community 1 & 2.58  & 3.04  & -0.46 & -3.74 & 44.4  & 44.2  & 0.20  & 0.01  & 72.0  & 76.1  & -4.09 & -6.26 \\
			\quad Community 2 & 2.38  & 2.83  & -0.45 & -6.16 & 52.4  & 47.2  & 5.20  & 14.30 & 88.9  & 81.0  & 7.96  & 9.63 \\
			\quad Community 3 & 2.27  & 2.62  & -0.35 & -2.50 & 38.3  & 41.2  & -2.00 & -3.09 & 74.1  & 75.5  & -1.46 & -3.84 \\
			\quad Community 4 & 5.82  & 4.21  & 1.61  & 6.36  & 53.5  & 45.9  & 7.60  & 41.24 & 78.7  & 77.4  & 1.30  & 5.73 \\
			\bottomrule
		\end{tabularx}
	}
	
	{\small NOTE: This table reports the average within- and cross-community degrees (in \%) for each community. Panel A reports the results based on the communities estimated by dynamic CASC and Panel B presents the results based on the communities estimated by DI-SIM. The first four columns report the results for the network linkages based on return cross-predictability. The middle four columns report results for the network linkages based on technological similarity. The last four columns present the results for the network linkages based on the combined information. The $t$-statistics are based on the Newey--West robust standard errors with 4 lags. The sample period ranges from January 1, 2016 to December 31, 2018.}
\end{table}

Consistent with our simulations, the dynamic CASC method shows superior classification efficiency than that of DI-SIM. It delivers more positive and significant differences in average degrees for all three types of network linkages. More interestingly, even though the DI-SIM only uses the return information to classify crypto, its performance for classifying return linkages is worse than that of the dynamic CASC, which needs to balance the return and technology information. This striking result proves that technological similarity assists extracting community information from return dynamics by emphasizing the return cross-predictability cause by common technological risk, and thus improves the classification.

\section{Economic Implications} \label{SecEcoMean}

In this section, we conduct three empirical tests to assess the economic contributions of crypto communities to the cryptos market. First, we examine whether the investors can diversify risks better by investing cryptos across different communities than those within the same community. Second, we explore the asset pricing implication of information propagation effects on within-community cryptos by constructing a cross-sectional portfolio that implements an inter-crypto momentum trading strategy. Finally, we dissect the portfolio return on several behavioral factors as falsification tests to support our informational mechanism.

\subsection{Risk diversification} 

Diversification is one of the major components of investment decision making under risk and uncertainty. Its underlying philosophy is to minimize both the probability of portfolio loss and its severity through multilateral insurance, in which each asset is insured by the remaining assets. Seminal papers by \cite{Markowitz1952} and \cite{Roy1952} mathematically formalized the idea of diversification of investment and showed that diversification can reduce risk without changing the expected portfolio returns. According to \cite{Rubinstein2002}, the key ingredient of risk diversification is the return covariance between risky assets that to be selected into the portfolio. The less positively correlated the assets, the lower the probability that the assets suffer losses simultaneously in the same proportion, and the better the protection offered by this multilateral insurance, namely diversification \citep{Koumou2020}. 

To testify that crypto communities facilitate risk diversification, we compare the out-of-sample average pairwise return correlations of cryptos within and across the communities. Specifically, we first estimate the crypto communities on each trading day and compute the pairwise Pearson correlation coefficients using the next 7- (1-week) and 30-day (1-month) crypto returns. Then, we calculate the average within-community correlations for each community by averaging the correlation coefficients of the crypto pairs whose members both belong to the focal community. We also obtain the average cross-community correlations for each community by averaging the correlation coefficients of the crypto pairs where only one of the members belongs to the focal community. Ideally, the average within-community correlations should be significantly higher than the average cross-community correlations for all communities because the cryptos within the same community share common sources of price and technological shocks. For comparison, we also present the results with the communities being estimated by DI-SIM.

\begin{table}[htp!]
	\caption{Within- and Cross-Community Average Return Correlations} \label{Tab_corr}
	
	\renewcommand*{\arraystretch}{1}
	{\small
		\begin{tabularx}{\linewidth}{@{\extracolsep{\fill}}lS[table-format=1.3]S[table-format=1.3]S[table-format=1.3]S[table-format=2.2]S[table-format=1.3]S[table-format=1.3]S[table-format=1.3]S[table-format=2.2]}
			\toprule
			& \multicolumn{4}{c}{7-Day Return Correlations} & \multicolumn{4}{c}{30-Day Return Correlations} \\ 
			\cmidrule{2-5}\cmidrule{6-9}
			& {Within} & {Cross} & {Diff.} & {$t$-stat.} & {Within} & {Cross} & {Diff.} & {$t$-stat.}\\ 
			\midrule
			\multicolumn{9}{l}{Panel A: Dynamic CASC} \\
			\quad Community 1 & 0.366 & 0.337 & 0.029 & 7.55  & 0.345 & 0.317 & 0.028 & 9.00 \\
			\quad Community 2 & 0.369 & 0.337 & 0.032 & 8.10  & 0.350 & 0.317 & 0.033 & 9.87 \\
			\quad Community 3 & 0.381 & 0.341 & 0.040 & 9.46  & 0.358 & 0.318 & 0.040 & 10.68 \\
			\quad Community 4 & 0.381 & 0.340 & 0.041 & 10.99 & 0.362 & 0.319 & 0.042 & 11.64 \\
			\midrule
			\multicolumn{9}{l}{Panel B: DI-SIM} \\
			\quad Community 1 & 0.336 & 0.328 & 0.008 & 1.24  & 0.322 & 0.309 & 0.014 & 0.85 \\
			\quad Community 2 & 0.295 & 0.307 & -0.012 & -1.02 & 0.268 & 0.284 & -0.016 & -0.43 \\
			\quad Community 3 & 0.562 & 0.388 & 0.173 & 10.15 & 0.550 & 0.369 & 0.181 & 4.34 \\
			\quad Community 4 & 0.296 & 0.310 & -0.014 & -1.78 & 0.266 & 0.284 & -0.018 & -1.48 \\
			\bottomrule
	    \end{tabularx}
	}
	
	{\small NOTE: This table reports the out-of-sample average return correlations of the cryptos within and across communities. Panel A presents the results for communities estimated by dynamic CASC and Panel B reports the results for communities estimated by DI-SIM. We first calculate the pairwise return correlations between cryptos in each period using the 7- and 30-day future returns. Then, we compute the within-community (cross-community) average return correlations using the correlation coefficients between the cryptos from the same community (different communities). The $t$-statistics reported are based on the Newey--West robust standard errors with 4 lags. The sample period spans from January 1, 2016 to December 31, 2018.}
	
\end{table}

Table \ref{Tab_corr} summarizes the results, and three findings are obtained. First, the return correlations between cryptos within the same community are, on average, significantly higher than those across communities. For example, in Panel A, the average within- and cross-community 30-day-ahead return correlations for Community 1 are 34.5 \% and 31.7\%, respectively. The correlation difference is 2.8\%, which is significant at the 1\% level with a Newey--West adjusted $t$-statistic of 9.0. Economically, the substantial reduction in return correlations suggests that investors can find attractive upside and diversification possibilities through allocating portfolio weights on cryptos from different communities. 

Second, the dynamic CASC-estimated communities exhibit better risk diversification performance than the DI-SIM-estimated communities. In particular, all the dynamic CASC-estimated communities possess significantly higher within-community return correlations than cross-community return correlations. By contrast, only one DI-SIM community (Community 3) shows significantly positive differences between the within- and cross-community return correlations. The consistent correlation patterns across the dynamic CASC-estimated communities make it more convenient for investors to form a diversified crypto portfolio following the cross-community principle. This is also empirical evidence that supports the validity of the dynamic CASC algorithm for community detection in time-varying networks.  

Finally, the differences between the within- and cross-community return correlations are robust to the length of the evaluation window. Specifically, the changes between the 7- and 30-day return correlation differences for dynamic CASC-estimated communities are around 0.1\%. From practitioners' perspective, a stable return correlation pattern allows the investors to make flexible investment decisions on diversifying short- or long-term risks. 

In summary, the dynamic CASC-estimated communities have meaningful implications for investors' portfolio choices and risk diversification through delivering lower cross-community return correlations robustly across all communities and over different horizons.  

\subsection{Momentum spillover} \label{SubsecMomentum}

Since information diffuses gradually across investors \citep{HongStein1999, HongLimStein2000}, the price of an asset may react sluggishly to the arrival of relevant news/information about other assets. This suggests that assets having fundamental similarities will have momentum spillovers, wherein past return of one asset predicts the returns of assets linked to it. Numerous papers have verified such spillovers in the equity market by using a variety of proxies for inter-firm linkages \citep[see, e.g.,][]{Moskowitz1999, Cohen2008, Menzly2010, Lee2019, Ali2020, Parsons2020}. Similarly, the estimated communities in the cryptos market are designed to optimally capture the information propagation and technological similarities between cryptos. Therefore, we expect the momentum spillover effects to exist within each community in the cryptos market. 

We construct a long-short portfolio that implements an inter-crypto momentum trading strategy to formally test the momentum spillover effects and justify the information propagation mechanism between similar cryptos in the network. The portfolio is constructed in three steps. First, for each crypto $i$ on day $t$, we compute the average return of all the other cryptos within the same community ($R_{it}$) to form a trading signal: 
\begin{align*}
    R_{it} = \dfrac{1}{|\mathcal{C}_{-i}|}\sum_{j \in \mathcal{C}_{-i}} r_{jt},
\end{align*}
where $\mathcal{C}_{-i}$ denotes the set of other cryptos that are in the same community as crypto $i$. Second, we sort cryptos into four quartiles according to the trading signal and label the cryptos in the top/bottom quartile as the winner/loser cryptos. Finally, we buy the winner cryptos and sell the loser cryptos with equal positions at the end of day $t$ and rebalance the trading positions at the end of next trading day.

We examine the portfolio return up to 7 days after the signal formation period. The hypothesis is that the information propagation mechanism should predict a drifting return pattern instead of a return reversal. Ideally, the winner/loser cryptos will generate higher/lower returns as the positive/negative shocks transmit through the network linkages between similar cryptos. 

\begin{table}[htp!]
	\caption{Average Returns of the Inter-Crypto Momentum Portfolio over the Horizons} \label{Tab_Portfolio}
	\renewcommand*{\arraystretch}{1}
	{\small
		\begin{tabularx}{\linewidth}{@{\extracolsep{\fill}}lS[table-format=1.2]S[table-format=1.2]S[table-format=1.2]S[table-format=1.2]S[table-format=1.2]S[table-format=1.2]S[table-format=1.2]}
			\toprule
			& $\textit{Ret}_{t+1}$ & $\textit{Ret}_{t+2}$ & $\textit{Ret}_{t+3}$ & $\textit{Ret}_{t+4}$ & $\textit{Ret}_{t+5}$ & $\textit{Ret}_{t+6}$ & $\textit{Ret}_{t+7}$ \\
			\midrule
			Loser & 0.06 & 0.55 & 0.55 & 0.51 & 0.69 & 0.58 & 0.61 \\
			2 & 0.38 & 0.53 & 0.58 & 0.53 & 0.58 & 0.49 & 0.51 \\
			3 & 0.75 & 0.63 & 0.50 & 0.60 & 0.45 & 0.49 & 0.63 \\
			Winner & 1.14 & 0.59 & 0.67 & 0.65 & 0.56 & 0.73 & 0.52 \\
			Winner$-$Loser & 1.08 & 0.05 & 0.12 & 0.14 & -0.13 & 0.16 & -0.09 \\
			$t$-statistic & 12.24 & 0.55 & 1.30 & 1.56 & -1.37 & 1.89 & -1.05 \\
			\bottomrule
		\end{tabularx}
	}
	
	{\small NOTE: This table reports the average future returns (in \%) of the inter-crypto momentum portfolio. The $t$-statistics are computed based on Newey--West robust standard errors with 4 lags. The sample period spans from January 1, 2016 to December 31, 2018.}
\end{table}

Table \ref{Tab_Portfolio} reports the results. Consistent with our hypothesis, the long-short portfolio generates a one-day-ahead average daily return of 1.08\% ($t$-stat. 12.24), statistically significant at the 1\% level. The results exhibit a strong monotonic relationship between past connected crypto returns and future crypto returns. More importantly, the positive portfolio return does not revert over the horizons. The portfolio analysis confirms that our network model and community structure effectively capture the information propagation between cryptos and provide economically significant investment opportunities. 

\subsection{Behavioral interpretation} \label{SubsecBehavioral}

The momentum spillover effect evidences strongly that return cross-predictability and technological similarities identify fundamental relationships between cryptos. However, the information propagation mechanism is not the only possibility that could generate the momentum spillover effect. Investors may be slow to carry information across cryptos due to a behavioral bias or constraint. Therefore, we explore three classical behavioral mechanisms commonly used to explain cross-sectional return anomalies (i.e., limit-to-arbitrage, investor attention, and economic policy uncertainty) to see if they can provide significant explanatory power for the momentum spillover effects in the cryptos market.

The first behavioral explanation is the limit-to-arbitrage. Cryptos market exhibits periods of large, recurrent arbitrage opportunities across exchanges \citep{Makarov2020}. According to \cite{ShleiferVishny1997}, sophisticated investors would quickly eliminate any return predictability arising from anomalies in a frictionless market without impediments to arbitrage. \cite{HouMoskowitz2005} documented that the asset prices will delay to respond to information due to the impact of market frictions. Therefore, when the cryptos market is more frictional, the information travels slower in the cryptos network, and delayed cryptos will not exhibit price momentum. Analogous to market friction measures in the equity market \citep{Brunnermeier2008, Brunnermeier2009}, we employ the cryptos market volatility index (VCRIX) constructed by \cite{Kim2019} and TED spread (TED) as the market-friction indicators. Specifically, the volatility index affects the market liquidity, and TED determines the investors' funding constraints. We split the sample into two subperiods (high and low) by cutting at the median of the market friction indicators. Then, we calculate and report the average portfolio returns within each subperiods in Table \ref{Tab_Frictions}.

\begin{table}[htp!]
	\caption{Market Frictions and the Inter-Crypto Momentum Portfolio Returns} \label{Tab_Frictions}
	\renewcommand*{\arraystretch}{1}
	{\small
		\begin{tabularx}{\linewidth}{@{\extracolsep{\fill}}lS[table-format=1.2]S[table-format=1.2]S[table-format=1.2]S[table-format=1.2]S[table-format=1.2]S[table-format=1.2]}
			\toprule
			& \multicolumn{3}{c}{VCRIX} & \multicolumn{3}{c}{TED} \\ 
			\cmidrule{2-4}\cmidrule{5-7}
			& {Low} & {High} & {High$-$Low} & {Low} & {High} & {High$-$Low} \\
			\midrule
			Loser   & -0.03 & 0.10 & 0.13 & 0.17 & -0.39 & -0.56 \\
			2     & 0.43 & 0.36 & -0.06 & 0.58 & -0.05 & -0.63 \\
			3     & 0.81 & 0.73 & -0.08 & 1.03 & -0.02 & -1.05 \\
			Winner  & 1.27 & 1.08 & -0.19 & 1.42 & 0.46 & -0.96 \\
			Winner$-$Loser & 1.30 & 0.98 & -0.32 & 1.24 & 0.85 & -0.39 \\
			$t$-statistic & 8.98 & 8.86 & -1.59 & 9.65 & 5.12 & -1.20 \\
			\bottomrule
		\end{tabularx}
	}
	
	{\small NOTE: This table reports the inter-crypto momentum portfolio returns (in \%) in subperiods of high and low market frictions. $t$-statistics are computed based on Newey--West robust standard errors with 4 lags. The sample period spans from January 1, 2016 to December 31, 2018.}
\end{table}

Evidently, the average portfolio returns do not exhibit significant differences across the high and low episodes of market frictions. The average portfolio returns are generally more positive, albeit insignificant, when market frictions are lower, which is consistent with \cite{HouMoskowitz2005}. In summary, the momentum spillover effect exists regardless of market frictions, indicating that the limit-to-arbitrage mechanism does not explain our crypto anomalies.  

The second behavioral explanation is investor attention. \cite{PengXiong2006} pointed out that investors have limited attention and are subject to overconfidence. Therefore, they need to process information to infer the value of cryptos and overestimate the precision of the acquired information due to overconfidence. \cite{GPTT2018} show that investor attention could spill over along the network linkages. Together, the limited attention, attention spillover, and overconfidence lead to an overreaction-induced momentum effect in the cryptos network. Following \cite{LiuTsyvinski2021}, we proxy the market-wide investor attention by using the daily Google search volume for the words ``crypto" and ``cryptocurrency," subtracting its 1-month historical average. Considering the leading position of Bitcoin in the cryptos market \citep{Griffin2020}, we also proxy the market-wide investor attention with the investor attention paid to Bitcoin, that is, the Google search volume for the word ``Bitcoin." Then, we repeat the test in Table \ref{Tab_Frictions} by splitting the sample into high and low attention subperiods. 

\begin{table}[htp!]
	\caption{Investor Attention and the Inter-Crypto Momentum Portfolio Returns} \label{Tab_Attention}
	\renewcommand*{\arraystretch}{1}
	{\small
		\begin{tabularx}{\linewidth}{@{\extracolsep{\fill}}lS[table-format=1.2]S[table-format=1.2]S[table-format=1.2]S[table-format=1.2]S[table-format=1.2]S[table-format=1.2]}
			\toprule
			& \multicolumn{3}{c}{Bitcoin Attention} & \multicolumn{3}{c}{Cryptos Attention} \\ \cmidrule{2-4}\cmidrule{5-7}
			& {Low} & {High} & {High$-$Low} & {Low} & {High} & {High$-$Low} \\
			\midrule
			Loser   & -0.02 & 0.20  & 0.22  & 0.33  & -0.44 & -0.77 \\
			2     & 0.24  & 0.66  & 0.42  & 0.71  & -0.20 & -0.90 \\
			3     & 0.48  & 1.26  & 0.78  & 0.95  & 0.41  & -0.53 \\
			Winner  & 1.02  & 1.36  & 0.33  & 1.43  & 0.63  & -0.80 \\
			Winner$-$Loser & 1.04  & 1.15  & 0.11  & 1.09  & 1.06  & -0.03 \\
			$t$-statistic & 9.61 & 7.54 & 0.55 & 9.58 & 7.70 & -0.16 \\
			\bottomrule
		\end{tabularx}
	}
	
	{\small NOTE: This table reports the inter-crypto momentum portfolio returns (in \%) in subperiods of high and low investor attention. $t$-statistics are computed based on Newey--West robust standard errors with 4 lags. The sample period spans from January 1, 2016 to December 31, 2018.}
\end{table}

The results are presented in Table \ref{Tab_Attention}. Clearly, the inter-crypto momentum portfolio returns do not exhibit significant differences in high and low episodes of investor attention. For example, the portfolio generates an average daily return of 1.15\% during low Bitcoin attention episodes while retaining a 1.04\% average daily return for the high ones. The difference is merely 0.11\% and statistically insignificant. The portfolio return patterns under high and low cryptos attention episodes are similar to that of Bitcoin attention. These results imply that investor attention is not the mechanism that drives the momentum spillover effects in the cryptos market.

Finally, observing that government policy and crypto prices have a strong synchronization \citep{Demir2018, ChengYen2020}, it is natural to wonder whether the portfolio returns are related to underlying policy uncertainty. By representing uncertainty with investor belief dispersion (disagreement), \cite{Atmaz2018} theoretically showed that higher disagreement leads to higher average bias and more overvaluation, thereby implying a negative disagreement--return relationship, as empirically documented in \cite{Brogaard2015}. Based on \cite{Daniel1998}, when there are misinformed investors who underreact to firm-specific news, the price momentum will occur. Therefore, more uncertainty leads to price momentum. To examine this hypothesis, we employ the U.S. and Global economic policy uncertainty indices (EPU and GEPU) introduced by \cite{Baker2016} to examine the portfolio performance in high and low uncertainty subperiods, respectively. 

\begin{table}[htp!]
	\caption{Economic Policy Uncertainty and the Inter-Crypto Momentum Portfolio Returns} \label{Tab_Uncertainty}
	\renewcommand*{\arraystretch}{1}
	{\small
		\begin{tabularx}{\linewidth}{@{\extracolsep{\fill}}lS[table-format=1.2]S[table-format=1.2]S[table-format=1.2]S[table-format=1.2]S[table-format=1.2]S[table-format=1.2]}
			\toprule
			& \multicolumn{3}{c}{EPU} & \multicolumn{3}{c}{GEPU} \\ \cmidrule{2-4}\cmidrule{5-7}
			& {Low} & {High} & {High$-$Low} & {Low} & {High} & {High$-$Low} \\
			\midrule
			Loser   & -0.30 & 0.38  & 0.68  & -2.01 & 0.16  & 2.17 \\
            2     & -0.10 & 0.83  & 0.93  & -1.76 & 0.49  & 2.25 \\
            3     & 0.51  & 0.98  & 0.46  & -1.61 & 0.87  & 2.49 \\
            Winner  & 0.71  & 1.53  & 0.83  & -1.07 & 1.25  & 2.32 \\
            Winner$-$Loser & 1.01  & 1.15  & 0.14  & 0.94  & 1.09  & 0.14 \\
            $t$-statistic & 7.70  & 8.88  & 0.77  & 3.07  & 11.91 & 0.33 \\
			\bottomrule
		\end{tabularx}
	}
	
	{\small NOTE: This table reports the inter-crypto momentum portfolio returns (in \%) in subperiods of high and low U.S. economic policy uncertainty (EPU) and global currency policy uncertainty (GEPU-Currency). $t$-statistics are computed based on Newey--West robust standard errors with 4 lags. The sample period spans from January 1, 2016 to December 31, 2018.}
\end{table}

Table \ref{Tab_Uncertainty} summarizes the results. It is evident that the portfolio returns are statistically indifferent across high and low uncertainty subperiods. Specifically, in high EPU episodes, the portfolio return is 1.15\% bps, whereas it is 1.01\% in low EPU episodes. The difference between the portfolio returns is 0.14\% with a Newey--West adjusted $t$-statistic of 0.77. When replacing the EPU with the GEPU indicator, the difference in portfolio returns remains the same magnitude and statistically insignificant. Both results indicate that policy uncertainty or investor disagreement has no impact on the momentum spillover effects in the cryptos market.

In summary, by dissecting the portfolio returns with limit-to-arbitrage, investor attention, and economic policy uncertainty factors, we falsify the potential behavioral mechanisms that would generate momentum spillover effects in the cryptos market. The joint evidence in the current section and Section \ref{SubsecMomentum} strongly supports that the momentum spillover effects result from information propagation between fundamentally similar cryptos.

\section{Conclusions} \label{SecConcl}

In this paper, we examine the market segmentation problem and uncover the information propagation mechanism in the cryptos market by constructing a time-varying network of cryptos that combines return cross-predictability and technological similarities. To analyze this network, we propose a dynamic CASC algorithm for community detection and prove its classification consistency. We demonstrate the performance of our methodology on simulated networks and show that it generally performs better than other state-of-the-art methods. Empirically, we apply dynamic CASC to the cryptos network and obtain four latent communities. Based on the estimated communities, we show that investors can better diversify risk through investing in cross-community cryptos and make a daily profit of 1.08\% through an inter-crypto momentum trading strategy within the communities. Finally, we dissect the portfolio returns on several behavioral mechanisms, namely limit-to-arbitrage, investor attention, and economic policy uncertainty, but find no explanatory power. These results confirm that information propagation rather than behavioral bias drives our findings.

This paper mainly focuses on the uniform consistency of the classification method. Recent literature has developed inspiring framework for statistical inference on membership profiles as well as on estimation of low-rank matrices \citep[e.g.,][]{Fan2019, ChenFan2019}. It would be an interesting and critical research direction to investigate how to accommodate dynamic networks and time-varying membership in these frameworks. We will leave them to future research.

\paragraph{Supplementary materials.}
The supplementary material contains details about tuning parameter choice, proofs of Theorem \ref{thm02} in Section \ref{SecMethod}.

\paragraph{Acknowledgements.}
The authors sincerely thank the editor, Professor Jianqing Fan, an associate editor, and two anonymous referees for their helpful and insightful comments that have significantly improved the article. We are also grateful to Ying Chen, Dashan Huang, Oliver Linton, Peter C.~B.~Phillips, Shuyang Sheng, Vladimir Spokoiny, Liangjun Su, Jun Yu, Yichong Zhang, and all participants of the 2017 ``Crypto-Currencies in a Digital Economy" workshop at Humboldt-Universit{\"a}t zu Berlin, the 2018 China Meeting of the Econometric Society, the 2019 SH3 Conference on Econometrics, the 2019 STAT of ML Conference, and the 2019 SoFiE 12th Annual Conference for their helpful discussions and comments. 

\paragraph{Funding.}
 Financial support of the European Union’s Horizon 2020 research and innovation program “FIN- TECH: A Financial supervision and Technology compliance training programme” under the grant agreement No 825215 (Topic: ICT-35-2018, Type of action: CSA), the European Cooperation in Science \& Technology COST Action grant CA19130 - Fintech and Artificial Intelligence in Finance - Towards a transparent financial industry, the Deutsche Forschungsgemeinschaft’s IRTG 1792 grant, the Yushan Scholar Program of Taiwan and the Czech Science Foundation’s grant no. 19-28231X / CAS: XDA 23020303 are greatly acknowledged.

\nocite{}
\bibliographystyle{chicago}
\bibliography{SBMBib}

\begin{thebibliography}{}

\bibitem[\protect\citeauthoryear{Abadi and Brunnermeier}{Abadi and
  Brunnermeier}{2018}]{Abadi2018}
Abadi, J. and M.~Brunnermeier (2018).
\newblock Blockchain economics.
\newblock {\em NBER Working Paper\/}~{\em 25407}.

\bibitem[\protect\citeauthoryear{Abbe, Fan, and Wang}{Abbe
  et~al.}{2020}]{Abbe2020}
Abbe, E., J.~Fan, and K.~Wang (2020).
\newblock An $\ell_p$ theory of {PCA} and spectral clustering.
\newblock {\em arXiv preprint arXiv:2006.14062\/}.

\bibitem[\protect\citeauthoryear{Acemoglu, Ozdaglar, and
  Tahbaz-Salehi}{Acemoglu et~al.}{2015}]{Acemoglu2015}
Acemoglu, D., A.~Ozdaglar, and A.~Tahbaz-Salehi (2015).
\newblock Systemic risk and stability in financial networks.
\newblock {\em American Economic Review\/}~{\em 105\/}(2), 564--608.

\bibitem[\protect\citeauthoryear{Ali and Hirshleifer}{Ali and
  Hirshleifer}{2020}]{Ali2020}
Ali, U. and D.~Hirshleifer (2020).
\newblock Shared analyst coverage: unifying momentum spillover effects.
\newblock {\em Journal of Financial Economics\/}~{\em 136\/}(3), 649--675.

\bibitem[\protect\citeauthoryear{Altman}{Altman}{1990}]{Altman1990}
Altman, N.~S. (1990).
\newblock Kernel smoothing of data with correlated errors.
\newblock {\em Journal of the American Statistical Association\/}~{\em
  85\/}(411), 749--759.

\bibitem[\protect\citeauthoryear{Amini, Chen, Bickel, and Levina}{Amini
  et~al.}{2013}]{Amini2013}
Amini, A.~A., A.~Chen, P.~J. Bickel, and E.~Levina (2013).
\newblock Pseudo-likelihood methods for community detection in large sparse
  networks.
\newblock {\em Annals of Statistics\/}~{\em 41\/}(4), 2097--2122.

\bibitem[\protect\citeauthoryear{Amini and Levina}{Amini and
  Levina}{2018}]{AminiLevina2018}
Amini, A.~A. and E.~Levina (2018).
\newblock On semidefinite relaxations for the block model.
\newblock {\em Annals of Statistics\/}~{\em 46\/}(1), 149--179.

\bibitem[\protect\citeauthoryear{Aobdia, Caskey, and Ozel}{Aobdia
  et~al.}{2014}]{Aobdia2014}
Aobdia, D., J.~Caskey, and N.~B. Ozel (2014).
\newblock Inter-industry network structure and the cross-predictability of
  earnings and stock returns.
\newblock {\em Review of Accounting Studies\/}~{\em 19\/}(3), 1191--1224.

\bibitem[\protect\citeauthoryear{Atmaz and Basak}{Atmaz and
  Basak}{2018}]{Atmaz2018}
Atmaz, A. and S.~Basak (2018).
\newblock Belief dispersion in the stock market.
\newblock {\em Journal of Finance\/}~{\em 73\/}(3), 1225--1279.

\bibitem[\protect\citeauthoryear{Baker, Bloom, and Davis}{Baker
  et~al.}{2016}]{Baker2016}
Baker, S.~R., N.~Bloom, and S.~J. Davis (2016).
\newblock Measuring economic policy uncertainty.
\newblock {\em Quarterly Journal of Economics\/}~{\em 131\/}(4), 1593--1636.

\bibitem[\protect\citeauthoryear{Biais, Bisiere, Bouvard, and Casamatta}{Biais
  et~al.}{2019}]{Biais2019}
Biais, B., C.~Bisiere, M.~Bouvard, and C.~Casamatta (2019).
\newblock The blockchain folk theorem.
\newblock {\em Review of Financial Studies\/}~{\em 32\/}(5), 1662--1715.

\bibitem[\protect\citeauthoryear{Bickel, Choi, Chang, and Zhang}{Bickel
  et~al.}{2013}]{Bickel2013}
Bickel, P., D.~Choi, X.~Chang, and H.~Zhang (2013).
\newblock Asymptotic normality of maximum likelihood and its variational
  approximation for stochastic blockmodels.
\newblock {\em Annals of Statistics\/}~{\em 41\/}(4), 1922--1943.

\bibitem[\protect\citeauthoryear{Bickel and Chen}{Bickel and
  Chen}{2009}]{BickelChen2009}
Bickel, P.~J. and A.~Chen (2009).
\newblock A nonparametric view of network models and newman--girvan and other
  modularities.
\newblock {\em Proceedings of the National Academy of Sciences\/}~{\em
  106\/}(50), 21068--21073.

\bibitem[\protect\citeauthoryear{Binkiewicz, Vogelstein, and Rohe}{Binkiewicz
  et~al.}{2017}]{Binkiewicz2017}
Binkiewicz, N., J.~T. Vogelstein, and K.~Rohe (2017).
\newblock Covariate-assisted spectral clustering.
\newblock {\em Biometrika\/}~{\em 104\/}(2), 361--377.

\bibitem[\protect\citeauthoryear{Brogaard and Detzel}{Brogaard and
  Detzel}{2015}]{Brogaard2015}
Brogaard, J. and A.~Detzel (2015).
\newblock The asset-pricing implications of government economic policy
  uncertainty.
\newblock {\em Management Science\/}~{\em 61\/}(1), 3--18.

\bibitem[\protect\citeauthoryear{Brunnermeier, Nagel, and
  Pedersen}{Brunnermeier et~al.}{2008}]{Brunnermeier2008}
Brunnermeier, M.~K., S.~Nagel, and L.~H. Pedersen (2008).
\newblock Carry trades and currency crashes.
\newblock {\em NBER Macroeconomics Annual\/}~{\em 23\/}(1), 313--348.

\bibitem[\protect\citeauthoryear{Brunnermeier and Pedersen}{Brunnermeier and
  Pedersen}{2009}]{Brunnermeier2009}
Brunnermeier, M.~K. and L.~H. Pedersen (2009).
\newblock Market liquidity and funding liquidity.
\newblock {\em Review of Financial Studies\/}~{\em 22\/}(6), 2201--2238.

\bibitem[\protect\citeauthoryear{Budish}{Budish}{2018}]{Budish2018}
Budish, E. (2018).
\newblock The economic limits of bitcoin and the blockchain.
\newblock {\em NBER Working Paper\/}~{\em 24717}.

\bibitem[\protect\citeauthoryear{Cai and Li}{Cai and Li}{2015}]{CaiLi2015}
Cai, T.~T. and X.~Li (2015).
\newblock Robust and computationally feasible community detection in the
  presence of arbitrary outlier nodes.
\newblock {\em Annals of Statistics\/}~{\em 43\/}(3), 1027--1059.

\bibitem[\protect\citeauthoryear{Celisse, Daudin, and Pierre}{Celisse
  et~al.}{2012}]{Celisse2012}
Celisse, A., J.-J. Daudin, and L.~Pierre (2012).
\newblock Consistency of maximum-likelihood and variational estimators in the
  stochastic block model.
\newblock {\em Electronic Journal of Statistics\/}~{\em 6}, 1847--1899.

\bibitem[\protect\citeauthoryear{Chen, Härdle, and Okhrin}{Chen
  et~al.}{2019}]{ChenHardleOkhrin2019}
Chen, C. Y.-H., W.~K. Härdle, and Y.~Okhrin (2019).
\newblock Tail event driven networks of {SIFIs}.
\newblock {\em Journal of Econometrics\/}~{\em 208\/}(1), 282--298.

\bibitem[\protect\citeauthoryear{Chen and Lei}{Chen and
  Lei}{2017}]{ChenLei2017}
Chen, K. and J.~Lei (2017).
\newblock Network cross-validation for determining the number of communities in
  network data.
\newblock {\em Journal of the American Statistical Association\/}, 1--11.

\bibitem[\protect\citeauthoryear{Chen, Fan, Ma, and Yan}{Chen
  et~al.}{2019}]{ChenFan2019}
Chen, Y., J.~Fan, C.~Ma, and Y.~Yan (2019).
\newblock Inference and uncertainty quantification for noisy matrix completion.
\newblock {\em Proceedings of the National Academy of Sciences\/}~{\em
  116\/}(46), 22931--22937.

\bibitem[\protect\citeauthoryear{Chen and Xu}{Chen and Xu}{2016}]{ChenXu2016}
Chen, Y. and J.~Xu (2016).
\newblock Statistical-computational tradeoffs in planted problems and submatrix
  localization with a growing number of clusters and submatrices.
\newblock {\em Journal of Machine Learning Research\/}~{\em 17\/}(1), 882--938.

\bibitem[\protect\citeauthoryear{Cheng and Yen}{Cheng and
  Yen}{2020}]{ChengYen2020}
Cheng, H.-P. and K.-C. Yen (2020).
\newblock The relationship between the economic policy uncertainty and the
  cryptocurrency market.
\newblock {\em Finance Research Letters\/}~{\em 35}, 101308.

\bibitem[\protect\citeauthoryear{Cohen and Frazzini}{Cohen and
  Frazzini}{2008}]{Cohen2008}
Cohen, L. and A.~Frazzini (2008).
\newblock Economic links and predictable returns.
\newblock {\em Journal of Finance\/}~{\em 63\/}(4), 1977--2011.

\bibitem[\protect\citeauthoryear{Cong, He, and Li}{Cong
  et~al.}{2021}]{CongHeLi2021}
Cong, L.~W., Z.~He, and J.~Li (2021).
\newblock Decentralized mining in centralized pools.
\newblock {\em Review of Financial Studies\/}~{\em 34\/}(3), 1191--1235.

\bibitem[\protect\citeauthoryear{Cong, Li, and Wang}{Cong
  et~al.}{2021a}]{Cong2021a}
Cong, L.~W., Y.~Li, and N.~Wang (2021a).
\newblock Token-based platform finance.
\newblock {\em Journal of Financial Economics,\/}~{\em forthcoming}.

\bibitem[\protect\citeauthoryear{Cong, Li, and Wang}{Cong
  et~al.}{2021b}]{Cong2021b}
Cong, L.~W., Y.~Li, and N.~Wang (2021b).
\newblock Tokenomics: dynamic adoption and valuation.
\newblock {\em Review of Financial Studies\/}~{\em 34\/}(3), 1105--1155.

\bibitem[\protect\citeauthoryear{Daniel, Hirshleifer, and Subrahmanyam}{Daniel
  et~al.}{1998}]{Daniel1998}
Daniel, K., D.~Hirshleifer, and A.~Subrahmanyam (1998).
\newblock Investor psychology and security market under-and overreactions.
\newblock {\em Journal of Finance\/}~{\em 53\/}(6), 1839--1885.

\bibitem[\protect\citeauthoryear{Decelle, Krzakala, Moore, and
  Zdeborov{\'a}}{Decelle et~al.}{2011}]{Decelle2011}
Decelle, A., F.~Krzakala, C.~Moore, and L.~Zdeborov{\'a} (2011).
\newblock Asymptotic analysis of the stochastic block model for modular
  networks and its algorithmic applications.
\newblock {\em Physical Review E\/}~{\em 84\/}(6), 066106.

\bibitem[\protect\citeauthoryear{Demir, Gozgor, Lau, and Vigne}{Demir
  et~al.}{2018}]{Demir2018}
Demir, E., G.~Gozgor, C.~K.~M. Lau, and S.~A. Vigne (2018).
\newblock Does economic policy uncertainty predict the bitcoin returns? {An}
  empirical investigation.
\newblock {\em Finance Research Letters\/}~{\em 26}, 145--149.

\bibitem[\protect\citeauthoryear{Deshpande, Sen, Montanari, and
  Mossel}{Deshpande et~al.}{2018}]{Deshpande2018}
Deshpande, Y., S.~Sen, A.~Montanari, and E.~Mossel (2018).
\newblock Contextual stochastic block models.
\newblock In {\em Advances in Neural Information Processing Systems},
  Volume~31.

\bibitem[\protect\citeauthoryear{Detzel, Liu, Strauss, Zhou, and Zhu}{Detzel
  et~al.}{2021}]{Detzel2021}
Detzel, A., H.~Liu, J.~Strauss, G.~Zhou, and Y.~Zhu (2021).
\newblock Learning and predictability via technical analysis: evidence from
  bitcoin and stocks with hard-to-value fundamentals.
\newblock {\em Financial Management\/}~{\em 50\/}(1), 107--137.

\bibitem[\protect\citeauthoryear{Easley, O'Hara, and Basu}{Easley
  et~al.}{2019}]{Easley2019}
Easley, D., M.~O'Hara, and S.~Basu (2019).
\newblock From mining to markets: the evolution of bitcoin transaction fees.
\newblock {\em Journal of Financial Economics\/}~{\em 134\/}(1), 91--109.

\bibitem[\protect\citeauthoryear{Esmaeili, Saad, and Nosratinia}{Esmaeili
  et~al.}{2021}]{Esmaeili2021}
Esmaeili, M., H.~M. Saad, and A.~Nosratinia (2021).
\newblock Semidefinite programming for community detection with side
  information.
\newblock {\em IEEE Transactions on Network Science and Engineering\/}.

\bibitem[\protect\citeauthoryear{Fama and French}{Fama and
  French}{1993}]{FamaFrench1993}
Fama, E.~F. and K.~R. French (1993).
\newblock Common risk factors in stock and bond returns.
\newblock {\em Journal of Financial Economics\/}~{\em 33\/}(1), 3--56.

\bibitem[\protect\citeauthoryear{Fama, French, Booth, and Sinquefield}{Fama
  et~al.}{1993}]{FamaFrenchBoothRex1993}
Fama, E.~F., K.~R. French, D.~G. Booth, and R.~Sinquefield (1993).
\newblock Differences in the risks and returns of {NYSE} and {NASD} stocks.
\newblock {\em Financial Analysts Journal\/}~{\em 49\/}(1), 37--41.

\bibitem[\protect\citeauthoryear{Fan, Fan, Han, and Lv}{Fan
  et~al.}{2019}]{Fan2019}
Fan, J., Y.~Fan, X.~Han, and J.~Lv (2019).
\newblock {SIMPLE}: Statistical inference on membership profiles in large
  networks.
\newblock {\em arXiv preprint arXiv:1910.01734\/}.

\bibitem[\protect\citeauthoryear{Fan, Liao, and Mincheva}{Fan
  et~al.}{2013}]{FanLiaoMincheva2013}
Fan, J., Y.~Liao, and M.~Mincheva (2013).
\newblock Large covariance estimation by thresholding principal orthogonal
  complements.
\newblock {\em Journal of the Royal Statistical Society. Series B, Statistical
  methodology\/}~{\em 75\/}(4).

\bibitem[\protect\citeauthoryear{Fanti, Kogan, and Viswanath}{Fanti
  et~al.}{2019}]{Fanti2019}
Fanti, G., L.~Kogan, and P.~Viswanath (2019).
\newblock Economics of {Proof-of-Stake} payment systems.
\newblock {\em Working paper, MIT\/}.

\bibitem[\protect\citeauthoryear{Foerster and Karolyi}{Foerster and
  Karolyi}{1999}]{Foerster1999}
Foerster, S.~R. and G.~A. Karolyi (1999).
\newblock The effects of market segmentation and investor recognition on asset
  prices: evidence from foreign stocks listing in the united states.
\newblock {\em Journal of Finance\/}~{\em 54\/}(3), 981--1013.

\bibitem[\protect\citeauthoryear{Gao, Ma, Zhang, and Zhou}{Gao
  et~al.}{2018}]{GaoMaZhangZhou2018}
Gao, C., Z.~Ma, A.~Y. Zhang, and H.~H. Zhou (2018).
\newblock Community detection in degree-corrected block models.
\newblock {\em Annals of Statistics\/}~{\em 46\/}(5), 2153--2185.

\bibitem[\protect\citeauthoryear{Gasser and M{\"u}ller}{Gasser and
  M{\"u}ller}{1979}]{Gasser1979}
Gasser, T. and H.-G. M{\"u}ller (1979).
\newblock Kernel estimation of regression functions.
\newblock In {\em Smoothing techniques for curve estimation}, pp.\  23--68.
  Springer.

\bibitem[\protect\citeauthoryear{Griffin}{Griffin}{2002}]{Griffin2002}
Griffin, J.~M. (2002).
\newblock Are the {Fama} and {French} factors global or country specific?
\newblock {\em The Review of Financial Studies\/}~{\em 15\/}(3), 783--803.

\bibitem[\protect\citeauthoryear{Griffin and Shams}{Griffin and
  Shams}{2020}]{Griffin2020}
Griffin, J.~M. and A.~Shams (2020).
\newblock Is {Bitcoin} really un-{Tethered}?
\newblock {\em Journal of Finance\/}~{\em 75\/}(4), 1913--1964.

\bibitem[\protect\citeauthoryear{Guo, Peng, Tao, and Tu}{Guo
  et~al.}{2021}]{GPTT2018}
Guo, L., L.~Peng, Y.~Tao, and J.~Tu (2021).
\newblock Joint news, attention spillover, and market returns.
\newblock {\em Available at SSRN\/}~{\em 2927561}.

\bibitem[\protect\citeauthoryear{Hajek, Wu, and Xu}{Hajek
  et~al.}{2016a}]{Hajek2016a}
Hajek, B., Y.~Wu, and J.~Xu (2016a).
\newblock Achieving exact cluster recovery threshold via semidefinite
  programming.
\newblock {\em IEEE Transactions on Information Theory\/}~{\em 62\/}(5),
  2788--2797.

\bibitem[\protect\citeauthoryear{Hajek, Wu, and Xu}{Hajek
  et~al.}{2016b}]{Hajek2016b}
Hajek, B., Y.~Wu, and J.~Xu (2016b).
\newblock Achieving exact cluster recovery threshold via semidefinite
  programming: Extensions.
\newblock {\em IEEE Transactions on Information Theory\/}~{\em 62\/}(10),
  5918--5937.

\bibitem[\protect\citeauthoryear{Herskovic}{Herskovic}{2018}]{Herskovic2018}
Herskovic, B. (2018).
\newblock Networks in production: asset pricing implications.
\newblock {\em Journal of Finance\/}~{\em 73\/}(4), 1785--1818.

\bibitem[\protect\citeauthoryear{Hinzen, John, and Saleh}{Hinzen
  et~al.}{2019}]{Hinzen2019}
Hinzen, F.~J., K.~John, and F.~Saleh (2019).
\newblock {Proof-of-Work’s} limited adoption problem.
\newblock {\em Working Paper, NYU Stern School of Business\/}.

\bibitem[\protect\citeauthoryear{Hoberg and Phillips}{Hoberg and
  Phillips}{2016}]{Hoberg2016}
Hoberg, G. and G.~Phillips (2016).
\newblock Text-based network industries and endogenous product differentiation.
\newblock {\em Journal of Political Economy\/}~{\em 124\/}(5), 1423--1465.

\bibitem[\protect\citeauthoryear{Holland, Laskey, and Leinhardt}{Holland
  et~al.}{1983}]{Holland1983}
Holland, P.~W., K.~B. Laskey, and S.~Leinhardt (1983).
\newblock Stochastic blockmodels: first steps.
\newblock {\em Social Networks\/}~{\em 5\/}(2), 109--137.

\bibitem[\protect\citeauthoryear{Hong, Lim, and Stein}{Hong
  et~al.}{2000}]{HongLimStein2000}
Hong, H., T.~Lim, and J.~C. Stein (2000).
\newblock Bad news travels slowly: Size, analyst coverage, and the
  profitability of momentum strategies.
\newblock {\em Journal of Finance\/}~{\em 55\/}(1), 265--295.

\bibitem[\protect\citeauthoryear{Hong and Stein}{Hong and
  Stein}{1999}]{HongStein1999}
Hong, H. and J.~C. Stein (1999).
\newblock A unified theory of underreaction, momentum trading, and overreaction
  in asset markets.
\newblock {\em Journal of Finance\/}~{\em 54\/}(6), 2143--2184.

\bibitem[\protect\citeauthoryear{Hou and Moskowitz}{Hou and
  Moskowitz}{2005}]{HouMoskowitz2005}
Hou, K. and T.~J. Moskowitz (2005).
\newblock Market frictions, price delay, and the cross-section of expected
  returns.
\newblock {\em Review of Financial Studies\/}~{\em 18\/}(3), 981--1020.

\bibitem[\protect\citeauthoryear{Jin}{Jin}{2015}]{Jin2015}
Jin, J. (2015).
\newblock Fast community detection by score.
\newblock {\em Annals of Statistics\/}~{\em 43\/}(1), 57--89.

\bibitem[\protect\citeauthoryear{Joseph and Yu}{Joseph and
  Yu}{2016}]{JosephYu2016}
Joseph, A. and B.~Yu (2016).
\newblock Impact of regularization on spectral clustering.
\newblock {\em Annals of Statistics\/}~{\em 44\/}(4), 1765--1791.

\bibitem[\protect\citeauthoryear{Kan and Kirikos}{Kan and
  Kirikos}{1995}]{Kan1995}
Kan, R. and G.~Kirikos (1995).
\newblock Biases in evaluating trading strategies.
\newblock {\em Working Paper, University of Toronto\/}.

\bibitem[\protect\citeauthoryear{Karrer and Newman}{Karrer and
  Newman}{2011}]{Karrer2011}
Karrer, B. and M.~E.~J. Newman (2011).
\newblock Stochastic blockmodels and community structure in networks.
\newblock {\em Physical Review E\/}~{\em 83\/}(1), 016107.

\bibitem[\protect\citeauthoryear{Kim, Trimborn, and H{\"a}rdle}{Kim
  et~al.}{2019}]{Kim2019}
Kim, A., S.~Trimborn, and W.~K. H{\"a}rdle (2019).
\newblock {VCRIX} -- a volatility index for cryptocurrencies.
\newblock {\em Available at SSRN\/}~{\em 3480348}.

\bibitem[\protect\citeauthoryear{Koumou}{Koumou}{2020}]{Koumou2020}
Koumou, G.~B. (2020).
\newblock Diversification and portfolio theory: a review.
\newblock {\em Financial Markets and Portfolio Management\/}~{\em 34},
  267--312.

\bibitem[\protect\citeauthoryear{Lan, Fang, Wang, and Tsai}{Lan
  et~al.}{2018}]{Lan2018}
Lan, W., Z.~Fang, H.~Wang, and C.-L. Tsai (2018).
\newblock Covariance matrix estimation via network structure.
\newblock {\em Journal of Business \& Economic Statistics\/}~{\em 36\/}(2),
  359--369.

\bibitem[\protect\citeauthoryear{Lee, Sun, Wang, and Zhang}{Lee
  et~al.}{2019}]{Lee2019}
Lee, C.~M., S.~T. Sun, R.~Wang, and R.~Zhang (2019).
\newblock Technological links and predictable returns.
\newblock {\em Journal of Financial Economics\/}~{\em 132\/}(3), 76--96.

\bibitem[\protect\citeauthoryear{Lei and Rinaldo}{Lei and
  Rinaldo}{2015}]{Lei2015}
Lei, J. and A.~Rinaldo (2015).
\newblock Consistency of spectral clustering in stochastic block models.
\newblock {\em Annals of Statistics\/}~{\em 43\/}(1), 215--237.

\bibitem[\protect\citeauthoryear{Lepski, Mammen, and Spokoiny}{Lepski
  et~al.}{1997}]{LepskiMammenSpokoiny1997}
Lepski, O.~V., E.~Mammen, and V.~G. Spokoiny (1997).
\newblock Optimal spatial adaptation to inhomogeneous smoothness: an approach
  based on kernel estimates with variable bandwidth selectors.
\newblock {\em Annals of Statistics\/}~{\em 25\/}(3), 929--947.

\bibitem[\protect\citeauthoryear{Li, Levina, and Zhu}{Li
  et~al.}{2020}]{LiLevinaZhu2016}
Li, T., E.~Levina, and J.~Zhu (2020).
\newblock Network cross-validation by edge sampling.
\newblock {\em Biometrika\/}~{\em 107\/}(2), 257--276.

\bibitem[\protect\citeauthoryear{Li, Chen, and Xu}{Li et~al.}{2021}]{Li2021}
Li, X., Y.~Chen, and J.~Xu (2021).
\newblock Convex relaxation methods for community detection.
\newblock {\em Statistical Science\/}~{\em 36\/}(1), 2--15.

\bibitem[\protect\citeauthoryear{Liu, Marsh, Mazza, and Petitjean}{Liu
  et~al.}{2021}]{LiuMarsh2021}
Liu, J., I.~W. Marsh, P.~Mazza, and M.~Petitjean (2021).
\newblock Factor structure in cryptocurrency returns and volatility.
\newblock {\em Available at SSRN\/}~{\em 3389152}.

\bibitem[\protect\citeauthoryear{Liu, Sheng, and Wang}{Liu
  et~al.}{2021}]{LiuShengWang2021}
Liu, Y., J.~Sheng, and W.~Wang (2021).
\newblock Technology and cryptocurrency valuation: evidence from machine
  learning.
\newblock {\em Available at SSRN\/}~{\em 3577208}.

\bibitem[\protect\citeauthoryear{Liu and Tsyvinski}{Liu and
  Tsyvinski}{2021}]{LiuTsyvinski2021}
Liu, Y. and A.~Tsyvinski (2021).
\newblock Risks and returns of cryptocurrency.
\newblock {\em Review of Financial Studies\/}~{\em 34\/}(6), 2689--2727.

\bibitem[\protect\citeauthoryear{Liu, Tsyvinski, and Wu}{Liu
  et~al.}{2021}]{LiuTsyvinskiWu2021}
Liu, Y., A.~Tsyvinski, and X.~Wu (2021).
\newblock Common risk factors in cryptocurrency.
\newblock {\em Journal of Finance,\/}~{\em forthcoming}.

\bibitem[\protect\citeauthoryear{Lu and Sen}{Lu and Sen}{2020}]{LuSen2020}
Lu, C. and S.~Sen (2020).
\newblock Contextual stochastic block model: Sharp thresholds and contiguity.
\newblock {\em arXiv preprint arXiv:2011.09841\/}.

\bibitem[\protect\citeauthoryear{Ma and Ma}{Ma and Ma}{2017}]{MaMa2017}
Ma, Z. and Z.~Ma (2017).
\newblock Exploration of large networks with covariates via fast and universal
  latent space model fitting.
\newblock {\em arXiv preprint arXiv:1705.02372\/}.

\bibitem[\protect\citeauthoryear{Makarov and Schoar}{Makarov and
  Schoar}{2020}]{Makarov2020}
Makarov, I. and A.~Schoar (2020).
\newblock Trading and arbitrage in cryptocurrency markets.
\newblock {\em Journal of Financial Economics\/}~{\em 135\/}(2), 293--319.

\bibitem[\protect\citeauthoryear{Markowitz}{Markowitz}{1952}]{Markowitz1952}
Markowitz, H. (1952).
\newblock The utility of wealth.
\newblock {\em Journal of Political Economy\/}~{\em 60\/}(2), 151--158.

\bibitem[\protect\citeauthoryear{Matias and Miele}{Matias and
  Miele}{2017}]{Matias2017}
Matias, C. and V.~Miele (2017).
\newblock Statistical clustering of temporal networks through a dynamic
  stochastic block model.
\newblock {\em Journal of the Royal Statistical Society: Series B (Statistical
  Methodology)\/}~{\em 79\/}(4), 1119--1141.

\bibitem[\protect\citeauthoryear{Menzly and Ozbas}{Menzly and
  Ozbas}{2010}]{Menzly2010}
Menzly, L. and O.~Ozbas (2010).
\newblock Market segmentation and cross-predictability of returns.
\newblock {\em Journal of Finance\/}~{\em 65\/}(4), 1555--1580.

\bibitem[\protect\citeauthoryear{Moskowitz and Grinblatt}{Moskowitz and
  Grinblatt}{1999}]{Moskowitz1999}
Moskowitz, T.~J. and M.~Grinblatt (1999).
\newblock Do industries explain momentum?
\newblock {\em Journal of Finance\/}~{\em 54\/}(4), 1249--1290.

\bibitem[\protect\citeauthoryear{Mu, Mele, Hao, Cape, Athreya, and Priebe}{Mu
  et~al.}{2020}]{Mu2020}
Mu, C., A.~Mele, L.~Hao, J.~Cape, A.~Athreya, and C.~E. Priebe (2020).
\newblock On spectral algorithms for community detection in stochastic
  blockmodel graphs with vertex covariates.
\newblock {\em arXiv preprint arXiv:2007.02156\/}.

\bibitem[\protect\citeauthoryear{Nowicki and Snijders}{Nowicki and
  Snijders}{2001}]{Nowicki2001}
Nowicki, K. and T.~A.~B. Snijders (2001).
\newblock Estimation and prediction for stochastic blockstructures.
\newblock {\em Journal of the American statistical association\/}~{\em
  96\/}(455), 1077--1087.

\bibitem[\protect\citeauthoryear{Pagnotta}{Pagnotta}{2020}]{Pagnotta2020}
Pagnotta, E. (2020).
\newblock Decentralizing money: {Bitcoin} prices and blockchain security.
\newblock {\em Review of Financial Studies,\/}~{\em forthcoming}.

\bibitem[\protect\citeauthoryear{Parsons, Sabbatucci, and Titman}{Parsons
  et~al.}{2020}]{Parsons2020}
Parsons, C.~A., R.~Sabbatucci, and S.~Titman (2020).
\newblock Geographic lead-lag effects.
\newblock {\em Review of Financial Studies\/}~{\em 33\/}(10), 4721--4770.

\bibitem[\protect\citeauthoryear{Patton and Weller}{Patton and
  Weller}{2019}]{PattonWeller2019}
Patton, A.~J. and B.~Weller (2019).
\newblock Risk price variation: The missing half of empirical asset pricing.
\newblock {\em Economic Research Initiatives at Duke (ERID) Working
  Paper\/}~(274).

\bibitem[\protect\citeauthoryear{Peng and Xiong}{Peng and
  Xiong}{2006}]{PengXiong2006}
Peng, L. and W.~Xiong (2006).
\newblock Investor attention, overconfidence and category learning.
\newblock {\em Journal of Financial Economics\/}~{\em 80\/}(3), 563--602.

\bibitem[\protect\citeauthoryear{Pensky and Zhang}{Pensky and
  Zhang}{2019}]{Pensky2019}
Pensky, M. and T.~Zhang (2019).
\newblock Spectral clustering in the dynamic stochastic block model.
\newblock {\em Electronic Journal of Statistics\/}~{\em 13\/}(1), 678--709.

\bibitem[\protect\citeauthoryear{Qin and Rohe}{Qin and
  Rohe}{2013}]{QinRohe2013}
Qin, T. and K.~Rohe (2013).
\newblock Regularized spectral clustering under the degree-corrected stochastic
  blockmodel.
\newblock In {\em Advances in Neural Information Processing Systems},
  Volume~26, pp.\  3120--3128.

\bibitem[\protect\citeauthoryear{Rohe, Chatterjee, and Yu}{Rohe
  et~al.}{2011}]{Rohe2011}
Rohe, K., S.~Chatterjee, and B.~Yu (2011).
\newblock Spectral clustering and the high-dimensional stochastic blockmodel.
\newblock {\em Annals of Statistics\/}~{\em 39\/}(4), 1878--1915.

\bibitem[\protect\citeauthoryear{Rohe, Qin, and Yu}{Rohe
  et~al.}{2016}]{Rohe2016}
Rohe, K., T.~Qin, and B.~Yu (2016).
\newblock Co-clustering directed graphs to discover asymmetries and directional
  communities.
\newblock {\em Proceedings of the National Academy of Sciences\/}~{\em
  113\/}(45), 12679--12684.

\bibitem[\protect\citeauthoryear{Roy}{Roy}{1952}]{Roy1952}
Roy, A.~D. (1952).
\newblock Safety first and the holding of assets.
\newblock {\em Econometrica\/}, 431--449.

\bibitem[\protect\citeauthoryear{Rubinstein}{Rubinstein}{2002}]{Rubinstein2002}
Rubinstein, M. (2002).
\newblock Markowitz's "portfolio selection": a fifty-year retrospective.
\newblock {\em Journal of Finance\/}~{\em 57\/}(3), 1041--1045.

\bibitem[\protect\citeauthoryear{Saleh}{Saleh}{2021}]{Saleh2021}
Saleh, F. (2021).
\newblock Blockchain without waste: {Proof-of-Stake}.
\newblock {\em Review of Financial Studies\/}~{\em 34\/}(3), 1156--1190.

\bibitem[\protect\citeauthoryear{Sharpe}{Sharpe}{1966}]{Sharpe1966}
Sharpe, W.~F. (1966).
\newblock Mutual fund performance.
\newblock {\em Journal of Business\/}~{\em 39\/}(1), 119--138.

\bibitem[\protect\citeauthoryear{Shleifer and Vishny}{Shleifer and
  Vishny}{1997}]{ShleiferVishny1997}
Shleifer, A. and R.~W. Vishny (1997).
\newblock The limits of arbitrage.
\newblock {\em Journal of Finance\/}~{\em 52\/}(1), 35--55.

\bibitem[\protect\citeauthoryear{Sockin and Xiong}{Sockin and
  Xiong}{2020}]{SockinXiong2020}
Sockin, M. and W.~Xiong (2020).
\newblock A model of cryptocurrencies.
\newblock {\em NBER Working Paper\/}~{\em 26816}.

\bibitem[\protect\citeauthoryear{Wang and Bickel}{Wang and
  Bickel}{2017a}]{WangBickel2017}
Wang, Y.~R. and P.~J. Bickel (2017a).
\newblock Likelihood-based model selection for stochastic block models.
\newblock {\em Annals of Statistics\/}~{\em 45\/}(2), 500--528.

\bibitem[\protect\citeauthoryear{Wang and Bickel}{Wang and
  Bickel}{2017b}]{Wang2017}
Wang, Y. X.~R. and P.~J. Bickel (2017b).
\newblock Likelihood-based model selection for stochastic block models.
\newblock {\em Annals of Statistics\/}~{\em 45\/}(2), 500--528.

\bibitem[\protect\citeauthoryear{Weng and Feng}{Weng and
  Feng}{2016}]{WengFeng2016}
Weng, H. and Y.~Feng (2016).
\newblock Community detection with nodal information.
\newblock {\em arXiv preprint arXiv:1610.09735\/}.

\bibitem[\protect\citeauthoryear{Yan and Sarkar}{Yan and
  Sarkar}{2020}]{YanSarkar2020}
Yan, B. and P.~Sarkar (2020).
\newblock Covariate regularized community detection in sparse graphs.
\newblock {\em Journal of the American Statistical Association\/}, 1--12.

\bibitem[\protect\citeauthoryear{Yermack}{Yermack}{2017}]{Yermack2017}
Yermack, D. (2017).
\newblock Corporate governance and blockchains.
\newblock {\em Review of Finance\/}~{\em 21\/}(1), 7--31.

\bibitem[\protect\citeauthoryear{Zhang, Levina, and Zhu}{Zhang
  et~al.}{2016}]{Zhang2016}
Zhang, Y., E.~Levina, and J.~Zhu (2016).
\newblock Community detection in networks with node features.
\newblock {\em Electronic Journal of Statistics\/}~{\em 10\/}(2), 3153--3178.

\bibitem[\protect\citeauthoryear{Zhang, Poux-Berthe, Wells, Koc-Michalska, and
  Rohe}{Zhang et~al.}{2018}]{ZhangRohe2018}
Zhang, Y., M.~Poux-Berthe, C.~Wells, K.~Koc-Michalska, and K.~Rohe (2018).
\newblock Discovering political topics in facebook discussion threads with
  graph contextualization.
\newblock {\em Annals of Applied Statistics\/}~{\em 12\/}(2), 1096--1123.

\bibitem[\protect\citeauthoryear{Zhao, Liu, Wang, and Leng}{Zhao
  et~al.}{2021}]{Zhao2021}
Zhao, J., X.~Liu, H.~Wang, and C.~Leng (2021).
\newblock Dimension reduction for covariates in network data.
\newblock {\em Biometrika,\/}~{\em forthcoming}.

\bibitem[\protect\citeauthoryear{Zhao, Levina, and Zhu}{Zhao
  et~al.}{2012}]{ZhaoLevinaZhu2012}
Zhao, Y., E.~Levina, and J.~Zhu (2012).
\newblock Consistency of community detection in networks under degree-corrected
  stochastic block models.
\newblock {\em Annals of Statistics\/}~{\em 40\/}(4), 2266--2292.

\bibitem[\protect\citeauthoryear{Zhou and Amini}{Zhou and
  Amini}{2019}]{ZhouAmini2019}
Zhou, Z. and A.~A. Amini (2019).
\newblock Analysis of spectral clustering algorithms for community detection:
  the general bipartite setting.
\newblock {\em Journal of Machine Learning Research\/}~{\em 20\/}(1),
  1774--1820.

\bibitem[\protect\citeauthoryear{Zhou and Amini}{Zhou and
  Amini}{2020}]{ZhouAmini2020}
Zhou, Z. and A.~A. Amini (2020).
\newblock Optimal bipartite network clustering.
\newblock {\em Journal of Machine Learning Research\/}~{\em 21}, 1--68.

\bibitem[\protect\citeauthoryear{Zhu, Cai, and Ma}{Zhu
  et~al.}{2021}]{ZhuCaiMa2021}
Zhu, X., Z.~Cai, and Y.~Ma (2021).
\newblock Network functional varying coefficient model.
\newblock {\em Journal of the American Statistical Association\/}, 1--12.

\bibitem[\protect\citeauthoryear{Zou}{Zou}{2006}]{Zou2006}
Zou, H. (2006).
\newblock The adaptive {Lasso} and its oracle properties.
\newblock {\em Journal of the American Statistical Association\/}~{\em
  101\/}(476), 1418--1429.

\bibitem[\protect\citeauthoryear{Zou, Lan, Li, and Tsai}{Zou
  et~al.}{2021}]{Zou2021}
Zou, T., W.~Lan, R.~Li, and C.-L. Tsai (2021).
\newblock Inference on covariance-mean regression.
\newblock {\em Journal of Econometrics\/}.

\bibitem[\protect\citeauthoryear{Zou, Lan, Wang, and Tsai}{Zou
  et~al.}{2017}]{Zou2017}
Zou, T., W.~Lan, H.~Wang, and C.-L. Tsai (2017).
\newblock Covariance regression analysis.
\newblock {\em Journal of the American Statistical Association\/}~{\em
  112\/}(517), 266--281.

\end{thebibliography}


\spacingset{1.8} 
\title{Supplementary Appendix to ``A Time-varying Network for Cryptocurrencies"}
	\author{Li Guo\thanks{Address: 600 Guoquan Rd, Shanghai 200433. Email: guo\_li@fudan.edu.cn. } \\
	{\small Fudan University}\\
	{\small Shanghai Institute of International Finance and Economics}\\
	\\
	Wolfgang Karl H{\"a}rdle\thanks{Address: Unter den Linden 6 10099 Berlin, Germany. Email: haerdle@hu-berlin.de. } \\
	{\small Humboldt-Universit{\"a}t zu Berlin, Singapore Management University} \\ 
	{\small Xiamen University, Charles University} \\
	\\
	Yubo Tao\thanks{Correspondence author. Address: 90 Stamford Rd, Singapore 178903. Email: ybtao@smu.edu.sg.} \\
	{\small Singapore Management University} \\
}
\maketitle

\newpage
\spacingset{1.4} 

\vspace*{2cm}
\begin{center}
	\LARGE{Supplementary Appendix to ``A Time-varying Network for Cryptocurrencies"}
\end{center}
\bigskip

This appendix comprises three sections. In section \ref{sec:S1}, we present how to choose the tuning parameters. In section \ref{sec:S2}, we provides the proofs of the main theorems in the above paper. In section \ref{sec:S3}, we present the necessary technical lemmas for proving the main results.

The notations that have been frequently used in the proofs are as follows: $[n] \stackrel{\operatorname{def}}{=} \{1,2,\cdots, n\}$ for any positive integer $n$, $\mathcal{M}_{m,n}$ be the set of all $m \times n$ matrices which have exactly one 1 and $n-1$ 0's in each row. $\mathbb{R}^{m \times n}$ denotes the set of all $m \times n$ real matrices. $\Vert \cdot \Vert$ is used to denote Euclidean $\ell_2$-norm for vectors in $\mathbb{R}^{m \times 1}$ and the spectral norm for matrices on $\mathbb{R}^{m \times n}$. $\Vert \cdot \Vert_\infty$ denotes the largest element of the matrix in absolute value. $\Vert \cdot \Vert_F$ is the Frobenius norm on $\mathbb{R}^{m \times n}$, namely $\Vert M \Vert_F \stackrel{\operatorname{def}}{=} \sqrt{\text{tr}(M^{\top}M)}$. $\Vert \cdot \Vert_{\phi_2}$ is the sub-Gaussian norm such that for any random variable $x$, there is $\Vert x \Vert_{\phi_2} \stackrel{\operatorname{def}}{=} \sup_{\kappa \geq 1} \kappa^{-1/2}(\Expt |x|^\kappa)^{1/\kappa}$. $\bm{1}_{m,n} \in \mathbb{R}^{m \times n}$ consists of all 1's, $\iota_n$ denotes the column vector with $n$ elements of all 1's. $\mathbf{1}_A$ denotes the indicator function of the event $A$. 

\section{Choice of Tuning Parameters} \label{sec:S1}

For the choice of $r$, we first need to determine the upper bound of the variance proportion of the estimation error $\Vert \widehat{\mathcal{S}}_{t,r} - \mathcal{S}_t\Vert$, which is $\Vert \widehat{\mathcal{S}}_{t,r} - \mathcal{S}_{t,r}\Vert$. In the following lemma, we derive a sharp probabilistic upper bound on $\Vert \widehat{\mathcal{S}}_{t,r} - \mathcal{S}_{t,r}\Vert$ using the device provided in \cite{Lei2015}.
\begin{lemma} \label{LemBS}
	Let $d = rN\Vert \mathcal{S}_t \Vert_{\infty}$ and $\eta \in (0,1)$. Then,
	\begin{equation*}
	\Vert \widehat{\mathcal{S}}_{t,r} - \mathcal{S}_{t,r}\Vert \leq (1-\eta)^{-2} \dfrac{W_{\max}\sqrt{d}}{r \vee 1},
	\end{equation*}
	with probability $1-\epsilon$, where $\epsilon = N^{\left(\frac{3}{16 \Vert \mathcal{S}_t \Vert_{\infty}} - 2\ln\left(\frac{7}{\eta}\right)\right)}$.
\end{lemma}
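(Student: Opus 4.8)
The plan is to follow the discretization device of \cite{Lei2015} for the spectral norm of a random matrix with (conditionally) independent entries. Write $M \stackrel{\operatorname{def}}{=} \widehat{\mathcal{S}}_{t,r} - \mathcal{S}_{t,r}$ and note that, by construction, $\mathcal{S}_{t,r}$ is the mean of $\widehat{\mathcal{S}}_{t,r}$ (after conditioning on an appropriate $\sigma$-field $\mathcal{F}$), so the entries $M_{ij}$ are conditionally independent, mean zero, and sub-Gaussian with $\Vert M_{ij}\Vert_{\phi_2}$ of order $W_{\max}/(r\vee 1)$ and conditional variances of order $\Vert\mathcal{S}_t\Vert_\infty/(r\vee 1)^2$. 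The first step is to pass from the operator norm to a bilinear form on a finite net. Since $\Vert M\Vert = \sup_{u,v\in\mathbb{S}^{N-1}} u^{\top} M v$ over the unit sphere of $\mathbb{R}^N$, I would fix an $\eta$-net $T$ of $\mathbb{S}^{N-1}$ with $|T|\leq (1+2/\eta)^N \leq (7/\eta)^N$ and apply the net twice (first rounding the left vector, at a cost of a factor $(1-\eta)^{-1}$, then the right vector, at another such cost), yielding $\Vert M\Vert \leq (1-\eta)^{-2}\max_{u,v\in T} u^{\top} M v$; this accounts exactly for the prefactor in the statement, and reduces the claim to showing that, with probability $1-\epsilon$, $|u^{\top} M v|\leq W_{\max}\sqrt{d}/(r\vee 1)$ holds simultaneously over all of $T\times T$.

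For a fixed pair $(u,v)$ I would expand $u^{\top} M v = \sum_{i,j\in[N]} M_{ij} u_i v_j$ and, following \cite{Lei2015}, partition the index pairs into a \emph{light} set $\mathcal{L} = \{(i,j): |u_i v_j|\leq \sqrt{d}/N\}$ and the complementary \emph{heavy} set $\mathcal{H}$. On $\mathcal{L}$ the summands are uniformly bounded by $W_{\max}\sqrt{d}/(N(r\vee 1))$ and the total conditional variance is of order $\Vert\mathcal{S}_t\Vert_\infty/(r\vee 1)^2$, so a Bernstein inequality at a suitable multiple of $W_{\max}\sqrt{d}/(r\vee 1)$, after substituting $d = rN\Vert\mathcal{S}_t\Vert_\infty$ and optimizing the split, controls $\sum_{\mathcal{L}} M_{ij} u_i v_j$ with an exceptional probability of order $\exp(-\tfrac{3}{16}\, d/\Vert\mathcal{S}_t\Vert_\infty)$. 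On $\mathcal{H}$ I would use that at most $N/\sqrt{d}$ coordinates can be heavy, together with a ``bounded-degree'' control on the restriction of $M$ to $\mathcal{H}$, to bound $\sum_{\mathcal{H}} M_{ij} u_i v_j$ at the same exponential rate, exactly as in the corresponding step of \cite{Lei2015}. Summing the two contributions gives a per-pair exceptional probability of the form $2\exp(-\tfrac{3}{16}\, d/\Vert\mathcal{S}_t\Vert_\infty)$.

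The last step is a union bound over the $|T|^2\leq (7/\eta)^{2N}$ pairs, which together with $d = rN\Vert\mathcal{S}_t\Vert_\infty$ yields the stated exceptional probability $\epsilon = N^{\left(\frac{3}{16\Vert\mathcal{S}_t\Vert_\infty} - 2\ln\left(\frac{7}{\eta}\right)\right)}$, and taking expectations over $\mathcal{F}$ leaves the bound unchanged. I expect the heavy-pair estimate on $\mathcal{H}$ to be the main obstacle: in \cite{Lei2015} the entries are Bernoulli and hence deterministically bounded by one, which is what makes the combinatorial degree-counting argument clean, whereas here the entries are only sub-Gaussian, so that argument must be redone through the tail bound $\mathbb{P}(|M_{ij}| > s)\leq 2\exp(-s^2(r\vee 1)^2/(2W_{\max}^2))$ rather than a hard cap; calibrating this re-derivation together with the light/heavy threshold and the Bernstein split so as to land exactly on the constants $\tfrac{3}{16}$ and $7/\eta$ is the delicate part.
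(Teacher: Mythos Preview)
Your overall architecture matches the paper's: pass to an $\eta$-net via the $(1-\eta)^{-2}$ reduction, split index pairs into light and heavy according to the threshold $\sqrt{d}/N$, handle the light part by Bernstein, and union-bound over the net. The light-pair calculation and the union bound land on the same exponent as the paper.

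Where you diverge is the heavy-pair step, and you are making it much harder than it is. You anticipate having to rerun the Lei--Rinaldo discrepancy/degree-counting argument through sub-Gaussian tails, but the paper does nothing of the sort: the heavy-pair contribution is bounded \emph{deterministically}, in one line. The point is that the centered entries here are not merely sub-Gaussian but uniformly bounded, $|w_{ij}|\leq W_{\max}\Vert\mathcal{S}_t\Vert_\infty$, so on $\bar{\mathscr{L}}$ one writes $|x_iy_j|=x_i^2y_j^2/|x_iy_j|\leq (N/\sqrt{d})\,x_i^2y_j^2$ and obtains
\[
\Bigl|\sum_{(i,j)\in\bar{\mathscr{L}}} x_iy_j\,w_{ij}\Bigr|\;\leq\;\frac{N}{\sqrt{d}}\,W_{\max}\Vert\mathcal{S}_t\Vert_\infty\sum_{(i,j)\in\bar{\mathscr{L}}} x_i^2y_j^2\;\leq\;\frac{N\Vert\mathcal{S}_t\Vert_\infty W_{\max}}{\sqrt{d}}\;=\;\frac{W_{\max}\sqrt{d}}{r},
\]
using $d=rN\Vert\mathcal{S}_t\Vert_\infty$ and $\sum_{i,j}x_i^2y_j^2\leq 1$. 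The reason the elementary bound suffices here (and not in Lei--Rinaldo) is precisely the extra factor $r$ in the target: with $d$ defined as $rN\Vert\mathcal{S}_t\Vert_\infty$, the crude estimate $N\Vert\mathcal{S}_t\Vert_\infty/\sqrt{d}$ already equals $\sqrt{d}/r$. So your worry that ``the heavy-pair estimate is the main obstacle'' and your plan to push the argument through sub-Gaussian tail bounds are unnecessary; the boundedness of the entries and the particular scaling of $d$ collapse that step to triviality.
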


From Lemma \ref{LemBS} and the proofs of the previous theorems, we can see that $\Vert \widehat{\mathcal{S}}_{t,r} - \mathcal{S}_{t,r}\Vert$ is decreasing, while $\Vert \mathcal{S}_{t,r} - \mathcal{S}_{t}\Vert$ is increasing in $r$. Therefore, there exists an optimal $r^*$ that achieves the best bias-variance balance; that is,
\begin{equation}
r^* = \arg\min_{0 \leq r \leq T/2}\left((1-\eta)^{-2} \dfrac{W_{\max}\sqrt{d}}{r \vee 1} + \Vert \mathcal{S}_{t,r} - \mathcal{S}_t\Vert\right).
\end{equation}
Then, we can apply Lepski's method \citep{LepskiMammenSpokoiny1997} to construct the adaptive estimator for $r^*$. Without loss of generality, we choose $\eta = 1/2$. The, we define the adaptive estimator as
\begin{equation}
\widehat{r} = \max\left\lbrace 0 \leq r \leq T/2: \left\Vert \widehat{\mathcal{S}}_{t,r} - \widehat{\mathcal{S}}_{t, \rho} \right\Vert \leq 4W_{\max}\sqrt{\dfrac{N\Vert \mathcal{S}_t \Vert_{\infty}}{\rho \vee 1}}, \text{ for any $\rho < r$} \right\rbrace.
\end{equation}

Next, for the choice of $\alpha_{t}$, we select $\alpha_t$ to achieve a balance between $L_{\tau, t}$ and $C^{w}_t$:
\begin{equation}
\alpha_{t} = \dfrac{\lambda_{K}(L_{\tau, t}) - \lambda_{K+1}(L_{\tau, t})}{\lambda_{1}(C^{w}_t)}.
\end{equation}

Lastly, to determine $K$, we have several choices. \cite{Wang2017} implement a pseudo likelihood approach to choose the number of clusters in a stochastic blockmodel without covariates. \cite{ChenLei2017} propose a network cross-validation procedure to estimate the number of clusters by utilizing adjacency information. \cite{LiLevinaZhu2016} refine the network cross-validation approach by proposing an edge sampling algorithm. In our case, we apply the network cross-validation approach directly to the similarity matrix instead of the adjacency matrix. This is because the covariate matrix $C^{w}_t$ behaves just like an adjacency matrix when we use dummy variables to indicate different technology attributes. Therefore, the network cross-validation applies to the similarity matrix in our study.

\newpage
\newpage
\section{Proof of Main Results} \label{sec:S2}

\subsection{Proof of Theorem \ref{thm02}}
\begin{proof}
	In this proof, we deal with the clustering of left singular vector and the right singular vectors separately. 
	
	\bigskip
	\noindent \textit{(1) Clustering for $Z_{R,t}$.}
	First, by \cite{Rohe2016} and solution of ($1+\varepsilon$)-approximate $k$-means clustering, for each period $t = 1, \cdots, T$, we have
	\begin{equation} \label{B11}
		\dfrac{\left|\mathbb{M}^R_t\right|}{N} \leq \dfrac{8(2+\varepsilon)^2}{m_r^2N}\left\Vert U_t - \mathcal{U}_t\mathcal{O}_t\right\Vert_F^2,
	\end{equation}
	where 
	\begin{equation} \label{B_Mr}
	    m_r \stackrel{\operatorname{def}}{=} \min_{i,t}\{\min\{\Vert \Gamma_{R,t}(i,*) \Vert, \Vert \varGamma_{R,t}(i,*) \Vert\}\},
	\end{equation}
	and $\Gamma_{R,t}$ and $\varGamma_{R,t}$ are defined in Lemma \ref{lemA5}. 
	
	Then, by improved version of Davis-Kahn theorem from \cite{Lei2015}, we have
	\begin{equation} \label{B12}
		\left\Vert U_t - \mathcal{U}_t\mathcal{O}_t\right\Vert_F \leq \dfrac{2\sqrt{2K}}{\lambda_{K,t}}\left\Vert \text{sym}\left(\mathcal{S}_{t,r} - \mathcal{S}_{t}\right) \right\Vert,
	\end{equation}
	as $K_R \leq K_C$ and $K \stackrel{\operatorname{def}}{=} \min\{K_R, K_C\}$. Base on equation (\ref{B11}) and (\ref{B12}), we have
	\begin{equation} \label{B13}
		\sup_t\dfrac{\left|\mathbb{M}^R_t\right|}{N} \leq \dfrac{2^6(2+\varepsilon)^2K}{m_r^2N\lambda_{K,\max}^2} \sup_t \left\Vert \text{sym}\left(\mathcal{S}_{t,r} - \mathcal{S}_{t}\right) \right\Vert^2.
	\end{equation}
	
	Next, for $\mathcal{S}_t$, we have the following representation:
	\begin{equation}
		\mathcal{S}_t = \mathcal{D}_{R,t}^{-1/2}\varPsi^R Z_{R,t}B_{t}Z_{C,t}^{\top}\varPsi^C \mathcal{D}_{C,t}^{-1/2} + \alpha_t\mathcal{X}\mathcal{W}_t\mathcal{X}^{\top},
	\end{equation}
	where $\varPsi^p = \text{Diag}(\psi^p)$ with $p \in \{R, C\}$. Then, by definition of $\mathcal{S}_{t,r} \stackrel{\operatorname{def}}{=} |\mathcal{F}_r|^{-1}\sum_{i \in \mathcal{F}_r}W_{r,\ell}(i) \mathcal{S}_{t+i}$, we have the decomposition
	\begin{equation} \label{Bdcp}
		\Delta(r) = \sup_t\left\Vert \text{sym}\left(\widehat{\mathcal{S}}_{t,r} - \mathcal{S}_t\right) \right\Vert \leq \sup_t\left\Vert \text{sym}\left(\widehat{\mathcal{S}}_{t,r} - \mathcal{S}_{t,r}\right) \right\Vert + \sup_t\left\Vert \text{sym}\left(\mathcal{S}_{t,r} - \mathcal{S}_t\right) \right\Vert = \Delta_1(r) + \Delta_2(r).
	\end{equation}
	
	Now, we evaluate $\Delta_{1}(r)$ and $\Delta_{2}(r)$ respectively. For $\Delta_{1}(r)$, by Lemma \ref{lemA4}, we have
	\begin{align} \label{B14}
		\sup_t\left\Vert \text{sym}\left(\widehat{\mathcal{S}}_{t,r} - \mathcal{S}_{t,r}\right) \right\Vert &= \dfrac{1}{|\mathcal{F}_r|}\sum_{i \in \mathcal{F}_r}W_{r,\ell}(i) \sup_t\left\Vert \text{sym}\left(S_{t+i} - \mathcal{S}_{t+i}\right) \right\Vert \\ \nonumber
		&\leq W_{\max}(4+c_1)\left\lbrace \dfrac{3\log(16NT/\epsilon)}{\underline{\delta}}\right\rbrace^{1/2}.
	\end{align}
	
	For $\Delta_{2}(r)$, we first define
	\begin{equation}
		\widetilde{\mathcal{S}}_{t,r} = \dfrac{1}{|\mathcal{F}_{r}|}\sum_{i \in \mathcal{F}_{r}} W_{r,\ell}(i) \left(Y_{R,t}B_{t+i}Y_{C,t}^{\top} + \alpha_{t+i} \mathcal{X}\mathcal{W}_{t+i}\mathcal{X}\right).
	\end{equation}
	where $Y_{R,t} \stackrel{\operatorname{def}}{=} \mathcal{D}_{R,t}^{-1/2}\varPsi^R Z_{R,t}$ and $Y_{C,t} \stackrel{\operatorname{def}}{=} \mathcal{D}_{C,t}^{-1/2}\varPsi^C Z_{C,t}$. 
	
	Then, we decompose $\Delta_2(r)$ as 
	\begin{equation}
		\sup_t\left\Vert \text{sym}\left(\mathcal{S}_{t,r} - \mathcal{S}_t\right) \right\Vert \leq \sup_t\left\Vert \text{sym}\left(\mathcal{S}_{t,r} - \widetilde{\mathcal{S}}_{t,r}\right) \right\Vert + \sup_t\left\Vert \text{sym}\left(\widetilde{\mathcal{S}}_{t,r} - \mathcal{S}_t\right) \right\Vert = \Delta_{21}(r) + \Delta_{22}(r).
	\end{equation}
	For notation simplicity, we define $Y_{p,t} \stackrel{\operatorname{def}}{=} \mathcal{D}_{p,t}^{-1/2}\varPsi^{p} Z_{p,t}$ for $p \in \{R, C\}$, and the block diagonal matrix $\mathcal{Y}_t$ such that 
	\begin{equation}
		\mathcal{Y}_t \stackrel{\operatorname{def}}{=} \begin{bmatrix}
			Y_{R,t} & 0_{N \times K_C} \\
			0_{N \times K_R} & Y_{C,t}
		\end{bmatrix}.
	\end{equation}
	
	Note that
	\begin{align*}
		\left\Vert \text{sym}\left(\mathcal{S}_{t,r} - \widetilde{\mathcal{S}}_{t,r}\right) \right\Vert 	&\leq W_{\max}\max_{|i| \leq r}\left\Vert \text{sym}\left(Y_{R,t+i}B_{t+i}Y_{C,t+i}^{\top} - Y_{R,t}B_{t+i}Y_{C,t}^{\top}\right) \right\Vert \\
		&= W_{\max}\max_{|i| \leq r}\left\Vert \mathcal{Y}_{t+i}\text{sym}(B_{t+i})\mathcal{Y}_{t+i}^{\top} - \mathcal{Y}_{t}\text{sym}(B_{t+i})\mathcal{Y}_{t}^{\top}\right\Vert \\
		&\leq W_{\max}\max_{|i| \leq r}\left(\left\Vert \mathcal{Y}_{t+i} \right\Vert+ \left\Vert \mathcal{Y}_{t}\right\Vert\right) \left\Vert \text{sym}(B_{t+i})\right\Vert \left\Vert \mathcal{Y}_{t+i}-\mathcal{Y}_{t} \right\Vert,
	\end{align*}
	and $\left\Vert \text{sym}(B_{t+i})\right\Vert \leq K_C$ and $\Vert \varPsi^p\Vert \leq 1$ for $p \in \{R, C\}$, we then have
	\begin{align*}
		\left\Vert \mathcal{Y}_t \right\Vert &= \max\{\Vert\mathcal{D}_{R,t}^{-1/2}\varPsi^RZ_{R,t} \Vert, \Vert\mathcal{D}_{C,t}^{-1/2}\varPsi^CZ_{C,t} \Vert\} \\
		&\leq  \max\{\Vert\mathcal{D}_{R,t}^{-1/2}\Vert \Vert\varPsi^R\Vert \Vert Z_{R,t} \Vert, \Vert\mathcal{D}_{C,t}^{-1/2}\Vert \Vert\varPsi^C\Vert \Vert Z_{C,t} \Vert\} \\
		&\leq \underline{\delta'}^{-1/2}P_{\max}^{1/2}.
	\end{align*}
	
	Noticing that $\big\Vert \mathcal{D}_{p,t}^{-1/2} \big\Vert \leq \underline{\delta}^{-1/2}$ and $\Vert Z_{p,t}\Vert \leq P_{\max}^{1/2}$ for $p \in \{R, C\}$, and by Assumption \ref{Ass02}, $\Vert Z_{p,t+i} - Z_{p,t}\Vert \leq rs$, we then have
	\begin{align*}
		\left\Vert \mathcal{Y}_{t+i}-\mathcal{Y}_{t} \right\Vert &= \max\left\lbrace \left\Vert Y_{R,t+i} - Y_{R,t} \right\Vert, \left\Vert Y_{C,t+i} - Y_{C,t} \right\Vert\right\rbrace \\
		&\leq \max \left\lbrace \left\Vert \mathcal{D}_{R,t+i}^{-1/2}\varPsi^R Z_{R,t+i} - \mathcal{D}_{R,t}^{-1/2} \varPsi^R Z_{R,t} \right\Vert, \left\Vert \mathcal{D}_{C,t+i}^{-1/2}\varPsi^C Z_{C,t+i} - \mathcal{D}_{C,t}^{-1/2}\varPsi^C Z_{C,t} \right\Vert \right\rbrace \\
		&\leq \max \left\lbrace \left\Vert \mathcal{D}_{R,t+i}^{-1/2} \right\Vert \Vert\varPsi^R \Vert \Vert Z_{R,t+i} - Z_{R,t}\Vert + \left(\left\Vert \mathcal{D}_{R,t+i}^{-1/2} \right\Vert \Vert \varPsi^R \Vert + \left\Vert \mathcal{D}_{R,t}^{-1/2} \right\Vert \Vert\varPsi^R \Vert \right) \Vert Z_{R,t}\Vert \right., \\
		& \quad\quad\quad\quad   \left. \left\Vert \mathcal{D}_{C,t+i}^{-1/2} \right\Vert \Vert\varPsi^C \Vert \Vert Z_{C,t+i} - Z_{C,t}\Vert + \left( \left\Vert \mathcal{D}_{C,t+i}^{-1/2} \right\Vert \Vert\varPsi^C \Vert + \left\Vert \mathcal{D}_{C,t}^{-1/2} \right\Vert \Vert\varPsi^C \Vert \right) \Vert Z_{C,t}\Vert \right\rbrace \\
		&\leq \sqrt{\frac{2rs}{\underline{\delta}}} + \sqrt{\dfrac{4P_{\max}}{\underline{\delta}}}.
	\end{align*}
	
	Combining results above, we have
	\begin{align} \label{B18}
		\Delta_{21}(r) &= \sup_t\left\Vert \text{sym}\left(\mathcal{S}_{t,r} - \widetilde{\mathcal{S}}_{t,r}\right) \right\Vert \\ \nonumber
		&\leq W_{\max} \times (2{\underline{\delta}}^{-1/2}P_{\max}^{1/2}) \times K_C \times \left(\sqrt{\frac{2rs}{\underline{\delta}}} + \sqrt{\dfrac{4P_{\max}}{\underline{\delta}}}\right) \\ \nonumber
		&= \dfrac{2W_{\max}K_C}{\underline{\delta}}\left(\sqrt{2P_{\max}rs} + 2P_{\max}\right).
	\end{align}
	
	Lastly, for $\Delta_{22}(r)$, by Assumption \ref{Ass04} and the definition of $W_t$ giving $\Vert \text{sym}(W_t)\Vert \leq 1$, we have
	\begin{align*}
		\Delta_{22}(r) &= \sup_t\left\Vert \text{sym}\left(\widetilde{\mathcal{S}}_{t,r} - \mathcal{S}_{t}\right) \right\Vert \\
		&\leq \dfrac{1}{|\mathcal{F}_{r}|}\sum_{i \in \mathcal{F}_{r}} W_{r,\ell}(i) \sup_t\left(\left\Vert\text{sym}\left(Y_{R,t} (B_{t+i} - B_{t}) Y_{C,t}^{\top} \right)\right\Vert + \left\Vert \text{sym}\left(\alpha_{t+i}\mathcal{X}\mathcal{W}_{t+i}\mathcal{X} - \alpha_{t}\mathcal{X}\mathcal{W}_{t}\mathcal{X}\right) \right\Vert \right)\\
		&\leq \dfrac{1}{|\mathcal{F}_{r}|}\sum_{i \in \mathcal{F}_{r}} W_{r,\ell}(i)\left\Vert \mathcal{Y}_{t} \text{sym}(B_{t+i} - B_{t}) \mathcal{Y}_{t}^{\top} \right\Vert + 2\alpha_{\max}W_{\max}NR\sup_t \Vert \text{sym}(\mathcal{W}_t)\Vert \\
		&\leq \Delta_{22}^{(1)} + 2W_{\max}\left\lbrace \dfrac{3\log(16N/\epsilon)}{\underline{\delta}}\right\rbrace^{1/2}.
	\end{align*}
	
	Denote $Q_{r,t} = \dfrac{1}{|\mathcal{F}_{r}|}\sum_{i \in \mathcal{F}_{r}} W_{r,\ell}(i)(B_{t+i} - B_{t})$, and following \cite{Pensky2019}, we know for any $k, k' = 1, \cdots, K_C+K_R$,
	\begin{equation*}
	    |Q_{r,t}(k,k')| \leq \dfrac{LW_{\max}}{\ell!}\left(\dfrac{r}{T}\right)^{\beta}.
	\end{equation*}
	
	Then, for $\Delta_{22}^{(1)}$, we have
	\begin{align}
		\Delta_{22}^{(1)} &=  \dfrac{1}{|\mathcal{F}_{r,j}|}\sum_{i \in \mathcal{F}_{r,j}} W_{r,\ell}^{j}(i)\left\Vert \mathcal{Y}_{t} \text{sym}(B_{t+i} - B_{t}) \mathcal{Y}_{t}^{\top} \right\Vert \nonumber \\
		&\leq \max_{1\leq j' \leq 2N}\sum_{j = 1}^{2N}\left|(\mathcal{Y}_{t}Q_{r,t}\mathcal{Y}_{t}^{\top})(j,j')\right| \nonumber \\
		&\leq \max_{k,k'}|Q_{r,t}(k,k')|\max_{1\leq j' \leq 2N}\sum_{k=1}^{K_R+K_C}\sum_{k'=1}^{K_R+K_C}\left[\sum_{j \in \mathcal{G}_{t,k}} \mathcal{Y}_{t}(j,k)\right] \mathcal{Y}_{t}(j',k') \nonumber \\
		&\leq W_{\max}\dfrac{P_{\max}L}{\underline{\delta} \cdot \ell!}\left(\dfrac{r}{T}\right)^{\beta}, \label{B19}
	\end{align}
	where the last inequality comes from the fact that the spectral norm of a matrix is dominated by the $\ell_1$ norm.
	
	Therefore, combine equation (\ref{B13}), (\ref{Bdcp}), (\ref{B14}), (\ref{B18}), and (\ref{B19}), we obtain
	\begin{align*}
		\dfrac{\left|\mathbb{M}^R_t\right|}{N} \leq& \dfrac{c_2(\varepsilon)K_RW_{\max}^2}{m_r^2N\lambda_{K,\max}^2} \left\lbrace (6+c_1)\dfrac{b}{\underline{\delta}^{1/2}} + \dfrac{2K_C}{\underline{\delta}}(\sqrt{2P_{\max}rs} + 2P_{\max}) + \dfrac{P_{\max}L}{\underline{\delta}\cdot \ell!}\left(\dfrac{r}{T}\right)^{\beta}\right\rbrace^2,
	\end{align*}
	where $c_2(\varepsilon) = 2^6(2+\varepsilon)^2$, $b = \{3\log(16NT/\epsilon)\}^{1/2}$ and $\lambda_{K,\max} = \max_t \{\lambda_{K,t}\}$.
	
	\bigskip
	\noindent \textit{(2) Clustering for $Z_{C,t}$.}
	
	As shown in equation (\ref{Bdcp2}), the population regularized graph Laplacian of dynamic DCcBM has following decomposition
	\begin{equation} 
		\mathcal{S}_t = {\varPsi_{\tau,t}^R}^{1/2} Z_{R,t} \Omega_t Z_{C,t}^{\top}{\varPsi_{\tau,t}^C}^{1/2}
	\end{equation}
	Then, let $Y_{R, t} = Z_{R,t}^{\top}\varPsi_{\tau,t}^{R} Z_{R,t}$ and $Y_{C, t} = Z_{C,t}^{\top}\varPsi_{\tau,t}^{C} Z_{C,t}$, and
	\begin{equation}
		H_{\tau,t} = Y_{R,t}^{1/2}\Omega_t Y_{C,t}^{1/2}.
	\end{equation}
	Now, following \cite{Rohe2016}, we can define
	\begin{equation} \label{eq_gamma}
		\gamma_{c} \stackrel{\operatorname{def}}{=} \min_t\{\min_{i \neq j} \left\Vert H_{\tau,t}(*,i) - H_{\tau,t}(*,j)\right\Vert\},
	\end{equation}
	and thus
	\begin{equation} \label{B20}
		\dfrac{\left|\mathbb{M}^C_t\right|}{N} \leq \dfrac{16(2+\varepsilon)^2}{m_c^2N\gamma_{c}^2}\left\Vert U_t - \mathcal{U}_t\mathcal{O}_t\right\Vert_F^2.
	\end{equation}
	where
	\begin{equation} \label{B_Mc}
	    m_c \stackrel{\operatorname{def}}{=} \min_{i,t}\{\min\{\Vert \Gamma_{C,t}(i,*) \Vert, \Vert \varGamma_{C,t}(i,*) \Vert\}\}.
	\end{equation}
	
	Now, combining equation (\ref{B20}) with equations (\ref{B12}), (\ref{Bdcp}), (\ref{B14}), (\ref{B18}), and (\ref{B19}), we obtain
	\begin{align*}
		\sup_t \dfrac{\left|\mathbb{M}^C_t\right|}{N} \leq \dfrac{c_3(\varepsilon)K_RW_{\max}^2}{m_c^2N\gamma_c^2\lambda_{K_R,\max}^2} \left\lbrace (6+c_1)\dfrac{b}{\underline{\delta}^{1/2}} + \dfrac{2K_C}{\underline{\delta}}(\sqrt{2P_{\max}rs} + 2P_{\max}) + \dfrac{P_{\max}L}{\underline{\delta}\cdot \ell!}\left(\dfrac{r}{T}\right)^{\beta}\right\rbrace^2
	\end{align*}
	where $c_3(\varepsilon) = 2^7(2+\varepsilon)^2$, $b = \{3\log(16NT/\epsilon)\}^{1/2}$ and $\lambda_{K_R,\max} = \max_t \{\lambda_{K_R,t}\}$.
\end{proof}

\subsection{Proof of Lemma \ref{LemBS}}
\begin{proof}
	Firstly, by Lemma B.1 in Supplementary material of \cite{Lei2015}, fix $\eta \in (0,1)$, we have
	\begin{equation} \label{B21}
		\left\Vert \widehat{\mathcal{S}}_{t,r} - \mathcal{S}_{t,r}\right\Vert \leq (1-\eta)^{-2}\sup_{x,y \in \mathcal{T}}\left\vert x^{\top}(\widehat{\mathcal{S}}_{t,r} - \mathcal{S}_{t,r})y\right\vert,
	\end{equation}
	where $\mathcal{T} = \{x = (x_1, \cdots, x_N)\in \mathbbm{R}^N, \Vert x \Vert = 1, \sqrt{N}x_i/\eta \in \mathbbm{Z}, \forall i\}$. Then, let $d = rN\Vert \mathcal{S}_t \Vert_{\infty}$ with $r \geq 1$, we can split the pairs $(x_i, y_j)$ into light pairs 
	\begin{equation*}
		\mathscr{L} = \mathscr{L}(x,y) \stackrel{\operatorname{def}}{=} \{(i,j): |x_iy_j| \leq \sqrt{d}/N\},
	\end{equation*}
	and into heavy pairs
	\begin{equation*}
		\bar{\mathscr{L}} = \bar{\mathscr{L}}(x,y) \stackrel{\operatorname{def}}{=} \{(i,j): |x_iy_j| > \sqrt{d}/N\}.
	\end{equation*}
	
	For the light pair, first denote
	\begin{equation*}
		u_{ij} = x_iy_j\mathbf{1}_{\{|x_iy_j|\leq \sqrt{d}/N\}} + x_jy_i\mathbf{1}_{\{|x_jy_i|\leq \sqrt{d}/N\}},
	\end{equation*}
	then we have
	\begin{align*}
		\sum_{(i,j) \in \mathscr{L}(x,y)} &x_iy_j(\widehat{\mathcal{S}}_{t,r}(i,j) - \mathcal{S}_{t,r}(i,j)) \\
		&= \dfrac{1}{|\mathcal{F}_{r}|}\sum_{1 \leq i \leq j \leq N}\sum_{k \in \mathcal{F}_r} u_{ij}W_{r,\ell}(k)\left[S_{t+k}(i,j) - \mathcal{S}_{t+k}(i,j)\right].
	\end{align*}
	
	Denote $w_{ij} = |\mathcal{F}_{r}|^{-1}\sum_{k \in \mathcal{F}_r}W_{r,\ell}(k)\left[S_{t+k}(i,j) - \mathcal{S}_{t+k}(i,j)\right]$ and $\xi_{ij} = w_{ij}u_{ij}$, then we have $|w_{ij}| \leq W_{\max}\Vert \mathcal{S}_t \Vert_{\infty}$, and by \cite{Pensky2017}, it is known that $\xi_{ij}$ is a independent random variable with zero mean and absolute values bounded by $|\xi_{ij}| \leq 2W_{\max}\sqrt{r\Vert \mathcal{S}_t \Vert_{\infty}^3/N}$, using the fact that $|u_{ij}| \leq 2\sqrt{d}/N$.
	
	Now, applying Bernstein inequality, for any $c > 0$, we have
	\begin{align*}
		\Pr&\left(\sup_{x,y \in \mathcal{T}}\left\lvert \sum_{1 \leq i \leq j \leq N}\xi_{ij}\right\rvert \geq \dfrac{c\sqrt{d}}{r}\right) \\
		&\leq 2\exp\left(-\dfrac{\dfrac{c^2d}{2r}}{\sum_{1 \leq i \leq j \leq N}\mathbbm{E}\left(\xi_{ij}^2\right) + \dfrac{2W_{\max}}{3}\sqrt{\dfrac{r\Vert \mathcal{S}_t \Vert_{\infty}^3}{N}} \times \dfrac{c\sqrt{d}}{r}}\right) \\
		&\leq 2\exp\left(-\dfrac{\dfrac{c^2d}{2r}}{\left(\sum_{1 \leq i \leq j \leq N}u_{ij}^2\right)W_{\max}^2\Vert \mathcal{S}_t \Vert_{\infty}^2 + \dfrac{2W_{\max}}{3}\sqrt{\dfrac{r\Vert \mathcal{S}_t \Vert_{\infty}^3}{N}} \times \dfrac{c\sqrt{d}}{r}}\right) \\
		&\leq 2\exp\left(-\dfrac{3c^2N}{12W_{\max}^2\Vert \mathcal{S}_t \Vert_{\infty} + 4cW_{\max}\Vert \mathcal{S}_t \Vert_{\infty}}\right).
	\end{align*}
	
	Then, by a standard volume argument, we have the cardinality of $\mathcal{T} \leq \exp(N\log(7/\eta))$, and this ensures
	\begin{align} \label{BBnddrg}
		\Pr&\left(\sup_{x,y \in \mathcal{T}}\left\lvert \sum_{(i,j) \in \mathscr{L}(x,y)} x_iy_j(\widehat{\mathcal{S}}_{t,r}(i,j) - \mathcal{S}_{t,r}(i,j))\right\rvert \geq \dfrac{c\sqrt{d}}{r}\right) \nonumber\\ 
		&\leq \exp\left\lbrace - \left(\dfrac{3c^2}{12W_{\max}^2\Vert \mathcal{S}_t \Vert_{\infty} + 4cW_{\max}\Vert \mathcal{S}_t \Vert_{\infty}} - 2\log\left(\dfrac{7}{\eta}\right)\right)N\right\rbrace.
	\end{align}
	
	For the heavy pairs, we know
	\begin{align*}
		&\left\lvert \sum_{(i,j) \in \bar{\mathscr{L}}(x,y)} x_i y_j (\widehat{\mathcal{S}}_{t,r}(i,j) - \mathcal{S}_{t,r}(i,j))\right\rvert \\
		&= \left\lvert \dfrac{1}{|\mathcal{F}_r|}\sum_{(i,j) \in \bar{\mathscr{L}}(x,y)} x_i y_j \sum_{k \in \mathcal{F}_r} W_{r,\ell}(k) (S_{t+k}(i,j) - \mathcal{S}_{t+k}(i,j))\right\rvert  \\
		&\leq \left\lvert \dfrac{1}{|\mathcal{F}_r|}\sum_{(i,j) \in \bar{\mathscr{L}}(x,y)} \dfrac{x_i^2 y_j^2}{|x_iy_j|} \sum_{k \in \mathcal{F}_r} W_{r,\ell}(k) (S_{t+k}(i,j) - \mathcal{S}_{t+k}(i,j))\right\rvert \\
		&\leq \dfrac{N}{\sqrt{d}}W_{\max}\Vert \mathcal{S}_{t}\Vert_{\infty} \sum_{(i,j) \in \bar{\mathscr{L}}(x,y)} x_i^2 y_j^2 \\
		&= \dfrac{W_{\max}}{r}\sqrt{d} \sum_{(i,j) \in \bar{\mathscr{L}}(x,y)}x_i^2 y_j^2 \\
		&\leq \dfrac{W_{\max}}{r}\sqrt{d}.
	\end{align*}
	Therefore, choosing $c = W_{\max}$ in equation (\ref{BBnddrg}), we have
	\begin{equation} \label{B22}
		\Pr\left(\sup_{x,y \in \mathcal{T}}\left\lvert \sum_{1 \leq i \leq j \leq N} x_iy_j(\widehat{\mathcal{S}}_{t,r}(i,j) - \mathcal{S}_{t,r}(i,j))\right\rvert \leq \dfrac{W_{\max}\sqrt{d}}{r}\right) \geq 1 - \epsilon 
	\end{equation}
	where $\epsilon = N^{\left(\frac{3}{16 \Vert \mathcal{S}_t \Vert_{\infty}} - 2\log\left(\frac{7}{\eta}\right)\right)}$.
	
	In the end, by equation (\ref{B21}) and (\ref{B22}), we obtain, with probability $1-\epsilon$,
	\begin{equation*}
		\left\Vert \widehat{\mathcal{S}}_{t,r} - \mathcal{S}_{t,r}\right\Vert \leq (1-\eta)^{-2} \sup_{x,y \in \mathcal{T}}\left\vert x^{\top}(\widehat{\mathcal{S}}_{t,r} - \mathcal{S}_{t,r})y\right\vert \leq (1-\eta)^{-2} \dfrac{W_{\max}\sqrt{d}}{r}.
	\end{equation*}
\end{proof}

\newpage
\section{Technical Lemmas} \label{sec:S3}

\begin{lemma} \label{lemA4}
	Under Assumption \ref{Ass04}, for any $\epsilon > 0$, we have
	\begin{equation}
	\sup_t\left\Vert \text{sym}\left(S_{t} - \mathcal{S}_{t}\right) \right\Vert \leq \delta_{\max}\{3\log(16NT/\epsilon)\}^{1/2},
	\end{equation}
	with probability at least $1-\epsilon$
\end{lemma}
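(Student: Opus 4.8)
The plan is to establish the bound for a single period $t$ via a Bernstein-type concentration argument and then take a union bound over $t=1,\dots,T$, so that both the dimension $N$ and the number of periods $T$ enter through the logarithm. The matrix $\text{sym}(S_t-\mathcal{S}_t)=\tfrac12\big((S_t-\mathcal{S}_t)+(S_t-\mathcal{S}_t)^\top\big)$ is symmetric and mean zero, and under Assumption~\ref{Ass04} its entries above the diagonal are (conditionally) independent and uniformly bounded. When $S_t$ is the \emph{regularized} similarity Laplacian rather than a raw adjacency-type matrix, $\mathcal{S}_t$ is the population Laplacian (not $\Expt S_t$), so the first step is the Lei--Rinaldo reduction that passes from the adjacency-level deviation to the Laplacian-level deviation by controlling the degree-normalization factors; this is precisely where the scale $\delta_{\max}$, of order $\underline{\delta}^{-1/2}$, is produced and where the $(4+c_1)$-type combinatorial constant enters.

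For the single-period bound I would invoke the discretization device of \cite{Lei2015} (the same Lemma~B.1 used in the proof of Lemma~\ref{LemBS}) to replace $\Vert\text{sym}(S_t-\mathcal{S}_t)\Vert$ by $(1-\eta)^{-2}\sup_{x,y\in\mathcal{T}}\big|x^\top\text{sym}(S_t-\mathcal{S}_t)y\big|$ over an $\eta$-net $\mathcal{T}$ of the unit sphere with $|\mathcal{T}|\le e^{N\log(7/\eta)}$. Then I would split the index pairs into light pairs (where $|x_iy_j|$ is below a threshold of order $\sqrt{\Vert\mathcal{S}_t\Vert_\infty/N}$) and heavy pairs: on the light pairs, Bernstein's inequality applied to the sum of bounded independent zero-mean terms gives a Gaussian-type tail, while the heavy-pair contribution is bounded deterministically using $\sum x_i^2y_j^2=1$. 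Choosing the deviation level equal to $\delta_{\max}\{3\log(16NT/\epsilon)\}^{1/2}$ makes the Bernstein exponent dominate the net-cardinality exponent $N\log(7/\eta)$ and drives the per-period failure probability below $\epsilon/T$; a union bound over $t=1,\dots,T$ then yields the uniform statement with failure probability at most $\epsilon$.

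The main obstacle is the constant bookkeeping required to land on exactly $\delta_{\max}\{3\log(16NT/\epsilon)\}^{1/2}$. One must check that at the chosen deviation level the Bernstein denominator is dominated by the variance term (so that no extra $\log$-linear term survives), choose $\eta$ so that the net-cardinality loss is absorbed, and — in the Laplacian case — carry the degree-concentration estimates through the spectral-norm manipulations so that all the $\underline{\delta}^{-1/2}$ factors and combinatorial constants collapse into the single symbol $\delta_{\max}$. The numerical choices, in particular the $16$ inside the logarithm and the $3$ under the square root, are then pinned down by requiring that $|\mathcal{T}|$ times the Bernstein tail be at most $\epsilon/T$ for every $\epsilon\in(0,1)$.
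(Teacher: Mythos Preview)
Your proposal diverges from the paper's proof in structure, and it contains a genuine gap.

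\textbf{What the paper actually does.} The paper does \emph{not} run the Lei--Rinaldo $\eta$-net / light--heavy pairs machinery here; that device is reserved for Lemma~\ref{LemBS}. For Lemma~\ref{lemA4} the paper simply breaks $S_t-\mathcal{S}_t$ into three pieces by the triangle inequality:
\[
\underbrace{\alpha_t\bigl(XW_tX^\top-\mathcal{X}\mathcal{W}_t\mathcal{X}^\top\bigr)}_{\text{covariate term}}
\;+\;
\underbrace{\mathcal{D}_{R,t}^{-1/2}(A_t-\mathcal{A}_t)\mathcal{D}_{C,t}^{-1/2}}_{\text{adjacency deviation}}
\;+\;
\underbrace{D_{R,t}^{-1/2}A_tD_{C,t}^{-1/2}-\mathcal{D}_{R,t}^{-1/2}A_t\mathcal{D}_{C,t}^{-1/2}}_{\text{degree-normalization error}}.
\]
The first piece is bounded by $c_1a$ directly from Assumption~\ref{Ass04} (this is where $c_1$ comes from). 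The second piece is handled by the \emph{matrix} Bernstein inequality plus a union bound over $t=1,\dots,T$, which is what produces the $\log(16NT/\epsilon)$ and hence the bound $a$. The third piece is controlled by the Rohe-type algebraic identity for $L_{\tau,t}-\mathcal{D}_{\tau,t}^{-1/2}D_{\tau,t}^{1/2}L_{\tau,t}D_{\tau,t}^{1/2}\mathcal{D}_{\tau,t}^{-1/2}$, giving $a^2+2a$. Summing yields $(4+c_1)a$.

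\textbf{The gap in your plan.} You treat $\text{sym}(S_t-\mathcal{S}_t)$ as a matrix whose above-diagonal entries are conditionally independent and mean zero, and then run the net argument on it directly. But $S_t$ is the similarity matrix $L_{\tau,t}+\alpha_t XW_tX^\top$, not just a normalized adjacency, so $S_t-\mathcal{S}_t$ contains the covariate residual $\alpha_t(XW_tX^\top-\mathcal{X}\mathcal{W}_t\mathcal{X}^\top)$. The entries of this residual are neither independent across $(i,j)$ nor controlled by the scalar Bernstein bookkeeping you describe; the paper bounds it separately via Assumption~\ref{Ass04}, and this is precisely the source of the constant $c_1$ you mention. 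Without isolating this term first, the independence hypothesis your light-pair Bernstein step relies on fails. Likewise, the entries of $D_{R,t}^{-1/2}A_tD_{C,t}^{-1/2}$ are not independent (the sample degrees couple all entries in a row), so the $\eta$-net argument cannot be applied to $S_t-\mathcal{S}_t$ as a whole; you must first replace $D$ by $\mathcal{D}$, which is exactly the paper's third term.

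In short: split off the covariate piece and the degree-replacement piece \emph{before} invoking any concentration argument, and then note that for the remaining piece $\mathcal{D}_{R,t}^{-1/2}(A_t-\mathcal{A}_t)\mathcal{D}_{C,t}^{-1/2}$ the matrix Bernstein inequality already gives the claimed $a=\{3\log(16NT/\epsilon)/\underline{\delta}\}^{1/2}$ with no need for the net / light--heavy machinery.
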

\begin{proof}
	By triangular inequality, we have
	\begin{align}
	\sup_t\left\Vert \text{sym}\left(S_{t} - \mathcal{S}_{t}\right)\right\Vert &\leq \sup_t\left\Vert \text{sym}\left(\alpha_tXW_tX^{\top} - \alpha_t\mathcal{X}\mathcal{W}_t\mathcal{X}^{\top}\right) \right\Vert \label{A04}\\
	&\ \ \ + \sup_t\left\Vert \text{sym}\left(\mathcal{D}_{R,t}^{-1/2}A_t\mathcal{D}_{C,t}^{-1/2} - \mathcal{D}_{R,t}^{-1/2}\mathcal{A}_t\mathcal{D}_{C,t}^{-1/2}\right) \right\Vert \label{A05} \\
	&\ \ \ + \sup_t\left\Vert \text{sym}\left(D_{R,t}^{-1/2}A_tD_{C,t}^{-1/2} - \mathcal{D}_{R,t}^{-1/2}A_t\mathcal{D}_{C,t}^{-1/2}\right) \right\Vert. \label{A06}
	\end{align}
	
	For equation (\ref{A04}), the spectral norm of the symmetrized $\alpha_tXW_tX^{\top} - \alpha_t\mathcal{X}\mathcal{W}_t\mathcal{X}^{\top}$ is bounded by
	\begin{align*}
	\sup_t&\left\Vert \text{sym}\left(\alpha_tXW_tX^{\top} - \alpha_t\mathcal{X}\mathcal{W}_t\mathcal{X}^{\top}\right) \right\Vert \\
	&= \alpha_{\max}\sup_t\left\Vert \text{sym}\left(X(W_t - \mathcal{W}_t)X^{\top}\right)\right\Vert + \alpha_{\max}\sup_t\left\Vert \text{sym}\left(X\mathcal{W}_tX^{\top}  - \mathcal{X}\mathcal{W}_t\mathcal{X}^{\top}\right) \right\Vert \\
	&\leq \alpha_{\max}NR \sup_t\Vert \text{sym}\left(W_t - \mathcal{W}_t\right) \Vert + 2\alpha_{\max}NR \sup_t\Vert \text{sym}\left(\mathcal{W}_t\right) \Vert \\
	&= \mathcal{O}_p(\alpha_{\max}NR).
	\end{align*}
	So, by Assumption \ref{Ass04}, we know that for large enough $N$, with probability at least $1-\epsilon/2$,
	\begin{equation*}
	\sup_t\left\Vert \text{sym}\left(\alpha_tXW_tX^{\top} - \alpha_t\mathcal{X}\mathcal{W}_t\mathcal{X}^{\top}\right) \right\Vert \leq c_1 a,
	\end{equation*}
	for some constant $c_1$. 
	
	For equation (\ref{A05}), by Assumption \ref{Ass04} and matrix Bernstein inequality, we know under assumption $\underline{\delta} > 3\log(16NT/\epsilon)$, $a < 1$. Therefore, we have
	\begin{align*}
	\Pr&\left(\sup_t \left\Vert \text{sym}\left(\mathcal{D}_{R,t}^{-1/2}A_t\mathcal{D}_{C,t}^{-1/2} - \mathcal{D}_{R,t}^{-1/2}\mathcal{A}_t\mathcal{D}_{C,t}^{-1/2}\right) \right\Vert > a\right) \\
	&\leq \sum_{t=1}^{T} \Pr\left(\left\Vert \text{sym}\left(\mathcal{D}_{R,t}^{-1/2}A_t\mathcal{D}_{C,t}^{-1/2} - \mathcal{D}_{R,t}^{-1/2}\mathcal{A}_t\mathcal{D}_{C,t}^{-1/2}\right) \right\Vert > a\right) \\
	&\leq 4NT\exp\left(-\dfrac{3\log(16NT/\epsilon)/\underline{\delta}}{2/\underline{\delta} + 2a/(3\underline{\delta})}\right) \\
	&\leq 4NT\exp\left(-\log(16NT/\epsilon)\right) \\
	&= \epsilon/4.
	\end{align*}
	
	Lastly, for equation (\ref{A06}), by \cite{Rohe2016}, we know with probability at least $1 - \epsilon/2$,
	\begin{align*}
	\sup_t&\left\Vert \text{sym}\left(D_{R,t}^{-1/2}A_tD_{C,t}^{-1/2} - \mathcal{D}_{R,t}^{-1/2}A_t\mathcal{D}_{C,t}^{-1/2}\right) \right\Vert \\
	&= \sup_t\left\Vert \text{sym}\left(L_{\tau,t} - \mathcal{D}_{\tau,t}^{-1/2}D_{\tau,t}^{1/2}L_{\tau,t}D_{\tau,t}^{1/2}\mathcal{D}_{\tau,t}^{-1/2} \right)\right\Vert \\
	&= \sup_t \left\Vert \text{sym}\left((I - \mathcal{D}_{\tau,t}^{-1/2}D_{\tau,t}^{1/2})L_{\tau,t}D_{\tau,t}^{1/2}\mathcal{D}_{\tau,t}^{-1/2} + L_{\tau,t}(I - D_{\tau,t}^{1/2}\mathcal{D}_{\tau,t}^{-1/2}) \right)\right\Vert \\
	&\leq \sup_t \left\Vert \text{sym}\left(\mathcal{D}_{\tau,t}^{-1/2}D_{\tau,t}^{1/2} - I\right) \right\Vert \sup_t\left\Vert \text{sym}\left(\mathcal{D}_{\tau,t}^{-1/2}D_{\tau,t}^{1/2}\right) \right\Vert + \sup_t \left\Vert \text{sym}\left(\mathcal{D}_{\tau,t}^{-1/2}D_{\tau,t}^{1/2} - I\right) \right\Vert \\
	&\leq a^2 + 2a
	\end{align*}
	
	Therefore, combine the results above, we obtain the upper bound for $\left\Vert \text{sym}\left(S_{t} - \mathcal{S}_{t}\right) \right\Vert$, i.e., with probability at least $1 - \epsilon$,
	\begin{equation}
	\sup_t\left\Vert \text{sym}\left(S_{t} - \mathcal{S}_{t}\right) \right\Vert \leq a^2 + 3a + c_1 a \leq (4 + c_1)a = (4+c_1)\left\lbrace \dfrac{3\log(16NT/\epsilon)}{\underline{\delta}}\right\rbrace^{1/2}.
	\end{equation}
\end{proof}

\begin{lemma} \label{lemA5}
	Under the dynamic DCCBM with $K_R$ row blocks and $K_C$ column blocks, define $\varGamma_{R,t} \in \mathbbm{R}^{N \times K_R}$ with columns containing the top $K_R$ left singular vectors of $\mathcal{S}_t$ and $\varGamma_{C,t} \in \mathbbm{R}^{N \times K_C}$ with columns containing the top $K_C$ right singular vectors of $\mathcal{S}_t$. Then, under Assumption \ref{Ass04}, there exist orthogonal matrices $U_{R,t}$ and $U_{C,t}$ depending on $\tau_t$ for each $t = 1, \cdots, T$, such that for any $i,j = 1, \cdots, N$,
	\begin{equation*}
	\varGamma_{p,t} = {\varPsi_{\tau,t}^{p}}^{1/2} Z_{p,t} (Z_{p,t}^{\top} {\varPsi_{\tau,t}^{p}}^{1/2} Z_{p,t})^{-1/2}U_{p,t} 
	\end{equation*}
	and
	\begin{equation*}
	\varGamma_{p,t}^*(i,*) = \varGamma_{p,t}^*(j,*) \Longleftrightarrow Z_{p,t}(i,*) = Z_{p,t}(j,*).
	\end{equation*}
	where $\varGamma_{p,t}^*(i,*) = \varGamma_{p,t}(i,*)/\Vert \varGamma_{p,t}(i,*) \Vert$ with $p \in \{R, C\}$.
\end{lemma}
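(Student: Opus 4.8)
The plan is to read off a singular value decomposition of $\mathcal{S}_t$ directly from its low-rank DCCBM factorization and then exploit that the degree-weighted block-membership factor has a constant direction within each block. Concretely, start from the population co-Laplacian factorization $\mathcal{S}_t={\varPsi_{\tau,t}^R}^{1/2}Z_{R,t}\Omega_t Z_{C,t}^{\top}{\varPsi_{\tau,t}^C}^{1/2}$ and note that $Z_{p,t}\in\mathcal{M}_{N,K_p}$ is a membership matrix while, under Assumption \ref{Ass04}, $\varPsi_{\tau,t}^p$ is diagonal with entries bounded away from $0$; hence the Gram matrix $Y_{p,t}:=Z_{p,t}^{\top}\varPsi_{\tau,t}^p Z_{p,t}$ is diagonal and positive definite. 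Therefore $\Xi_{p,t}:={\varPsi_{\tau,t}^p}^{1/2}Z_{p,t}Y_{p,t}^{-1/2}$ has orthonormal columns, and substituting gives $\mathcal{S}_t=\Xi_{R,t}H_{\tau,t}\Xi_{C,t}^{\top}$ with $H_{\tau,t}=Y_{R,t}^{1/2}\Omega_t Y_{C,t}^{1/2}\in\mathbb{R}^{K_R\times K_C}$.

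Next I would pass from this small SVD to the full one. Taking a full SVD $H_{\tau,t}=U_{R,t}\Sigma_t U_{C,t}^{\top}$ with $U_{R,t},U_{C,t}$ square orthogonal and $\Sigma_t$ rectangular diagonal with nonincreasing nonnegative entries (these factors depend on $\tau_t$ only through $\varPsi_{\tau,t}$), one obtains $\mathcal{S}_t=(\Xi_{R,t}U_{R,t})\Sigma_t(\Xi_{C,t}U_{C,t})^{\top}$, which is an SVD of $\mathcal{S}_t$ because $\Xi_{p,t}U_{p,t}$ still has orthonormal columns. Using that the block-connectivity core $\Omega_t$ has full rank $K=\min\{K_R,K_C\}$ — so that, with $K_R\le K_C$, the nonzero singular values of $\mathcal{S}_t$ are exactly those of $\Sigma_t$ and $\mathrm{col}(\mathcal{S}_t)=\mathrm{col}(\Xi_{R,t})$ — any matrix whose columns are the top $K_R$ left singular vectors equals $\Xi_{R,t}U_{R,t}$ for a suitable orthogonal $U_{R,t}$, and similarly on the right side. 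This is the asserted identity $\varGamma_{p,t}={\varPsi_{\tau,t}^p}^{1/2}Z_{p,t}(Z_{p,t}^{\top}\varPsi_{\tau,t}^p Z_{p,t})^{-1/2}U_{p,t}$.

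For the second statement I would use the block structure. Put $R_{p,t}:=Y_{p,t}^{-1/2}U_{p,t}$, an invertible $K_p\times K_p$ matrix, and let $g_{p,t}(i)$ denote the block of node $i$. Since $Z_{p,t}(i,*)$ is the $g_{p,t}(i)$-th standard basis row vector, the $i$-th row of $\varGamma_{p,t}$ is a strictly positive scalar (the square root of the $i$-th diagonal entry of $\varPsi_{\tau,t}^p$) times the $g_{p,t}(i)$-th row of $R_{p,t}$. In particular $\varGamma_{p,t}(i,*)\neq 0$ (so $m_r,m_c>0$ and $\varGamma_{p,t}^*$ is well defined) and $\varGamma_{p,t}^*(i,*)=R_{p,t}(g_{p,t}(i),*)/\Vert R_{p,t}(g_{p,t}(i),*)\Vert$; hence $\varGamma_{p,t}^*(i,*)=\varGamma_{p,t}^*(j,*)$ iff the two corresponding rows of $R_{p,t}$ share the same unit direction. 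Because the rows of the invertible matrix $R_{p,t}$ are linearly independent, no two of them are scalar multiples of one another, so this holds iff $g_{p,t}(i)=g_{p,t}(j)$, i.e.\ iff $Z_{p,t}(i,*)=Z_{p,t}(j,*)$.

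The genuinely nontrivial point is the middle step: confirming that the constructed orthonormal factors really are the \emph{top} $K_p$ singular vectors. This relies on $\Omega_t$ having rank exactly $\min\{K_R,K_C\}$, which must be drawn from the model assumptions, and on handling the non-uniqueness of singular vectors — coincident singular values determine the vectors only up to an orthogonal rotation within each singular subspace (harmlessly absorbed into $U_{p,t}$), while when $K_R<K_C$ the $K_C-K_R$ right singular vectors attached to the zero singular value are not pinned down by $\mathcal{S}_t$ at all, so $\varGamma_{C,t}$ is to be taken as the particular valid choice above. Since every downstream quantity ($m_r$, $m_c$, $\gamma_c$ and the misclustering bounds of Theorem \ref{thm02}) depends only on within-block constancy of the row directions, this choice is immaterial, and the remaining computations are routine linear algebra.
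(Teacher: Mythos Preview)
Your proposal is correct and follows essentially the same route as the paper: factor $\mathcal{S}_t$ through the DCCBM structure, orthonormalize the block factors via $Y_{p,t}=Z_{p,t}^{\top}\varPsi_{\tau,t}^{p}Z_{p,t}$, take an SVD of the $K_R\times K_C$ core $H_{\tau,t}=Y_{R,t}^{1/2}\Omega_t Y_{C,t}^{1/2}$, and read off $\varGamma_{p,t}$ and the row-normalization identity. Your treatment is slightly more explicit than the paper's---you spell out the rank/non-uniqueness issues and argue the ``if and only if'' via linear independence of the rows of $Y_{p,t}^{-1/2}U_{p,t}$, whereas the paper simply observes (using that $Y_{p,t}$ is diagonal and $U_{p,t}$ orthogonal) that $\varGamma_{p,t}^{*}(i,*)=Z_{p,t}(i,*)U_{p,t}$---but the underlying argument is the same.
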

\begin{proof}
	Define $D_{B,t}^{R}$ and $D_{B,t}^{C}$ are diagonal matrices with entries $D_{B,t}^{R}(i,i) = \sum_{j=1}^{K}B_t(i,j)$ and $D_{B,t}^{C}(i,i) = \sum_{j}^{K}B_{t}(j,i)$, and $\varPsi_{\tau, t}^{p} = \text{Diag}(\psi_{\tau,t}^{p})$ with	$\psi_{\tau,t}^{p}(i) = \psi_{i}^{p}\dfrac{\mathcal{D}_{p,t}(i,i)}{\mathcal{D}_{p,t}(i,i) + \tau_{p,t}}$ for $p \in \{R,C\}$. Then under dynamic DCCBM, we have the decomposition below, 
	\begin{equation*}
	\mathcal{L}_{\tau,t} = \mathcal{D}_{R,t}^{-1/2}\mathcal{A}_t\mathcal{D}_{C,t}^{-1/2} = {\varPsi_{\tau, t}^{R}}^{1/2}Z_{R,t}B_{L,t}Z_{C,t}^{\top}{\varPsi_{\tau, t}^{C}}^{1/2},
	\end{equation*}
	where $B_{L,t} = \left(D_{B,t}^{R}\right)^{-1/2}B_t\left(D_{B,t}^{C}\right)^{-1/2}.$
	
	Define $M_{R,t}$ and $M_{C,t}$ such that $\mathcal{X} = \Expt(X) = {\varPsi_{\tau,t}^R}^{1/2} Z_{R,t}M_{R,t} = {\varPsi_{\tau,t}^C}^{1/2} Z_{C,t}M_{C,t}$, and $\Omega_t = B_{L,t} + \alpha_t M_{R,t}\mathcal{W}_tM_{C,t}^{\top}$, then we know
	\begin{equation} \label{Bdcp2}
	\mathcal{S}_t = {\varPsi_{\tau,t}^R}^{1/2} Z_{R,t} \Omega_t Z_{C,t}^{\top}{\varPsi_{\tau,t}^C}^{1/2}.
	\end{equation}
	
	Now, denote $Y_{R, t} = Z_{R,t}^{\top}\varPsi_{\tau,t}^{R} Z_{R,t}$ and $Y_{C, t} = Z_{C,t}^{\top}\varPsi_{\tau,t}^{C} Z_{C,t}$, and let $H_{\tau,t} = Y_{R,t}^{1/2}\Omega_t Y_{C,t}^{1/2}$. Then, by singular value decomposition, we have $H_{\tau,t} = U_{R,t}\Lambda_tU_{C,t}^{\top}$. Define $\varGamma_{R,t} = {\varPsi_{\tau,t}^{R}}^{1/2} Z_{R,t} Y_{R,t}^{-1/2}U_{R,t}$ and $\varGamma_{C,t} = {\varPsi_{\tau,t}^{C}}^{1/2} Z_{C,t} Y_{C,t}^{-1/2}U_{C,t}$, then, for $p \in \{R, C\}$,
	\begin{align*}
	\varGamma_{p,t}^{\top}\varGamma_{p,t} &= U_{p,t}^{\top} Y_{p,t}^{-1/2} Z_{p,t}^{\top} {\varPsi_{\tau,t}^{p}}^{1/2}{\varPsi_{\tau,t}^{p}}^{1/2} Z_{p,t} Y_{p,t}^{-1/2}U_{p,t} \\
	&= U_{p,t}^{\top} Y_{p,t}^{-1/2} Y_{p,t} Y_{p,t}^{-1/2}U_{p,t} \\
	&= U_{p,t}^{\top}U_{p,t} = I,
	\end{align*}
	and we have
	\begin{align*}
	&\varGamma_{R,t}\Lambda_t\varGamma_{C,t} = {\varPsi_{\tau,t}^{R}}^{1/2} Z_{R,t} Y_{R,t}^{-1/2}U_{R,t}\Lambda_t U_{C,t}^{\top}Y_{C,t}^{-1/2}Z_{C,t}^{\top}{\varPsi_{\tau,t}^{C}}^{1/2}  \\
	&= {\varPsi_{\tau,t}^{R}}^{1/2} Z_{R,t} Y_{R,t}^{-1/2}H_{\tau,t}Y_{C,t}^{-1/2}Z_{C,t}^{\top}{\varPsi_{\tau,t}^{C}}^{1/2} \\
	&= {\varPsi_{\tau,t}^{R}}^{1/2} Z_{R,t} Y_{R,t}^{-1/2}\left(Y_{R,t}^{1/2}\Omega_t Y_{C,t}^{1/2}\right)Y_{C,t}^{-1/2}Z_{C,t}^{\top}{\varPsi_{\tau,t}^{C}}^{1/2} \\
	&= {\varPsi_{\tau,t}^R}^{1/2} Z_{R,t} \Omega_t Z_{C,t}^{\top}{\varPsi_{\tau,t}^C}^{1/2} = \mathcal{S}_t
	\end{align*}
	
	Following \cite{Rohe2016}, it is obvious that
	\begin{equation*}
	\varGamma_{p,t}^{*}(i,*) = \dfrac{\varGamma_{p,t}(i,*)}{\left\Vert \varGamma_{p,t}(i,*) \right\Vert} = Z_{p,t}(i,*)U_{p,t}, \text{ for $p \in \{R, C\}$},
	\end{equation*}
	which completes the proof.
\end{proof}

\nocite{}

\end{document}